\newcommand{\subsetsum}{{\text{\sc Subset Sum}}\xspace}
\newcommand{\linearsat}{{\text{\sc Linear Sat}}\xspace}
\newcommand{\supp}{\mathrm{supp}}
\newcommand{\polylog}{\operatorname{polylog}}
\newcommand{\rank}{\operatorname{rk}}
\newcommand{\no}{\textbf{no}}
\newcommand{\yes}{\textbf{yes}}
\newcommand{\omitted}{$\dagger$}
\newcommand{\suppbound}{S}
\newcommand{\OH}{\mathcal{O}}
\newcommand{\OHT}{\mathcal{O}}
\newcommand{\OHS}{\mathcal{O}^\star}
\newcommand{\modulus}{p}
\newcommand{\prob}{\operatorname{Prob}}
\newcommand{\bbn}{\mathbb{N}}
\newcommand{\bbz}{\mathbb{Z}}
\DeclareMathSymbol{\Gamma}{\mathalpha}{operators}{0}
\DeclareMathSymbol{\Delta}{\mathalpha}{operators}{1}
\DeclareMathSymbol{\Theta}{\mathalpha}{operators}{2}
\DeclareMathSymbol{\Lambda}{\mathalpha}{operators}{3}
\DeclareMathSymbol{\Xi}{\mathalpha}{operators}{4}
\DeclareMathSymbol{\Pi}{\mathalpha}{operators}{5}
\DeclareMathSymbol{\Sigma}{\mathalpha}{operators}{6}
\DeclareMathSymbol{\Upsilon}{\mathalpha}{operators}{7}
\DeclareMathSymbol{\Phi}{\mathalpha}{operators}{8}
\DeclareMathSymbol{\Psi}{\mathalpha}{operators}{9}
\DeclareMathSymbol{\Omega}{\mathalpha}{operators}{10}
\let\doendproof\endproof
\renewcommand\endproof{~\hfill\qed\doendproof}
\newcommand{\fieldF}{\bbf}
\newcommand{\st}[2]{\left\langle #1,#2 \right\rangle} 
\def\shortV{false}
\newcommand{\sv}[2]{\ifthenelse{\equal{\shortV}{true}}{#1}{#2}}
\newtheorem{thm}{Theorem}
\newtheorem{lem}[thm]{Lemma}
\newtheorem{obs}[thm]{Observation}
\newtheorem{cor}[thm]{Corollary}
\newtheorem{defi}[thm]{Definition}
\newtheorem{clm}{Claim}
\renewcommand{\mod}{\operatorname{mod}}
\renewcommand{\vec}[1]{\boldsymbol #1}
\newcommand{\mat}[1]{\boldsymbol #1}
\title{Homomorphic Hashing for\\ Sparse Coefficient Extraction}
\author{ Petteri Kaski\inst{1} \and Mikko Koivisto\inst{2} \and Jesper Nederlof\inst{3}}
\institute{Helsinki Institute for Information Technology, Department of Computer Science, Aalto University, Finland. \href{mailto:petteri.kaski@aalto.fi}{petteri.kaski@aalto.fi}. Supported by the Academy of Finland, Grants 252083 and 256287. \and Helsinki Institute for Information Technology, Department of Computer Science University of Helsinki, Finland. \href{mailto:mkhkoivi@cs.helsinki.fi}{mkhkoivi@cs.helsinki.fi} Supported by the Academy of Finland, Grant 125637. \and Utrecht University, Utrecht, The Netherlands. \href{mailto:j.nederlof@uu.nl}{j.nederlof@uu.nl}. Supported by the Nederlandse Organisatie voor Wetenschappelijk Onderzoek (NWO), project: 'Space and Time Efficient Structural Improvements of Dynamic Programming Algorithms'.}
\begin{document}
\maketitle
\addtocounter{footnote}{-3}
\vspace{-2em}
\begin{abstract}

We study classes of Dynamic Programming (DP) algorithms which, due to their algebraic definitions, are closely related to coefficient extraction methods. DP algorithms can easily be modified to exploit sparseness in the DP table through memorization. Coefficient extraction techniques on the other hand are both space-efficient and parallelisable, but no tools have been available to exploit sparseness. 
We investigate the systematic use of homomorphic hash functions to combine the best of these methods and obtain improved space-efficient algorithms for problems including LINEAR SAT, SET PARTITION, and SUBSET SUM. Our algorithms run in time proportional to the number of nonzero entries of the last segment of the DP table, which presents a strict improvement over sparse DP. The last property also gives an improved algorithm for CNF SAT with sparse projections.
\end{abstract}

\section{Introduction}

{\em Coefficient extraction} can be seen as a general method for designing algorithms, recently in particular in the area of exact algorithms for various NP-hard problems \cite{DBLP:conf/focs/Bjorklund10,DBLP:journals/siamcomp/BjorklundHK09,DBLP:conf/icalp/KoutisW09,DBLP:conf/stoc/LokshtanovN10,DBLP:conf/icalp/Nederlof09,DBLP:journals/ipl/Williams09} (cf. \cite{Fominbook,DBLP:journals/dam/Woeginger08} for an introduction to exact algorithms). 
The approach of the method is the following (see also~\cite{lipton10}): 
\begin{enumerate}
	\item Define a variable (the so-called coefficient) whose value (almost) immediately gives the solution of the problem to be solved, 
	\item Show that the variable can be expressed by a relatively small formula or circuit over a (cleverly chosen) large algebraic object like a ring or field,
	\item Show how to perform operations in the algebraic object relatively efficiently.
\end{enumerate}
In a typical application of the method, the first two steps are derived from an existing Dynamic Programming (DP) algorithm, and the third step deploys a carefully selected algebraic isomorphism, such as the discrete Fourier transform to extract the desired solution/coefficient. Algorithms based on coefficient extraction have two key advantages over DP algorithms; namely, they are space-efficient and they parallelise well (see, for example, \cite{DBLP:conf/stoc/LokshtanovN10}).

Yet, DP has an advantage if the problem instance is {\em sparse}. By this we mean that the number of candidate/partial solutions that need to be considered during DP is small, that is, most entries in the DP table are not used at all. In such a case we can readily adjust the DP algorithm to take this into account through {\em memorization} so that both the running time and space usage become proportional to the number of partial solutions considered. Unfortunately, it is difficult to parallelise or lower the space usage of memorization. Coefficient extraction algorithms relying on interpolation of sparse polynomials \cite{DBLP:journals/siamcomp/Mansour95} improve over memorization by scaling proportionally only to the number of {\em candidate} solutions, but their space usage is still not satisfactory (see also \cite{DBLP:journals/dam/Woeginger08}).

This paper aims at obtaining what is essentially the best of both worlds, by investigating the systematic use of homomorphisms to ``hash down'' circuit-based coefficient extraction algorithms so that the domain of coefficient extraction -- and hence the running time -- matches or improves that of memorization-based DP algorithms, while providing space-efficiency and efficient parallelisation. 
The key idea is to take an existing algebraic circuit for coefficient extraction (over a sparsely populated algebraic domain such as a ring or field), and transform the circuit into a circuit over a smaller domain by a homomorphic hash function, and only then perform the actual coefficient extraction. Because the function is homomorphic, by hashing the values at the input gates and evaluating the circuit, the output evaluates to the hash of the original output value. Because the function is a hash function, the coefficient to be extracted collides with other coefficients only with negligible probability in the smaller domain, and coefficient extraction can be successfully used on the new (hashed-down) circuit. We call this approach {\em homomorphic hashing}. 

\subsection*{Our and previous results}

We study sparse DP/coefficient extraction in three domains: (a) the univariate polynomial ring $\fieldF[x]$ in Section~\ref{sec:sss}, (b) the group algebra $\fieldF[\bbz_2^n]$ in Section~\ref{sec:setpart} and (c) the M\"obius algebra of the subset lattice in Section~\ref{sec:unionhashing}. The subject of sparse DP or coefficient extraction is highly motivated and well-studied \cite{Eppstein:1992:SDP:146637.146650,EppGalGia-JACM-92-II,DBLP:journals/siamcomp/Mansour95,zippel}. In~\cite{DBLP:journals/siamcomp/Mansour95}, a sparse polynomial interpolation algorithm using exponential space was already given for (a) and (b); our algorithms improve these domains to polynomial space. In~\cite{DBLP:conf/icalp/KoutisW09} a polynomial-space algorithm for finding a small multilinear monomial in $\fieldF[\bbz_2^n]$ was given. In~\cite{DBLP:conf/stoc/LokshtanovN10} a general study of settings (a) and (b) was initiated, but sparsity was not addressed. Our main technical contribution occurs with (c) and hashing down to the ``Solomon algebra'' of a poset.

Our methods work for general arithmetic circuits similarly as in \cite{DBLP:conf/icalp/KoutisW09,DBLP:conf/stoc/LokshtanovN10,DBLP:journals/siamcomp/Mansour95}, and most of our algorithms work for counting variants as well. But, for concreteness, we will work here with specific decision problems. Although we mainly give improvements for sparse variants of these problems, we feel the results will be useful to deal with the general case as well (as we will see in Section~\ref{sec:setpart}).

\paragraph{Subset Sum}

The \subsetsum problem is the following: given a vector $\vec{a}=(a_1,\ldots,a_n)$ and integer $t$, determine whether there exists a subset $X \subseteq [n]$ such that $\sum_{e \in X}a_e=t$. It is known to be solvable $\OHS(2^{n/2})$ time and $\OHS(2^{n/4})$ space~\cite{Horowitz74,DBLP:journals/siamcomp/SchroeppelS81}, and solving it faster, or even in $\OHS(1.99^n)$ time and polynomial space are interesting open questions~\cite{DBLP:journals/dam/Woeginger08}. Recently, a polynomial space algorithm using $\OHS(t)$ time was given in~\cite{DBLP:conf/stoc/LokshtanovN10}.

\begin{thm}\label{thm:mainsss}
An instance $(\vec{a},t)$ of the \subsetsum problem can be solved
	\begin{enumerate}[label=(\alph*)]
		\item in $\OHS(S)$ expected time and polynomial space, and
		\item in $\OHS(S^2)$ time and polynomial space,
	\end{enumerate}
where $S$ is the number of distinct sums, i.e. $S=|\{ \sum_{e \in X}a_e : X \subseteq [n]\}|$ .
\end{thm}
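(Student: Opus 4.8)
The plan is to phrase \subsetsum as a sparse coefficient extraction over a polynomial ring and then ``hash down'' the exponent range by a ring homomorphism before extracting. The number of subsets $X\subseteq[n]$ with $\sum_{e\in X}a_e=t$ is the coefficient of $x^{t}$ in $P(x):=\prod_{i=1}^{n}(1+x^{a_i})$, which has exactly $S$ nonzero coefficients (one per distinct subset sum) but possibly exponential degree $A:=\sum_i a_i$; every coefficient of $P$ is a nonnegative integer bounded by $2^{n}$. Fix a prime $\modulus$ (chosen below) and a prime $q>2^{n}$ with $q\equiv1\pmod{\modulus}$ of bit-length $\mathrm{poly}(n,\log\modulus)$: over $\bbf_q$ a nonnegative integer of size at most $2^{n}$ is nonzero iff it is nonzero in $\bbz$, and $\bbf_q$ has a primitive $\modulus$-th root of unity $\omega$. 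We compute over $\bbf_q$.

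The substitution $x\mapsto x$ is a ring homomorphism $h_{\modulus}\colon\bbf_q[x]\to\bbf_q[x]/(x^{\modulus}-1)$ that reduces each exponent modulo $\modulus$; this is the homomorphic hash. Then $h_{\modulus}(P)=\prod_i(1+x^{a_i\bmod\modulus})$ and its coefficient of $x^{t\bmod\modulus}$ equals $\sum_{s\equiv t\,(\mathrm{mod}\,\modulus)}[x^s]P$ over distinct subset sums $s$; this is $\ge[x^t]P$ for \emph{every} $\modulus$, so it is nonzero on every \yes-instance, and equals $[x^t]P$ exactly when $\modulus$ divides no difference $s-t$ with $s\ne t$ a subset sum -- call such $\modulus$ \emph{good}. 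Since each $s-t$ is a nonzero integer of magnitude $\le A$ and hence has $\le\log_2A$ prime factors, a union bound over the $\le S$ relevant $s$ shows a uniform random prime in $[N,2N]$ is bad with probability $O(S\log A/N)$, which is below $\tfrac14$ for $N=\OHS(S)$; and only $\OHS(S^{2})$ primes divide \emph{any} pairwise difference of subset sums, so a good prime of size $\OHS(S^{2})$ certainly exists (and a random prime of that size is moreover good for all the sub-instances arising below).

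The extraction is performed in polynomial space by the discrete Fourier method of Lokshtanov and Nederlof~\cite{DBLP:conf/stoc/LokshtanovN10}: since $\omega^{\modulus}=1$,
\[
[x^{t\bmod\modulus}]\,h_{\modulus}(P)=\frac1\modulus\sum_{k=0}^{\modulus-1}\omega^{-tk}\prod_{i=1}^{n}\bigl(1+\omega^{k a_i}\bigr),
\]
and each of the $\modulus$ summands is a product of $n$ field elements computed with $\mathrm{poly}(n,\log q)$ memory, so the extraction takes $\OHS(\modulus)$ time and polynomial space. With $\modulus=\OHS(S)$ this is $\OHS(S)$ time; with $\modulus=\OHS(S^{2})$ it is $\OHS(S^{2})$.

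Since $S$ is unknown, $\modulus$ cannot be fixed up front, and handling this is where I expect the real difficulty -- and the $\OHS(S)$ versus $\OHS(S^{2})$ gap -- to lie. The device is a geometric search over $\modulus\approx2^{j}$, $j=1,2,\dots$: the $j$-th round costs $\OHS(2^{j})$, so stopping at round $j$ costs $\OHS(2^{j})$ in total by the geometric series. The safe part is a \yes-answer: whenever a round reports \yes, run the standard $n$-call self-reduction and \emph{verify} the candidate $X$ against $\sum_{e\in X}a_e=t$ exactly, emitting \yes\ only once verified; and because a genuine \yes-instance never yields a zero coefficient, it also never reports \no, so a \no\ can be emitted once the coefficient has been zero over a long enough window without risking a false \no\ on a \yes-instance. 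For part~(b) one runs the search with random moduli up to the scale $\OHS(S^{2})$ at which a modulus is (fully) good with constant probability, amplifying each round by $\Theta(j)$ repetitions so that the per-round error decays and the last round dominates, yielding $\OHS(S^{2})$ time. For part~(a) one uses that already at scale $\OHS(S)$ a random modulus is good with probability $\ge\tfrac12$, so the search need only reach scale $\OHS(S)$ and terminates $O(1)$ rounds later in expectation, giving $\OHS(S)$ expected time. The main obstacle, as I see it, is making this stopping rule precise -- in particular ruling out a premature stop on a \no-instance and bounding the expected cost of overshoot rounds -- and it is exactly the strength of modulus one can afford (a good modulus at scale $\OHS(S)$ versus a fully good one at scale $\OHS(S^{2})$) that separates the two running times.
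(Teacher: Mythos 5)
Your part~(a) is essentially the paper's proof: the reduction of exponents modulo a random prime is exactly the paper's hash $c^{\modulus}(\vec a)_t$, the root-of-unity extraction is the Lokshtanov--Nederlof routine the paper invokes as a black box, and the guess-and-double search with ``zero hashed coefficient certifies \no'' and ``verified witness from self-reduction certifies \yes'' is precisely the paper's Theorem~1(a) argument. That part is sound.

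Part~(b) is where there is a genuine gap. The paper's (b) is a \emph{deterministic} algorithm (the appendix explicitly calls it a derandomization), and the idea you are missing is: enumerate \emph{all} of the first $l=2(\beta+\lceil\log n\rceil)\suppbound+1$ primes $p_1,\dots,p_l$, compute $c^{p_i}(\vec a)_t$ for each, and output the \emph{majority} value (via Boyer--Moore voting in polynomial space). Since each of the at most $\suppbound$ differences $|t-u|$ with $u\in\supp(c(\vec a))$ has at most $\beta+\log n$ prime divisors, fewer than half of these $l$ primes are bad, so the majority value equals $c(\vec a)_t$ with certainty; by the prime number theorem $p_l=\OHS(\suppbound)$, and $l$ calls to the $\OHS(p_l)$-time extraction give $\OHS(S^2)$. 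Your substitute --- random primes at scale $\OHS(S^2)$ with per-round amplification --- does not deliver this: it remains a Monte Carlo algorithm, and it does not resolve the stopping problem you yourself flag, because on a \no-instance the algorithm still cannot certify that the current scale has reached $\OHS(S^2)$ without already knowing $S$, so you get either unbounded worst-case time or a residual error probability, neither of which matches the claim. You have also misattributed the source of the quadratic blowup: it is not the distinction between a prime that is good for $t$ (needing scale $\OHS(S)$) and one good for all pairwise differences (scale $\OHS(S^2)$); in the paper both the prime size and the number of primes are $\OHS(S)$, and the $S^2$ arises from multiplying the $\OHS(S)$ cost per prime by the $\OHS(S)$ primes in the majority vote.
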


Here it should be noted that standard sparse DP gives an $\OHS(S)$ time and space algorithm. Informally stated, our algorithms hash the instances by working modulo randomly chosen prime numbers and applying the algoritm of~\cite{DBLP:conf/stoc/LokshtanovN10}. While interesting on their own, these results may be useful in resolving the above open questions when combined with other techniques.

\paragraph{Linear Sat}
The {\sc Linear Sat} problem is defined as follows: given a matrix $\mat{A} \in \bbz_2^{n \times m}$, vectors $\vec{b}\in \bbz_2^m$ and $\vec{\omega} \in \bbn^n$, and an integer $t = n^{\OHT(1)}$, determine whether there is there a vector $\vec{x} \in \bbz_2^n$ such that $\vec{x}\mat{A}=\vec{b}$ and $\vec{\omega}\vec{x}^T \leq t$.

Variants of \textsc{Linear Sat} have been studied, perhaps most notably in \cite{Hastad01}, where approximability was studied. In \cite{Alon10solvingmax-r-sat,DBLP:journals/corr/abs-1104-1135} the Fixed Parameter Tractability was studied for parameterizations of various above guarantees. There, it was also quoted from \cite{Hastad01} that (a variant of) \textsc{Linear Sat} is ``as basic as satisfiability''.



It can be observed that using the approach from~\cite{Horowitz74}, \textsc{Linear Sat} can be solved in $\OHT(2^{n/2}m)$ time and $\OHT(2^{n/2}m)$ space. Also, using standard ``sparse dynamic programming'', it can be solved in $\OHS(2^{\rank(\mat{A})})$ time and $\OHS(2^{\rank(\mat{A})})$ space, where $\rank(\mat{A})$ is the rank of $\mat{A}$. We obtain the following polynomial-space variants:



\begin{thm}\label{thm:mainls}
An instance $(\mat{A},\vec{b},\vec{\omega},t)$ of \linearsat can be solved by algorithms with constant one-sided error probability in 
	\begin{enumerate}[label=(\alph*)]
		\item $\OHS(2^{\rank(\mat{A})})$ time and polynomial space, and
		\item $\OHS(2^{n /2})$ time and polynomial space.
	\end{enumerate}
\end{thm}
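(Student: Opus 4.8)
The plan is to prove the two parts separately: part (a) by homomorphic hashing of the natural coefficient-extraction formulation, and part (b) by a dichotomy on $\rank(\mat A)$ that invokes part (a) as a black box.

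For part (a), I would work in the group algebra $\fieldF[\bbz_2^m][y]$ over a characteristic-zero field (say $\fieldF=\mathbb Q$) and consider
\[
  P(y)\;=\;\prod_{i=1}^{n}\bigl(1+y^{\omega_i}\,\chi_{\mat A_i}\bigr)\;=\;\sum_{\vec x\in\bbz_2^n} y^{\vec\omega\vec x^T}\,\chi_{\vec x\mat A}\,,
\]
where $\chi_{\vec v}$ is the standard basis element of $\fieldF[\bbz_2^m]$ indexed by $\vec v$ and $\mat A_i$ is the $i$-th row of $\mat A$. Since $[\chi_{\vec b}]P(y)=\sum_{\vec x:\vec x\mat A=\vec b}y^{\vec\omega\vec x^T}$ has nonnegative coefficients, the instance is a yes-instance iff $[\chi_{\vec b}]P(y)\bmod y^{t+1}$ is a nonzero polynomial; truncating modulo $y^{t+1}$ is harmless because $t=n^{\OHT(1)}$. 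The obstacle is that $\bbz_2^m$ may have size $2^n$, so plain character-sum coefficient extraction over $\fieldF[\bbz_2^m]$ costs $\OHS(2^m)$. Homomorphic hashing fixes this: every group element occurring in $P(y)$ lies in $W:=\operatorname{rowspace}(\mat A)$, a subspace of dimension $r=\rank(\mat A)$, so I would draw a uniformly random linear (hence homomorphic) map $h\colon\bbz_2^m\to\bbz_2^{r+1}$, replace $\mat A$ by the matrix with rows $h(\mat A_i)$ and $\vec b$ by $h(\vec b)$, and run character-sum coefficient extraction over $\fieldF[\bbz_2^{r+1}]$ on the hashed instance, i.e.\ evaluate $2^{-(r+1)}\sum_{z\in\bbz_2^{r+1}}(-1)^{\langle z,h(\vec b)\rangle}\prod_{i=1}^{n}\bigl(1+y^{\omega_i}(-1)^{\langle z,h(\mat A_i)\rangle}\bigr)\bmod y^{t+1}$. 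Each summand is a product of $n$ polynomials of degree at most $t$, computable in polynomial time and space by keeping everything truncated; with $2^{r+1}$ summands this gives $\OHS(2^r)$ time and polynomial space.

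For correctness of (a): applying $h$ to $\vec x\mat A=\vec b$ shows that every genuine solution survives in the hashed instance with unchanged weight, and since all contributions to $[\chi_{h(\vec b)}]$ are positive there is no cancellation, so there are no false negatives. A false positive requires some $\vec x'$ satisfying the hashed system but not the original, which forces a nonzero element of $W+\operatorname{span}(\vec b)$ to lie in $\ker h$; a standard computation shows a random linear map into $\bbz_2^{r+1}$ is injective on a fixed subspace of dimension at most $r+1$ with probability $\prod_{j=1}^{r+1}(1-2^{-j})>0.28$. Hence the algorithm errs, if at all, only by answering \yes{} on a \no{}-instance, with constant probability bounded away from $1$; independent repetition (output \yes{} iff all runs say \yes{}) drives this down.

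For part (b), I would compute $r=\rank(\mat A)$ in polynomial time and branch. If $r\le n/2$, invoke part (a): its running time $\OHS(2^r)$ is at most $\OHS(2^{n/2})$. If $r>n/2$, solve the linear system $\vec x\mat A=\vec b$ over $\bbz_2$ by Gaussian elimination; if it is infeasible, report \no{}; otherwise its solution set is an affine subspace $\vec x_0+K$, where $K=\{\vec x:\vec x\mat A=\vec 0\}$ has dimension $n-r<n/2$. Computing a basis of $K$ and enumerating all $2^{n-r}$ solutions, we report \yes{} iff one has $\vec\omega$-weight at most $t$; this uses $\OHS(2^{n-r})<\OHS(2^{n/2})$ time and polynomial space. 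In both branches the time is $\OHS(2^{n/2})$ and the space polynomial, and the only error comes from part (a), so the combined algorithm has constant one-sided error. I expect the main obstacle to be the correctness bookkeeping for part (a) — justifying the truncation modulo $y^{t+1}$, using positivity of the surviving coefficients to exclude false negatives, and pinning down that the sole failure mode is the hash-collision event of constant probability — while the time/space accounting and the rank dichotomy in (b) are routine.
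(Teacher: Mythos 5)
Your proposal is correct and follows essentially the same route as the paper: the same group-algebra formulation over $\mathbb{Q}[\bbz_2^m]$ with weights tracked up to $t$, the same random linear hash down to dimension $\approx\rank(\mat A)$ justified by the support lying in the row space, the same Walsh--Hadamard character-sum extraction, and the identical rank dichotomy for part (b). The only cosmetic differences are that you inline the argument for this specific product formula rather than invoking a general circuit-hashing theorem, and you bound the failure probability by requiring injectivity of the hash on all of $W+\operatorname{span}(\vec b)$ rather than by a union bound over collisions with the single target coordinate; both yield the required constant one-sided error.
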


The first algorithm hashes the input down using a random linear map and afterwards determines the answer using the Walsh-Hadamard transform. The second algorithm
uses a Win/Win approach, combining the first algorithm with the fact that an $\mat{A}$ with high rank can be solved with a complementary algorithm.

\paragraph{Satisfiability}
The {\sc CNF-Sat} problem is defined as follows: given a CNF-formula $\phi = C_1 \wedge C_2 \wedge \ldots \wedge C_m$ over $n$ variables, determine whether $\phi$ is satisfiable. There are many interesting open questions related to this problem, a major one being whether it can be solved in time $\OHS((2-\epsilon)^n)$ (the `Strong Exponential Time Hypothesis' \cite{DBLP:journals/jcss/ImpagliazzoP01} states this is not possible), and another being whether the number of satisfying assignments can be counted in time $\OHS((2-\epsilon)^n)$ for some $\epsilon > 0$ (e.g. \cite{Traxler10}).

A \emph{prefix assignment} is an assignment of 0/1 values to the variables $v_1,\ldots,v_i$ for some $1\leq i\leq n$. A \emph{projection} (\emph{prefix projection}) of a CNF-formula is a subset $\pi \subseteq [m]$ such that there exists an assignment (prefix assignment) of the variables such that for every $1 \leq j \leq m$ it satisfies $C_j$ if and only if $j \in \pi$. An algorithm for {\sc CNF-Sat} running in time linear in the number of \emph{prefix} projections can be obtained by standard sparse DP. However, it is sensible to ask about complexity of {\sc CNF-Sat} if the number of projections is small. We give a positive answer:

\begin{thm}\label{thm:maincnf}
Satisfiability of a formula $\phi=C_1\wedge \ldots \wedge C_m$ can be determined in $\OHS(P^2)$ time and $\OHS(P)$ space, where $P=|\{\pi \subseteq [m]: \pi \text{ is a projection of } \phi \}|$.
\end{thm}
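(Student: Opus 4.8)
The plan is to cast the textbook ``sparse dynamic program over prefix assignments'' for \cnfsat as a coefficient-extraction computation in the M\"obius (covering-product) algebra of the subset lattice $2^{[m]}$, and then to collapse that $2^{m}$-dimensional algebra down to one of dimension $\OHS(P^{2})$ by a homomorphic hash, so that the answer is recovered by a poset zeta/M\"obius transform over a domain of size $\OHS(P^{2})$.

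\emph{Algebraic reformulation.} For a variable $v_{i}$ and a value $b\in\{0,1\}$ let $S_{i,b}\subseteq[m]$ be the set of clauses satisfied by the literal $v_{i}\mapsto b$. Work in $\bbz[2^{[m]}]$ with the covering product $e_{S}\cdot e_{T}=e_{S\cup T}$ and set $g=\prod_{i=1}^{n}\bigl(e_{S_{i,0}}+e_{S_{i,1}}\bigr)$; expanding the product gives $g=\sum_{\vec{x}\in\{0,1\}^{n}}e_{\bigcup_{i}S_{i,x_{i}}}$, so the coefficient $[g]_{e_{\pi}}$ counts the assignments whose projection is $\pi$. Thus $\supp(g)$ is exactly the set $\Pi$ of projections, $|\supp(g)|=P$, and, all coefficients being nonnegative integers, $\phi$ is satisfiable iff $[g]_{e_{[m]}}\neq0$. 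Note that the partial products $g^{(k)}=\prod_{i\le k}\bigl(e_{S_{i,0}}+e_{S_{i,1}}\bigr)$ have $\supp(g^{(k)})$ equal to the prefix projections of length $k$, so ordinary sparse DP costs $\OHS\bigl(\sum_{k}|\supp(g^{(k)})|\bigr)$, whereas our target $\OHS(P^{2})$ involves only the last segment $g^{(n)}$ --- a strict gain whenever there are many more prefix projections than projections.

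\emph{Homomorphic hashing and extraction.} The covering product is evaluated gate by gate in the formula for $g$, so any map $e_{S}\mapsto e_{\rho(S)}$ with $\rho\colon(2^{[m]},\cup)\to(L,\vee)$ a join-semilattice homomorphism extends to a ring homomorphism $\hat\rho\colon\bbz[2^{[m]}]\to\bbz[L]$, and then $\hat\rho(g)=\prod_{i}\bigl(e_{\rho(S_{i,0})}+e_{\rho(S_{i,1})}\bigr)$ with $[\hat\rho(g)]_{e_{u}}=\sum_{\pi\in\Pi:\,\rho(\pi)=u}[g]_{e_{\pi}}$. Put $u^{\star}=\rho([m])$. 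If $\phi$ is satisfiable then $[m]\in\Pi$, so $[\hat\rho(g)]_{e_{u^{\star}}}\ge[g]_{e_{[m]}}>0$ regardless of $\rho$; and if $\rho$ is injective on $\Pi\cup\{[m]\}$ and $\phi$ is unsatisfiable then $[\hat\rho(g)]_{e_{u^{\star}}}=0$. Hence the algorithm --- sample $\rho$ into a small $L$, compute $c:=[\hat\rho(g)]_{e_{u^{\star}}}$, and answer ``satisfiable'' iff $c\neq0$ --- has one-sided error (false positives only), boostable by independent repetitions. To evaluate $c$ inside $\bbz[L]$ we use that the covering product of $L$ is diagonalised by the zeta transform of the poset $L$: with $\zeta_{L}h(v):=\sum_{u\le v}h(u)$ one has
\[
c=\sum_{v\in L,\ v\le u^{\star}}\mu_{L}(v,u^{\star})\,\zeta_{L}\hat\rho(g)(v),
\qquad
\zeta_{L}\hat\rho(g)(v)=\prod_{i=1}^{n}\bigl|\{\,b\in\{0,1\}:\rho(S_{i,b})\le v\,\}\bigr|,
\]
where the product form holds because $\rho(S_{i,b})=\bigvee_{j\in S_{i,b}}\rho(\{j\})$, so the test $\rho(S_{i,b})\le v$ reduces to $|S_{i,b}|$ tests $\rho(\{j\})\le v$. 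Each summand is then computable in $\mathrm{poly}(n,m)$ time and space from the generators $\rho(\{1\}),\dots,\rho(\{m\})$, so summing over the at most $|L|=\OHS(P^{2})$ elements $v\le u^{\star}$ costs $\OHS(P^{2})$ time and, provided $L$ is given by a succinct random rule rather than stored explicitly, only polynomial space --- comfortably within the claimed $\OHS(P)$ bound.

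\emph{The main obstacle.} Everything above is routine once the hash exists, so the real work is to construct a distribution over pairs $(L,\rho)$ in which $L$ is a finite join-semilattice with $|L|=\OHS(P^{2})$ that is enumerable and has an efficiently computable M\"obius function, $\rho$ is sampled succinctly from its generators, and \emph{any} prescribed family of $P+1$ subsets of $[m]$ is mapped injectively with high probability --- crucially \emph{without} the algorithm ever knowing $\Pi$, since enumerating $\Pi$ costs exactly what sparse DP costs. The obvious attempt $L=2^{[\ell]}$ with $\ell=O(\log P)$ and independent random generators fails badly: the join is monotone, so any subset of size $\gtrsim\log\ell$ is flattened onto the top element of $2^{[\ell]}$ and distinct ``large'' projections collide. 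The remedy is to hash instead into the Solomon algebra of a suitably randomised poset whose order is \emph{not} set inclusion, shaped so that the $P$ projections are kept away from such near-top collisions. Designing this collision-avoiding semilattice hash with only polynomially many classes is, I expect, the one genuinely new ingredient; once it is in hand, Theorem~\ref{thm:maincnf} follows by combining it with the reformulation and the extraction identity above.
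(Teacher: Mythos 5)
Your algebraic reformulation is exactly the paper's (the circuit $\prod_i(\langle 1,V_i\rangle+\langle 1,\bar V_i\rangle)$ over $\mathbb{N}[(2^{[m]},\cup)]$, with $f_{[m]}$ counting satisfying assignments), and your extraction identity via the zeta/M\"obius transform of a poset is the right mechanism. But the proof has a genuine gap, and you have located it yourself: you never construct the small semilattice/poset, and you declare that doing so "without the algorithm ever knowing $\Pi$" is the missing new ingredient. The paper resolves this in the opposite way: it \emph{does} construct an explicit superset of $\Pi$, deterministically, and no randomized collision-avoiding hash is needed. The premise behind your impasse --- that enumerating $\Pi$ costs what sparse DP costs --- is false. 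Sparse DP is expensive because it iterates over \emph{variable}-prefix assignments, whose projections can vastly outnumber $P$. The paper instead iterates over \emph{clause} prefixes: setting $\phi_i=C_1\wedge\ldots\wedge C_i$, one has $\supp(\phi_0)=\{\emptyset\}$, $\supp(\phi_i)\subseteq\supp(\phi_{i-1})\cup\{X\cup\{i\}:X\in\supp(\phi_{i-1})\}$, and crucially $|\supp(\phi_{i-1})|\le|\supp(\phi_i)|\le P$ (every full assignment witnessing a projection of $\phi_{i-1}$ also witnesses one of $\phi_i$). So each round hands you a candidate family $\mathcal{F}$ of size at most $2P$ that provably contains $\supp(\phi_i)$; this is the "iterative compression" step.

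The second ingredient you are missing is the compression lemma that makes each round cheap: for any $Y\subseteq[m]$, the zeta-transformed value $(\vec f\zeta)_Y=\sum_{X\subseteq Y}f_X$ equals the number of assignments that satisfy \emph{only} clauses in $Y$, and this is computable in polynomial time directly from $\phi$ (each clause outside $Y$ forces its variables to unique values; the count is $0$ on a conflict and $2^a$ otherwise). Evaluating this at the $\le 2P$ points of $\mathcal{F}$ and then running M\"obius inversion restricted to the poset $(\mathcal{F},\subseteq)$ --- which is legitimate precisely because $\supp(\vec f)\subseteq\mathcal{F}$ --- recovers $f_X$ for all $X\in\mathcal{F}$ in $\OHS(|\mathcal{F}|^2)=\OHS(P^2)$ time and $\OHS(P)$ space. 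Iterating over $i=1,\ldots,m$ and finally testing whether $[m]\in\supp(\phi_m)$ gives the theorem. Your instinct that a fixed small join-semilattice with monotone join must flatten large sets is correct (the paper makes the same observation), but the fix is to let the poset \emph{be} the incrementally computed candidate support ordered by inclusion, not to randomize the target structure.
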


We are not aware of previous work that studies instances with few
projections. Although most instances will have many projections, we think our
result opens up a fresh technical perspective that may contribute towards 
solving the above mentioned and related questions.

Underlying Theorem~\ref{thm:maincnf} is our main technical contribution (Theorem~\ref{thm:unionhashsmallsupp}) that enables us to circumvent partial projections and access projections directly, namely homomorphic hashing from the M\"obius algebra of the lattice of subsets of $[m]$ to the Solomon algebra of a poset. A full proof of Theorem~\ref{thm:unionhashsmallsupp} is given in the appendix; we give a specialized, more direct proof of Theorem~\ref{thm:maincnf} in Section~5. 


The proofs of claims marked with a ``$\dagger$'' are relegated to the appendix
in order to not break the flow of the paper.

\section{Notation and Preliminaries}
\label{sec:not}
Lower-case boldface characters refer to vectors, while capital boldface letters refer to matrices, $\mat{I}$ being the identity matrix. The rank of a matrix $\mat{A}$ is denoted by $\rank(\mat{A})$. If $R$ and $S$ are sets, and $S$ is finite, denote by $R^S$ the set of all $|S|$-dimensional vectors with values in $R$, indexed by elements of $S$, that is, if $\vec{v}\in R^S$, then for every $e \in S$ we have $v_e \in R$. We denote by $\bbz$ and $\bbn$ the set of integers and non-negative integers, respectively, and by $\bbz_p$ the field of integers modulo a prime $p$. An arbitrary field is denoted by $\fieldF$. 

For a logical proposition $P$, we use Iverson's bracket notation $[P]$ to denote a $1$ if $P$ is true and a $0$ if $P$ is false. For a function $h:A \rightarrow B$ and $b \in B$, the preimage $h^{-1}(b)$ is defined as the set $\{a \in A: h(a)=b\}$. For an integer $n$ and $A \subseteq \{1,\ldots,n\}$, denote by $\vec\chi(A) \in \bbz_2^{n}$ the characteristic vector of $A$. Sometimes we will state running times of algorithms with the $\OHS$ notation, which suppresses any factor polynomial in the input size. 

For a ring $R$ and a finite set $S$, we write $R^S$ for the ring consisting of the set $R^S$ (the set of all vectors over $R$ with coordinates indexed by elements of $S$) equipped with coordinate-wise addition $+$ and multiplication $\circ$ (the \emph{Hadamard product}), that is, for $\vec{a},\vec{b} \in R^S$ and $\vec{a} + \vec{b} = \vec{c}$, $\vec{a} \circ \vec{b}=\vec{d}$ we set $a_z+b_z=c_z$ and $a_zb_z=d_z$ for each $z \in S$, where $+$ and the juxtaposition denote addition and multiplication in $R$, respectively. The inner-product $\vec{a},\vec{b} \in R^S$ is denoted by $\vec{a}^T\cdot\vec{b}$. For $\vec{v} \in R^S$ denote by $\supp(\vec{v}) \subseteq S$ the \emph{support} of $\vec{v}$, that is, $\supp(\vec{v})=\{z \in S:v_z \neq 0\}$, where $0$ is the additive identity element of $R$. A vector $\vec{v}$ is called a \emph{singleton} if $|\supp(\vec{v})|=1$. We denote by $\st{w}{z}$ the singleton with value $w$ on index $z$, that is, $\st{w}{z}_y=w[y=z]$ for all $y\in S$.

If $R$ is a ring and $(S,\cdot)$ is a finite semigroup, denote by $R[S]$ the ring consisting of the set $R^S$ equipped with coordinate-wise addition and multiplication defined by the convolution operator $*$, where for $\vec{a},\vec{b} \in R^S$, $\vec{a}*\vec{b}=\vec{c}$ we set $c_z = \sum_{x\cdot y=z}a_xb_y$ for every $z \in S$.

If $R,S$ are rings with operations $(+,*)$ and $(\oplus,\circledast)$ respectively, a \emph{homomorphism from $R$ to $S$} is a function $h:R \rightarrow S$ such that $h(e_1 + e_2)=h(e_1)\oplus h(e_2)$ and $h(e_1 * e_2)=h(e_1) \circledast h(e_2)$ for every $e_1,e_2 \in R$.

\begin{obs}
\label{obs:hom}
Let $R$ be a ring, and let $(S,\cdot)$ and $(T,\odot)$ be finite semigroups. Suppose $\varphi: S \rightarrow T$ such that for every $x,y \in S$ we have $\varphi(x\cdot y)=\varphi(x)\odot \varphi(y)$. Then the function $h: R[S] \rightarrow R[T]$ defined by $h:\vec{a} \mapsto \vec{b}$ where $b_z = \sum_{y\in \varphi^{-1}(z)}a_{y}$ for all $z\in T$ is a homomorphism.
\end{obs}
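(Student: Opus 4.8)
The plan is to verify directly that the map $h$ respects both ring operations of $R[S]$, namely coordinate-wise addition and the convolution product $*$. Since $h$ is explicitly defined by $b_z = \sum_{y \in \varphi^{-1}(z)} a_y$, i.e.\ it sums the coordinates of $\vec{a}$ over each fiber of $\varphi$, the additivity is essentially immediate: for $\vec{a},\vec{a}' \in R[S]$ the $z$-coordinate of $h(\vec{a}+\vec{a}')$ is $\sum_{y \in \varphi^{-1}(z)}(a_y + a'_y)$, which by associativity and commutativity of addition in $R$ equals $\sum_{y\in\varphi^{-1}(z)}a_y + \sum_{y\in\varphi^{-1}(z)}a'_y$, the $z$-coordinate of $h(\vec{a}) + h(\vec{a}')$. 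So the first homomorphism condition costs only a line.

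The substantive step is the multiplicative condition $h(\vec{a}*\vec{b}) = h(\vec{a}) \circledast h(\vec{b})$, where $\circledast$ is the convolution in $R[T]$ with respect to $\odot$. I would fix $z \in T$ and expand both sides as sums over $S \times S$ (or $T \times T$), then exhibit a bijection between the index sets that also matches the summands. On the left, the $z$-coordinate of $h(\vec{a}*\vec{b})$ is $\sum_{w \in \varphi^{-1}(z)} (\vec{a}*\vec{b})_w = \sum_{w \in \varphi^{-1}(z)} \sum_{x\cdot y = w} a_x b_y$, which collapses to $\sum_{(x,y):\,\varphi(x\cdot y)=z} a_x b_y$. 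On the right, the $z$-coordinate of $h(\vec{a}) \circledast h(\vec{b})$ is $\sum_{u \odot v = z} h(\vec{a})_u \, h(\vec{b})_v = \sum_{u\odot v = z} \bigl(\sum_{x\in\varphi^{-1}(u)} a_x\bigr)\bigl(\sum_{y\in\varphi^{-1}(v)} b_y\bigr)$; distributing the product of sums over $R$ turns this into $\sum_{u\odot v=z}\sum_{x\in\varphi^{-1}(u),\,y\in\varphi^{-1}(v)} a_x b_y$, i.e.\ $\sum_{(x,y):\,\varphi(x)\odot\varphi(y)=z} a_x b_y$.

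The two expressions agree term-by-term because the hypothesis $\varphi(x\cdot y) = \varphi(x)\odot\varphi(y)$ guarantees that the index set $\{(x,y) : \varphi(x\cdot y)=z\}$ and the index set $\{(x,y) : \varphi(x)\odot\varphi(y)=z\}$ are literally the same subset of $S\times S$. Hence the only mild care needed is bookkeeping: justifying the interchange/flattening of the double sums (finite sums over semigroups, so no convergence issue — just associativity and commutativity of $+$ in $R$) and the use of both distributive laws of $R$ to pass from a product of two sums to a single sum over the product index set. I do not expect a genuine obstacle here; the "hard part," such as it is, is purely notational — keeping the fibers $\varphi^{-1}(u)$, $\varphi^{-1}(v)$, $\varphi^{-1}(z)$ straight and making sure the flattening of $\sum_{w\in\varphi^{-1}(z)}\sum_{x\cdot y=w}$ into a single sum over pairs is spelled out. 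One may also remark that $R$ need only be a (not necessarily commutative, not necessarily unital) ring, since we never reorder the factors $a_x$ and $b_y$.
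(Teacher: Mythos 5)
Your proof is correct: the direct fiber-wise verification, with the key point that the hypothesis $\varphi(x\cdot y)=\varphi(x)\odot\varphi(y)$ makes the two index sets $\{(x,y):\varphi(x\cdot y)=z\}$ and $\{(x,y):\varphi(x)\odot\varphi(y)=z\}$ coincide, is exactly the routine computation the paper is relying on when it states this as an observation without proof.
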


A \emph{circuit}
$C$ over a ring $R$ is a labeled directed acyclic graph $D=(V,A)$ where the elements of $V$ are called \emph{gates} and $D$ has a unique sink called the \emph{output gate of $C$}. All sources of $C$ are called \emph{input gates} and are labeled with elements from $R$. All gates with non-zero in-degree are labeled as either an \emph{addition} or a \emph{multiplication gate}. (If multiplication in $R$ is not commutative, the in-arcs of each multiplication gate are also ordered.) Every gate $g$ of $C$ can be associated with a ring element in the following natural way: If $g$ is an input gate, we associate the label of $g$ with $g$. If $g$ is an addition gate we associate the ring element $e_1+\ldots+e_d$ with $g$, and if $g$ is a multiplication gate we associate the ring element $e_1*\ldots* e_d$ with $g$ where $e_1,\ldots,e_d$ are the ring elements associated with the $d$ in-neighbors of $g$, and $+$ and $*$ are the operations of the ring $R$. 

Suppose the ground set of $R$ is of the type $A^B$ where $A,B$ are sets. Then $C$ is said to have \emph{singleton inputs} if the label of every input-gate of $C$ is a singleton vector of $R$.

\begin{defi}
Let $R$ and $S$ be rings, let $h:R\rightarrow S$ be a homomorphism, and suppose that $C$ is a circuit over $R$. Then, the circuit $h(C)$ over $S$ \emph{obtained by applying $h$ to $C$} is defined as the circuit obtained from $C$ by replacing for every input gate the label $l$ by $h(l)$.
\end{defi}

Note that the following is immediate from the definition of a homomorphism:

\begin{obs}
\label{obs:homcir}
Suppose $C$ is a circuit over a ring $R$ with output $v \in R$. Then the circuit over $S$ obtained by applying a homomorphism $h:R\rightarrow S$ to $C$ outputs $h(v)\in S$.
\end{obs}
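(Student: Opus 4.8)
The plan is to prove the statement by a structural induction on the gates of $C$, taken in a topological order. For each gate $g$ let $e_g\in R$ be the ring element associated with $g$ in $C$, and let $e'_g\in S$ be the ring element associated with the corresponding gate of $h(C)$. I will establish the invariant $e'_g=h(e_g)$ for every gate $g$, and then specialise it to the (unique) output gate, whose associated element in $C$ is $v$, to conclude that $h(C)$ outputs $h(v)$.

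For the base case, let $g$ be an input gate with label $l\in R$. By the definition of the circuit $h(C)$ obtained by applying $h$ to $C$, the corresponding gate of $h(C)$ carries the label $h(l)$, so $e'_g=h(l)=h(e_g)$. For the inductive step, let $g$ be an addition (resp.\ multiplication) gate with in-neighbours $g_1,\dots,g_d$, listed in this order when $R$ is non-commutative and $g$ is a multiplication gate, so that $e_g=e_{g_1}+\cdots+e_{g_d}$ (resp.\ $e_g=e_{g_1}*\cdots*e_{g_d}$). Each $g_i$ precedes $g$ in the topological order, so the induction hypothesis gives $e'_{g_i}=h(e_{g_i})$, and therefore $e'_g=h(e_{g_1})\oplus\cdots\oplus h(e_{g_d})$ (resp.\ with $\circledast$ in place of $\oplus$). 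It then suffices to note that $h(e_{g_1})\oplus\cdots\oplus h(e_{g_d})=h(e_{g_1}+\cdots+e_{g_d})$, which follows from the binary identity $h(e_1+e_2)=h(e_1)\oplus h(e_2)$ in the definition of a homomorphism by a trivial sub-induction on $d$, and symmetrically for $*$ versus $\circledast$.

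There is essentially no obstacle in this argument; the only points requiring a little care are the passage from the binary homomorphism identities to the $d$-ary sums and products that occur at the gates, and the fact that the order of the in-arcs at a multiplication gate is preserved when passing to $h(C)$, so that non-commutativity of the multiplication causes no difficulty. Both are handled as indicated above, and applying the invariant to the output gate completes the proof.
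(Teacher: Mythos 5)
Your proof is correct and is exactly the argument the paper has in mind: the paper states this observation without proof, calling it ``immediate from the definition of a homomorphism,'' and your topological-order induction over the gates (with the sub-induction from binary to $d$-ary sums and products, and the remark about preserving the order of in-arcs at multiplication gates) is the standard way of filling in those details. Nothing is missing.
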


\section{Homomorphic Hashing for Subset Sum}
\label{sec:sss}


In this section we will study the \subsetsum problem and prove Theorem~\ref{thm:mainsss}. As mentioned in the introduction, it should be noted that this merely serves as an illustration of how similar problems can be tackled as well since the same method applies to the more general sparse polynomial interpolation problem. However, to avoid a repeat of the analysis of \cite{DBLP:conf/stoc/LokshtanovN10}, we have chosen to restrict ourselves to the \subsetsum problem. Our central contribution over \cite{DBLP:conf/stoc/LokshtanovN10} is that we take advantage of sparsity. 


Given $\vec{a}\in \bbn^n$ and an integer $\modulus \in \bbn$, let $c^\modulus:\bbn^n \rightarrow \bbn^n$ be defined by 
\[
	c^\modulus(\vec{a})_j=\bigg|\bigg\{ X \subseteq [n] : \sum_{e \in X} a_e \equiv j\ (\mod \modulus) \bigg\}\bigg|\text{ \ for every } j\in\bbz_p.
\]
We also use the shorthand $\vec{c(\vec{a})}=\vec{c^{\infty}(\vec{a})}$. We use a corollary from~\cite{DBLP:conf/stoc/LokshtanovN10}:
\begin{cor}[\omitted,\cite{DBLP:conf/stoc/LokshtanovN10}]
\label{cor:ln}
Given an instance $(\vec{a},t)$ of \subsetsum and an integer $\modulus$, $c^\modulus(\vec{a})_t$ can be computed in $\OHS(\modulus)$ time and $\OHS(1)$ space.
\end{cor}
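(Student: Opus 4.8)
The plan is to reduce the computation of $c^{\modulus}(\vec a)_t$ to extracting a single coefficient of the polynomial $f(x)=\prod_{e=1}^{n}(1+x^{a_e})\in\bbz[x]$, and to perform that extraction space-efficiently by a discrete Fourier transform over a suitable finite field, exactly as in~\cite{DBLP:conf/stoc/LokshtanovN10}. The key observation is that $[x^{j}]f$ counts the subsets $X\subseteq[n]$ with $\sum_{e\in X}a_e=j$, so $c^{\modulus}(\vec a)_t=\sum_{j\equiv t\ (\bmod\ \modulus)}[x^{j}]f$ is precisely the coefficient of $x^{t}$ in the reduced polynomial $f(x)\bmod(x^{\modulus}-1)$. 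Hence we must compute one coefficient of a polynomial of degree below $\modulus$ that is given succinctly as a product of $n$ binomials; the only gain over~\cite{DBLP:conf/stoc/LokshtanovN10} is that reduction modulo $x^{\modulus}-1$ lowers the relevant degree from $\sum_{e}a_e$ (or from $t$) down to $\modulus$.

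For the extraction, fix primes $q_{1},\dots,q_{r}$, with $r$ polynomial in the input, such that $\modulus\mid q_{i}-1$ for every $i$ and $\prod_{i}q_{i}>2^{n}$; each $q_i$ has bit-length polynomial in the input, and they can be located in $\OHS(\modulus)$ total time by testing members of the arithmetic progression $1+k\modulus$ for primality (appealing to standard estimates on primes in arithmetic progressions). Work over one $q=q_{i}$ at a time. Since $\gcd(\modulus,q)=1$ the element $\modulus$ is invertible in $\bbz_{q}$, and since $\modulus\mid q-1$ the group $\bbz_{q}^{\ast}$ contains an element $\omega$ of order exactly $\modulus$, found in polynomial time; such an $\omega$ satisfies $\sum_{k=0}^{\modulus-1}\omega^{k\ell}=\modulus\cdot[\modulus\mid\ell]$ in $\bbz_{q}$, so Fourier inversion yields
\[
c^{\modulus}(\vec a)_t\ \equiv\ \modulus^{-1}\sum_{k=0}^{\modulus-1}\omega^{-kt}\prod_{e=1}^{n}\bigl(1+\omega^{k a_e}\bigr)\pmod{q}.
\]
Evaluating these residues for all $i$ and recombining by the Chinese Remainder Theorem recovers the integer $0\le c^{\modulus}(\vec a)_t\le 2^{n}$ exactly.

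It remains to evaluate the displayed sum over $\bbz_{q}$ within the claimed budget. Looping $k$ from $0$ to $\modulus-1$ and maintaining $\omega^{k}$ incrementally, for each $k$ we form every factor $1+\omega^{k a_e}$ by fast exponentiation (first reducing $k a_e$ modulo $\modulus$, then $\OHT(\log a_e)$ multiplications in $\bbz_{q}$), multiply the $n$ factors, scale by $\omega^{-kt}$, and add the result to a running total; a final multiplication by $\modulus^{-1}$ finishes. This performs $\OHT\bigl(\modulus\cdot n\cdot\log(t+\max_{e}a_e)\bigr)$ operations in $\bbz_{q}$, each of cost polynomial in the input, so over all $q_{i}$ the running time is $\OHS(\modulus)$; at any moment we store only a constant number of $\bbz_{q}$-elements, the $\OHT(n)$ bits of accumulated CRT residues, and $\OHT(\log\modulus)$-bit loop counters, so the space is polynomial.

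The only genuinely delicate point is the choice of the ambient ring: it must simultaneously contain a primitive $\modulus$-th root of unity, have $\modulus$ invertible, and---collectively over the CRT primes---be large enough to pin down the integer value, all while its elements stay of polynomial bit-size so that arithmetic remains polynomial-time. The residue fields $\bbz_{q}$ with $q\equiv1\pmod{\modulus}$ satisfy all of this; alternatively one can evaluate $f$ at the complex $\modulus$-th roots of unity $e^{2\pi\mathrm{i}k/\modulus}$ to $\OHT(n+\log\modulus)$ bits of precision, avoiding the number theory at the price of a routine numerical error analysis. Everything else is bookkeeping, and correctness of the Fourier-inversion identity over $\bbz_{q}$ is immediate from $\omega$ having order exactly $\modulus$ together with $\modulus$ being a unit.
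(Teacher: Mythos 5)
Your argument takes a genuinely different route from the paper's. The paper does not re-derive any transform machinery: it first reduces all of $a_1,\dots,a_n,t$ modulo $\modulus$, observes that $c^\modulus(\vec a)_t=\sum_{j=0}^{n}c(\vec a)_{t+j\modulus}$ (since every attainable sum is at most $n\modulus$), and then makes $n+1$ black-box calls to the cited theorem of [LN10], each with a target of size $O(n\modulus)$ and hence cost $\OHS(\modulus)$. That is a three-line reduction which inherits correctness, time, and space wholesale. You instead reprove the underlying coefficient-extraction algorithm inside $\bbz[x]/(x^{\modulus}-1)$ via an order-$\modulus$ root of unity; this is self-contained and conceptually fine (it is essentially the [LN10] argument transplanted to the quotient ring), but it forces you to re-handle exactly the delicate issues the black-box reduction avoids. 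In particular, your primary instantiation has a real gap: you cannot, with current unconditional number theory, locate primes $q\equiv 1\pmod{\modulus}$ in $\OHS(\modulus)$ time by scanning the progression $1+k\modulus$, since Linnik-type bounds only guarantee the least such prime is $\modulus^{O(1)}$ with exponent strictly greater than $1$, and the density statements needed for a fast randomized search are likewise not available unconditionally for moduli this large. So the complex-roots-of-unity variant you mention in passing, with its precision analysis carried out to $O(n+\log\modulus)$ bits, is the version that must actually bear the weight of the proof; as written it is deferred as ``routine''. If you want to keep a self-contained argument, either commit to and execute that error analysis, or simply adopt the paper's reduction, which sidesteps the entire issue.
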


We will also need the following two results on primes:

\begin{thm}[\cite{rosser1941explicit}]
\label{thm:pnt}
If $55 < u$, then the number of prime numbers at most $u$ is at least $\frac{u}{\ln u + 2}$.
\end{thm}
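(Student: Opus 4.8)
This is a classical explicit estimate for the prime-counting function $\pi(u)=|\{p\le u: p\text{ prime}\}|$ due to Rosser, so the pragmatic route is to invoke it directly from \cite{rosser1941explicit}. For completeness I sketch how such an explicit lower bound is established, should a self-contained argument be wanted. The plan is to pass through the Chebyshev theta function $\theta(u)=\sum_{p\le u}\ln p$ and the prime number theorem with an explicit error term.

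First I would record the partial-summation (Abel) identity $\pi(u)=\theta(u)/\ln u+\int_2^u \theta(t)/(t\ln^2 t)\,dt$, which shows that any lower bound on $\theta$ transfers to a lower bound on $\pi$, the integral term only helping. The crucial input is therefore an inequality of the shape $\theta(u)\ge (1-\varepsilon(u))\,u$ with an explicit $\varepsilon(u)\to 0$. A purely elementary Chebyshev-style argument, via the prime factorisation of central binomial coefficients, gives only a bound of the form $\theta(u)\gtrsim (\ln 2)\,u$ (or, with the classical sharpened weights, a constant still bounded away from $1$), and one checks this is too weak to beat $u/(\ln u+2)$ for large $u$; so one genuinely needs prime-number-theorem strength here.

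Next I would obtain the needed explicit bound from the zero-free region of the Riemann zeta function: starting from the Perron/explicit formula for $\psi(u)=\sum_{p^k\le u}\ln p$, shifting the contour into the region $\Re s>1-c/\ln(|\Im s|+2)$ where $\zeta(s)\ne 0$, and estimating the shifted integral, one gets $|\psi(u)-u|\le C\,u\exp(-c'\sqrt{\ln u})$ with explicit constants, and hence the same bound for $|\theta(u)-u|$ up to the negligible prime-power contribution $O(\sqrt{u}\,\ln u)$. Substituting this into the Abel identity yields $\pi(u)>u/(\ln u+2)$ for every $u$ beyond an explicit threshold $u_1$.

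Finally I would dispose of the remaining range $55<u\le u_1$ by a finite verification: $\pi$ is a nondecreasing step function while $u\mapsto u/(\ln u+2)$ is continuous and increasing, so on each interval between consecutive primes the inequality is tightest at the right endpoint, and it suffices to check $\pi(q)-1\ge q/(\ln q+2)$ at the finitely many primes $q\le u_1$ (together with the value at $u=55$). The main obstacle is the third step above: producing an \emph{explicit} error term in the prime number theorem is the analytically heavy part, and it is precisely this difficulty that makes citing \cite{rosser1941explicit} the sensible choice in our setting.
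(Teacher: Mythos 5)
The paper gives no proof of this statement at all: it is imported verbatim as an explicit estimate from Rosser's 1941 paper, exactly as you recommend in your first sentence. Your supplementary sketch of how such a bound is actually established (explicit error term for $\theta$ via the zero-free region, plus a finite check on the small range) is accurate and correctly identifies why an elementary Chebyshev argument would not suffice, but it is not needed here — citing the source is the paper's entire "proof."
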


\begin{lem}[\omitted,Folklore]
\label{thm:sam}
There exists an algorithm $\mathtt{pickprime}(u)$ running in $\polylog(u)$ time that, given integer $u \geq 2$ as input, outputs either a prime chosen uniformly at random from the set of primes at most $u$ or $\mathtt{notfound}$. Moreover, the probability that the output is $\mathtt{notfound}$ is at most $\frac{1}{e}$. 
\end{lem}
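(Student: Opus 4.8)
The plan is to use the textbook ``sample, test, and retry'' approach. First I would dispose of the trivial range $u \le 55$ by brute force: enumerate all primes at most $u$ (there is at least one since $u \ge 2$) and output one of them chosen uniformly at random, so that $\mathtt{notfound}$ is never returned. For $u > 55$, set $k := \lceil \ln u + 2 \rceil$ and iterate for $i = 1,\dots,k$: draw $r_i$ uniformly at random from $\{1,\dots,u\}$ and decide whether $r_i$ is prime using a deterministic polynomial-time primality test (e.g.\ AKS); if $r_i$ is prime, output $r_i$ and halt. If all $k$ draws turn out composite, output $\mathtt{notfound}$. Each iteration uses $O(\log u)$ random bits and a primality test costing $\polylog(u)$ time, and there are $O(\log u)$ iterations, so the total running time is $\polylog(u)$.

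Next I would check that, conditioned on the output being a prime, it is uniformly distributed over the primes at most $u$. Fix a prime $p \le u$. The event ``the algorithm outputs $p$'' is the disjoint union over $i \in \{1,\dots,k\}$ of the events ``$r_1,\dots,r_{i-1}$ are all composite and $r_i = p$'', so
\[
  \prob[\text{output} = p] \;=\; \sum_{i=1}^{k} \prob[r_1,\dots,r_{i-1}\text{ composite}] \cdot \tfrac{1}{u}.
\]
Since neither $\prob[r_1,\dots,r_{i-1}\text{ composite}]$ nor $\tfrac1u$ depends on $p$, this probability is the same for every prime $p \le u$; hence the conditional output distribution is uniform. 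In the range $u \le 55$ uniformity is immediate from the construction.

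Finally I would bound the probability of outputting $\mathtt{notfound}$. For $u > 55$, Theorem~\ref{thm:pnt} gives that the number of primes at most $u$ is at least $u/(\ln u + 2)$, so each draw $r_i$ is prime with probability at least $1/(\ln u + 2)$, independently of the others; therefore
\[
  \prob[\mathtt{notfound}] \;\le\; \Bigl(1 - \tfrac{1}{\ln u + 2}\Bigr)^{k} \;\le\; \Bigl(1 - \tfrac{1}{\ln u + 2}\Bigr)^{\ln u + 2} \;\le\; \tfrac1e ,
\]
using $k \ge \ln u + 2 \ge 1$ and the inequality $(1 - 1/m)^m \le 1/e$, valid for all real $m \ge 1$. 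There is no genuine obstacle here; the only points that need care are treating small $u$ separately (Theorem~\ref{thm:pnt} is stated only for $u > 55$) and insisting on a \emph{deterministic} primality test, so that the output, whenever it is not $\mathtt{notfound}$, is guaranteed to be an actual prime and the running time stays $\polylog(u)$.
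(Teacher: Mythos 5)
Your proof is correct and follows essentially the same route as the paper: rejection sampling with a deterministic (AKS) primality test, $\lceil\ln u + 2\rceil$ trials, and the bound $(1-\tfrac{1}{\ln u+2})^{\ln u+2}\le \tfrac1e$ via Theorem~\ref{thm:pnt}. You are in fact slightly more careful than the paper, which omits both the separate treatment of $u\le 55$ (needed since Theorem~\ref{thm:pnt} only applies for $u>55$) and the explicit verification that the output is uniform over the primes.
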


We will run a data reduction procedure similar to the one of Claim 2.7 in \cite{harnik:1667}, before applying the algorithm of Corollary \ref{cor:ln}. The idea of the data reduction procedure is to work modulo a prime of size roughly $|\supp(c(\vec{a}))|$ or larger:

\begin{lem}
\label{lem:hashsss}
Let $\suppbound \geq |\supp(c(\vec{a}))|$ and let $\beta$ be an upper bound on the number of bits needed to represent the integers, i.e. $2^\beta > \max\{t,\max_{i}a_i\}$. Then, $\prob_\modulus[c(\vec{a})_t=c^\modulus(\vec{a})_t]\geq \frac{1}{2}$, where the probability is taken uniformly over all primes $\modulus \leq \suppbound\beta n(\log \beta)(\log n)$.
\end{lem}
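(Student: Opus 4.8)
The plan is to bound the probability of a ``bad'' prime, i.e.\ a prime $\modulus$ for which $c(\vec{a})_t \neq c^\modulus(\vec{a})_t$. Observe first that
\[
c^\modulus(\vec{a})_t = \sum_{\substack{j \in \bbn\\ j \equiv t\,(\mod \modulus)}} c(\vec{a})_j,
\]
since a set $X$ with $\sum_{e\in X}a_e = j$ contributes to coordinate $t$ modulo $\modulus$ exactly when $j \equiv t \pmod{\modulus}$. Hence $c^\modulus(\vec{a})_t \geq c(\vec{a})_t$ always, and the two are unequal precisely when there is some $j \in \supp(c(\vec{a}))$ with $j \neq t$ but $j \equiv t \pmod{\modulus}$, i.e.\ $\modulus \mid (j-t)$. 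So $\modulus$ is bad only if $\modulus$ divides the nonzero integer $\prod_{j \in \supp(c(\vec{a})), j\neq t} (j-t)$.

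Next I would bound the number of bad primes. Each factor $|j-t|$ is at most $2^\beta$ in absolute value (both $t$ and any achievable sum $j \le \sum_i a_i$ are bounded once we note $j$ ranges over sums of the $a_i$, so $j - t$ has magnitude less than $n2^\beta$; it is cleanest to say $|j-t| < n 2^{\beta}$, hence at most $2^{\beta + \log n}$). There are at most $\suppbound$ such factors, so the product has absolute value at most $2^{\suppbound(\beta + \log n)}$, and therefore has at most $\suppbound(\beta + \log n) \le 2\suppbound\beta\log n$ distinct prime divisors (each prime divisor is $\ge 2$). Thus the number of bad primes among all primes is at most, say, $\suppbound\beta n (\log\beta)(\log n) / 2$ for the parameter ranges in question — I will choose the slack in the bound $u := \suppbound\beta n(\log\beta)(\log n)$ generously so that this holds.

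It then remains to show that the total number of primes $\le u$ is at least twice the number of bad primes, so that a uniformly random prime $\le u$ is good with probability at least $\tfrac12$. For this I invoke Theorem~\ref{thm:pnt}: the number of primes up to $u$ is at least $u/(\ln u + 2)$, and since $u = \suppbound\beta n(\log\beta)(\log n)$ is (for nontrivial instances) comfortably larger than the crude bound $\suppbound\beta\log n$ on the bad primes times any $\operatorname{polylog}(u)$ factor, the inequality $u/(\ln u+2) \ge 2 \cdot \suppbound\beta\log n$ follows by a direct estimate, using that $n$ and $\log\beta$ contribute the extra factors needed to dominate $\ln u + 2$. (The constant $2$ in the failure probability is not tight; the generous choice of $u$ absorbs everything.)

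The only mildly delicate point — the ``main obstacle'' — is the bookkeeping in the second and third steps: making sure the chosen $u$ really is large enough that $u/(\ln u + 2)$ beats $2\,\suppbound\beta\log n$ (equivalently, that the factors $n$ and $\log\beta$ in $u$ out-weigh the logarithmic loss and the factor from counting prime divisors), and handling the small-parameter edge cases where $\suppbound$, $n$, or $\beta$ is tiny (there one may assume $\suppbound\beta n(\log\beta)(\log n) > 55$ to apply Theorem~\ref{thm:pnt}, or dispose of the remaining finitely many cases trivially). All of this is routine once the divisibility characterization of a bad prime is in hand.
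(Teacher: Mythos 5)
Your proposal is correct and follows essentially the same route as the paper: characterize a bad prime as one dividing $|j-t|$ for some $j\in\supp(c(\vec a))$ with $j\neq t$, bound each $|j-t|$ by $n2^\beta$ so it has at most $\beta+\log n$ prime divisors, and compare the resulting count of at most $\suppbound(\beta+\log n)$ bad primes against the $u/(\ln u+2)$ primes below $u$ guaranteed by Theorem~\ref{thm:pnt}. The paper phrases the last step as a per-element collision probability of at most $1/(2\suppbound)$ followed by a union bound, which is the same counting argument, and it makes the identical ``for sufficiently large $\beta$ and $n$'' caveat that you flag.
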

\begin{proof}
Suppose $c(\vec{a})_t \neq c^\modulus(\vec{a})_t$. Then there exists an integer $u \in \supp(c(\vec{a}))$ such that $u \neq t$ and $u \equiv t \pmod\modulus$. This implies that $\modulus$ is a divisor of $|t - u|$, so let us bound the probability of this event. Since $|t - u| \leq 2^\beta n$, it has at most $\beta + \log n$ distinct prime divisors. Let $\gamma=\suppbound\beta n(\log \beta)(\log n)$. It is easy to check that for sufficiently large $\beta$ and $n$:
\[
	\prob_p\bigl[p \text{ divides } |t-u|\bigr] \leq \frac{ \beta + \log n}{\frac{\gamma}{\log \gamma +2}}\leq \frac{ \beta + \log n}{\frac{\gamma}{3(n + \log \beta)}} \leq \frac{1}{2\suppbound},
\]
where we use Theorem \ref{thm:pnt} in the first inequality and $\suppbound \leq 2^n$ in the second inequality. Applying the union bound over the at most $\suppbound$ elements of $\supp(c(\vec{a}))$, the event that there exists a $u \in \supp(c(\vec{a}))$ with $u \neq t$ and $u \equiv t \pmod \modulus$ has probability at most $\frac{1}{2}$.
\end{proof}

Now we give two algorithms utilizing homomorphic hashing, 
one for the case where $\suppbound$ is known, and 
one for the case where $\suppbound$ is not known. 

\begin{thm}
There exists an algorithm that, given an instance $(\vec{a},t)$ of the \subsetsum problem and an integer $\suppbound \geq |\supp(c(\vec{a}))|$ as input, outputs a nonnegative integer $x$ in $\OHS(\suppbound)$ time and polynomial space such that 
(i) $x=0$ implies $c(\vec{a})_t=0$ and
(ii) $\prob[ c(\vec{a})_t=x] \geq \frac{1}{4}$. 
\end{thm}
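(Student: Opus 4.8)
The idea is to combine Lemma~\ref{lem:hashsss} with Corollary~\ref{cor:ln} and the prime-sampling routine of Lemma~\ref{thm:sam}, paying attention to the (small, two-sided) failure probabilities so that the final guarantee becomes one-sided in the sense stated: the output $x$ is never a false positive, and it equals $c(\vec{a})_t$ with probability at least $\tfrac14$.

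\textbf{The algorithm.} Set $\gamma = \suppbound\,\beta\,n\,(\log\beta)(\log n)$, where $\beta$ is the bit-length bound with $2^\beta > \max\{t,\max_i a_i\}$ (computable directly from the input). Call $\mathtt{pickprime}(\gamma)$ to obtain a prime $\modulus$ uniformly at random among primes at most $\gamma$ (or $\mathtt{notfound}$, in which case output $0$). Then invoke the algorithm of Corollary~\ref{cor:ln} on the instance $(\vec{a},t)$ with modulus $\modulus$ to compute $x := c^\modulus(\vec{a})_t$, and output this value. Since $\gamma$ is polynomial in $\suppbound$ and the input size, Corollary~\ref{cor:ln} runs in $\OHS(\modulus) = \OHS(\suppbound)$ time and polynomial space, and $\mathtt{pickprime}$ runs in $\polylog(\gamma)$ time and polynomial space, so the overall resource bounds are as claimed.

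\textbf{Correctness.} For property~(i): $c^\modulus(\vec{a})_t$ counts subsets $X$ with $\sum_{e\in X}a_e\equiv t\pmod\modulus$, which includes all subsets with $\sum_{e\in X}a_e = t$; hence $c^\modulus(\vec{a})_t \geq c(\vec{a})_t \geq 0$ always, so $x = 0$ forces $c(\vec{a})_t = 0$. (If the output is $0$ because of $\mathtt{notfound}$, this is vacuously consistent with (i).) For property~(ii): condition on $\mathtt{pickprime}$ succeeding, which by Lemma~\ref{thm:sam} happens with probability at least $1 - \tfrac1e$; conditioned on success, $\modulus$ is uniform over primes at most $\gamma$, so Lemma~\ref{lem:hashsss} gives $c(\vec{a})_t = c^\modulus(\vec{a})_t = x$ with probability at least $\tfrac12$. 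Multiplying, $\prob[c(\vec{a})_t = x] \geq (1-\tfrac1e)\cdot\tfrac12 > \tfrac14$.

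\textbf{Main obstacle.} There is no deep obstacle — the heavy lifting is in Lemma~\ref{lem:hashsss} (already proved) and Corollary~\ref{cor:ln} (imported). The one point requiring a little care is the bookkeeping of the two independent sources of randomness: the $\tfrac1e$ failure of $\mathtt{pickprime}$ and the $\tfrac12$ collision bound must be combined correctly, and one must check that $(1-\tfrac1e)\cdot\tfrac12$ indeed exceeds $\tfrac14$ (it is $\approx 0.316$), which is why the threshold in the statement is $\tfrac14$ rather than $\tfrac12$. A secondary point is to confirm that Lemma~\ref{lem:hashsss} applies with exactly the bound $\gamma$ used here and that $\gamma$ is genuinely polynomial in the input size (so that the $\OHS$ notation absorbs it), which is immediate since $\beta$ and $n$ are at most the input size and $\suppbound \leq 2^n$ is given as part of the input.
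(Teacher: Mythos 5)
Your proposal is correct and follows essentially the same route as the paper: the same call to $\mathtt{pickprime}(\suppbound\beta n(\log\beta)(\log n))$, the same invocation of Corollary~\ref{cor:ln}, the same one-sidedness argument via $c^\modulus(\vec{a})_t \geq c(\vec{a})_t$, and the same combination $\frac{1}{2}(1-\frac{1}{e}) \geq \frac{1}{4}$. One small slip: outputting $0$ when $\mathtt{pickprime}$ returns $\mathtt{notfound}$ is \emph{not} vacuously consistent with (i) --- if $c(\vec{a})_t>0$ that output would violate the deterministic implication ``$x=0 \Rightarrow c(\vec{a})_t=0$''; output a nonzero sentinel (or rerun) in that case instead.
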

\begin{proof}
The algorithm is: First, obtain prime $p = \mathtt{pickprime}(\suppbound\beta n(\log \beta)(\log n))$ using Lemma~\ref{thm:sam}. Second, compute and output $c^\modulus(\vec{a})_t$ using Corollary \ref{cor:ln}. Condition (i) holds since $c^\modulus(\vec{a})_t=0$ implies $c(\vec{a})_t=0$ for any $\modulus,t$. Moreover, condition (ii) follows from Lemma~\ref{lem:hashsss} and Lemma~\ref{thm:sam} since $\frac{1}{2}(1-\frac{1}{e}) \geq \frac{1}{4}$. The time and space bounds are met by Corollary $\ref{cor:ln}$ because $p=\OHS(\suppbound)$.
\end{proof}

\begin{proof}[of Theorem~\ref{thm:mainsss}(a)]
The algorithm is the following: Maintain a guess $\suppbound$ of $\supp(c(\vec{a}))$, initially set to $n$. Obtain a prime $p = \mathtt{pickprime}(\suppbound\beta n(\log \beta)(\log n))$ using Lemma~\ref{thm:sam}, and compute $c^\modulus(\vec{a})_t$ using Corollary \ref{cor:ln}. If $c^\modulus(\vec{a})_t=0$ output \no\ since $c(\vec{a})_t=0$; otherwise, attempt to construct a subset $X \subseteq [n]$ such that $\sum_{e \in X}a_e=t$ using self-reduction. If this succeeds, return \yes. Otherwise, double $\suppbound$ and repeat. The expected running time (taken over all primes $p$) is $\OHS(\supp(c(\vec{a})))$ because when $\suppbound \geq \supp(c(\vec{a}))\beta n(\log \beta)(\log n)$ the probability of succesfully constructing a solution or concluding that none exist is at least $\frac{1}{2}$ by the arguments in the proof of Lemma \ref{lem:hashsss}.
\end{proof}

The derandomization for Theorem~\ref{thm:mainsss}(b) is given in the appendix.

\section{Homomorphic Hashing for Linear Satisfiability}
\label{sec:setpart}
In this section we assume that $\fieldF$ is a field of non-even characteristic and that addition and multiplication refer to operations in $\fieldF$. For $s\in\bbn$, we denote by $\mat{\Phi} \in \fieldF^{\bbz_2^s \times \bbz_2^s}$ the \emph{Walsh-Hadamard matrix}, defined for all $\vec{x},\vec{y} \in \bbz_2^s$ by $\Phi_{\vec{x},\vec{y}}=(-1)^{\vec{x}\vec{y}^T}$.

\begin{lem}[Folklore]
	\label{lem:wht}
The Walsh-Hadamard matrix satisfies  $\mat{\Phi}\mat{\Phi} = 2^s\mat{I}$ and, for every $\vec{f},\vec{g} \in \fieldF[\mathbb{Z}_2^s]$, it holds that $(\vec{f} * \vec{g})\mat{\Phi} = \vec{f}\mat{\Phi} \circ \vec{g}\mat{\Phi}$.
\end{lem}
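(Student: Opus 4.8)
The plan is to prove both identities by a direct computation with the entries of $\mat{\Phi}$, the single underlying fact being that for each fixed $\vec{w}\in\bbz_2^s$ the map $\psi_{\vec{w}}(\vec{x}) = (-1)^{\vec{x}\vec{w}^T}$ is a multiplicative character of the group $(\bbz_2^s,+)$, i.e.\ $\psi_{\vec{w}}(\vec{x}+\vec{y}) = \psi_{\vec{w}}(\vec{x})\,\psi_{\vec{w}}(\vec{y})$ for all $\vec{x},\vec{y}$. I would first record this: writing the addition inside $\bbz_2^s$ componentwise mod $2$, the integers $(\vec{x}+\vec{y})\vec{w}^T$ and $\vec{x}\vec{w}^T+\vec{y}\vec{w}^T$ differ by $2\sum_i x_iy_iw_i$, which is even, so raising $-1$ to them gives the same field element; note also $\psi_{\vec 0}\equiv 1$ and $\psi_{\vec{w}}(\vec{x}) = \psi_{\vec{x}}(\vec{w})$ by symmetry of the form $\vec{x}\vec{w}^T$.

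For $\mat{\Phi}\mat{\Phi} = 2^s\mat{I}$ I would expand $(\mat{\Phi}\mat{\Phi})_{\vec{x},\vec{z}} = \sum_{\vec{y}}(-1)^{\vec{x}\vec{y}^T}(-1)^{\vec{y}\vec{z}^T} = \sum_{\vec{y}}\psi_{\vec{y}}(\vec{x}+\vec{z})$. When $\vec{x}=\vec{z}$ the argument is $\vec{0}$ and every term is $1$, giving $2^s$. When $\vec{x}\ne\vec{z}$, fix a coordinate $i$ with $(\vec{x}+\vec{z})_i = 1$; the involution $\vec{y}\mapsto\vec{y}+\vec{e}_i$ negates $\psi_{\vec{y}}(\vec{x}+\vec{z})$, so the terms cancel in pairs and the sum vanishes. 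Hence $(\mat{\Phi}\mat{\Phi})_{\vec{x},\vec{z}} = 2^s[\vec{x}=\vec{z}]$. (Non-even characteristic is not needed for this identity itself; it only guarantees $2^s\neq 0$, so that $\mat{\Phi}$ is invertible, which is how the lemma gets used later.)

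For the convolution identity, set $\vec{h}=\vec{f}*\vec{g}$, so $h_{\vec{z}}=\sum_{\vec{x}+\vec{y}=\vec{z}}f_{\vec{x}}g_{\vec{y}}$, and for each $\vec{w}$ compute $(\vec{h}\mat{\Phi})_{\vec{w}}=\sum_{\vec{z}}h_{\vec{z}}(-1)^{\vec{z}\vec{w}^T}$; substituting the convolution sum, swapping the order of summation, and applying $\psi_{\vec{w}}(\vec{x}+\vec{y})=\psi_{\vec{w}}(\vec{x})\psi_{\vec{w}}(\vec{y})$ factors this as $\bigl(\sum_{\vec{x}}f_{\vec{x}}(-1)^{\vec{x}\vec{w}^T}\bigr)\bigl(\sum_{\vec{y}}g_{\vec{y}}(-1)^{\vec{y}\vec{w}^T}\bigr)=(\vec{f}\mat{\Phi})_{\vec{w}}(\vec{g}\mat{\Phi})_{\vec{w}}$. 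Since this holds for every coordinate $\vec{w}$, it is exactly $(\vec{f}*\vec{g})\mat{\Phi}=\vec{f}\mat{\Phi}\circ\vec{g}\mat{\Phi}$.

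The argument has no real obstacle; the one spot deserving care is the character identity for $\psi_{\vec{w}}$ — that the $\pm 1$-valued ``inner product'' is compatible with the mod-$2$ group law of $\bbz_2^s$ — which reduces to the cross term $2\sum_i x_iy_iw_i$ being even. Everything else is bookkeeping: expanding a matrix product, re-indexing a convolution, and a sign-flipping involution for the off-diagonal entries.
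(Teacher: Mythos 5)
Your proof is correct: the character identity $(-1)^{(\vec{x}+\vec{y})\vec{w}^T}=(-1)^{\vec{x}\vec{w}^T}(-1)^{\vec{y}\vec{w}^T}$ (the cross term being even), the fixed-point-free sign-flipping involution for the off-diagonal entries, and the factorization of the convolution sum are exactly the standard argument. The paper states this lemma as folklore without proof, but the same computation appears there in the proof of Claim~\ref{clm:two} (showing $\varphi(\vec{a}*\vec{b})=\varphi(\vec{a})\varphi(\vec{b})$), so your approach matches what the paper does where it spells the details out.
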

%
%

We first prove the following general theorem, of which Theorem~\ref{thm:mainls}(a) is a special case.
\begin{thm}
\label{thm:z2}
There exists a randomized algorithm that, given as input
\begin{enumerate}
 \item a circuit $C$ with singleton inputs over $\fieldF[\bbz_2^n]$,
 \item an integer $\suppbound \geq |\supp(\vec{v})|$, and 
 \item an element $\vec{t} \in \bbz_2^n$, 
\end{enumerate}
outputs the coefficient $v_{\vec{t}}\in\fieldF$ with probability
at least $\frac{1}{2}$, where $\vec{v} \in \fieldF[\bbz_2^n]$
is the output of $C$. 
The algorithm runs in time $\OHS(\suppbound)$ and uses
$\OHS(\suppbound)$ arithmetic operations in $\fieldF$,
and requires storage for $\OHS(1)$ bits and elements 
of $\fieldF$.
\end{thm}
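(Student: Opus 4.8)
The plan is to mimic the Subset Sum argument of Section~\ref{sec:sss}, but now in the group algebra $\fieldF[\bbz_2^n]$ instead of the polynomial ring. The obstacle is the same: the circuit $C$ may produce an output vector $\vec{v}\in\fieldF[\bbz_2^n]$ of small support, but it lives in an exponentially large domain ($2^n$ coordinates), so we cannot evaluate $C$ directly. The remedy is a homomorphic hash: pick a random linear map $\varphi:\bbz_2^n\to\bbz_2^s$ for a suitably small $s=\OH(\log\suppbound)$, note that $\varphi$ is a semigroup homomorphism $(\bbz_2^n,+)\to(\bbz_2^s,+)$, and hence by Observation~\ref{obs:hom} it induces a ring homomorphism $h:\fieldF[\bbz_2^n]\to\fieldF[\bbz_2^s]$ given by $h(\vec a)_z=\sum_{y\in\varphi^{-1}(z)}a_y$. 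By Observation~\ref{obs:homcir}, the circuit $h(C)$ over $\fieldF[\bbz_2^s]$ has output $h(\vec v)$; since $C$ has singleton inputs and $h$ maps a singleton $\st{w}{y}$ to the singleton $\st{w}{\varphi(y)}$, the hashed circuit again has singleton inputs and can be written down efficiently.

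First I would set $s=\lceil\log_2(4\suppbound^2)\rceil$ or similar, so that $2^s\geq 2\binom{\suppbound}{2}$ (plus a safety factor), and choose $\varphi$ by picking its $n\times s$ matrix uniformly at random over $\bbz_2$. For any two distinct $\vec u,\vec u'\in\bbz_2^n$ we have $\prob_\varphi[\varphi(\vec u)=\varphi(\vec u')]=\prob_\varphi[\varphi(\vec u-\vec u')=\vec 0]=2^{-s}$, since $\vec u-\vec u'\neq\vec 0$ and the columns of the random matrix are independent uniform. A union bound over the at most $\binom{\suppbound}{2}$ pairs in $\supp(\vec v)$ shows that with probability at least $\frac12$, $\varphi$ is injective on $\supp(\vec v)$; call this the \emph{good event}. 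Conditioned on the good event, $h(\vec v)_{\varphi(\vec t)}=\sum_{y\in\varphi^{-1}(\varphi(\vec t))}v_y=v_{\vec t}$, because $\vec t$ is the only element of $\supp(\vec v)$ (if any) mapping to $\varphi(\vec t)$ — the only subtlety being that $\vec t$ itself need not lie in $\supp(\vec v)$, but then $v_{\vec t}=0$ and no other support element collides with it, so the identity still holds.

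Next I would extract the single coefficient $h(\vec v)_{\varphi(\vec t)}$ from the hashed circuit without materializing the whole vector. This is exactly where Lemma~\ref{lem:wht} enters: apply the Walsh--Hadamard transform $\mat\Phi\in\fieldF^{\bbz_2^s\times\bbz_2^s}$. Since $(\vec f*\vec g)\mat\Phi=\vec f\mat\Phi\circ\vec g\mat\Phi$ and $\mat\Phi$ is also additive, transforming $h(C)$ turns it into a circuit over $\fieldF^{\bbz_2^s}$ (Hadamard product ring) whose output is $h(\vec v)\mat\Phi$; moreover a singleton input $\st{w}{z}$ transforms to the explicit dense vector $w\cdot((-1)^{\vec x z^T})_{\vec x}$. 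So for each fixed evaluation point $\vec x\in\bbz_2^s$ we can compute the $\vec x$-coordinate of $h(\vec v)\mat\Phi$ by evaluating a scalar arithmetic circuit in $\OHS(1)$ space; ranging $\vec x$ over all $2^s=\OHS(\suppbound)$ points gives us $h(\vec v)\mat\Phi$ coordinate by coordinate in $\OHS(\suppbound)$ time and $\OHS(1)$ extra space. Finally, from $\mat\Phi\mat\Phi=2^s\mat I$ and non-even characteristic, $h(\vec v)_{\varphi(\vec t)}=2^{-s}\sum_{\vec x\in\bbz_2^s}(-1)^{\vec x\varphi(\vec t)^T}(h(\vec v)\mat\Phi)_{\vec x}$, a sum we accumulate on the fly. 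I would present this evaluation-plus-inversion step as a reusable lemma (``coefficient extraction in $\fieldF[\bbz_2^s]$ costs $\OHS(2^s)$ time and $\OHS(1)$ space'') since it is the analogue of Corollary~\ref{cor:ln}; the main obstacle in the write-up is bookkeeping the streaming evaluation so that storage stays $\OHS(1)$ elements of $\fieldF$ rather than $\OHS(2^s)$ — this follows because we only ever need one coordinate of each intermediate transformed vector at a time, processed gate by gate in topological order. Combining the two failure modes (bad $\varphi$; nothing else can fail, as the transform is exact) gives success probability $\geq\frac12$, completing the proof.
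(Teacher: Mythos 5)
Your approach is the same as the paper's (random linear hash $\bbz_2^n\to\bbz_2^s$ inducing a ring homomorphism, followed by Walsh--Hadamard point evaluation and inversion to extract one coefficient of the hashed output), and the transform/streaming part is handled correctly. However, there are two linked problems with the hashing step. First, your parameter choice $2^s=\Theta(\suppbound^2)$ (union bound over all $\binom{\suppbound}{2}$ pairs of support elements) makes the final summation range over $2^s=\Theta(\suppbound^2)$ points, so the algorithm runs in $\OHS(\suppbound^2)$ time, not the claimed $\OHS(\suppbound)$; your later assertion that $2^s=\OHS(\suppbound)$ contradicts your own choice of $s$. You are demanding more than you need: correctness only requires that no element of $\supp(\vec v)$ other than $\vec t$ collides with $\vec t$ under the hash, which is a union bound over at most $\suppbound$ events, each of probability $2^{-s}$, so $s=\lceil\log\suppbound\rceil+1$ suffices --- this is exactly what the paper does (Claim~\ref{clm:one}) and it restores the $\OHS(\suppbound)$ bound.

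Second, your ``good event'' (injectivity of $\varphi$ on $\supp(\vec v)$) does not actually imply correctness in the case you flag as a subtlety: if $\vec t\notin\supp(\vec v)$, pairwise injectivity on the support says nothing about whether some $\vec y\in\supp(\vec v)$ satisfies $\varphi(\vec y)=\varphi(\vec t)$, yet you assert ``no other support element collides with it.'' The event you need is precisely $\forall\,\vec y\in\supp(\vec v)$ with $\vec y\neq\vec t$: $(\vec y-\vec t)\mat H\neq\vec 0$. Replacing your good event with this one fixes both issues at once and brings the argument in line with the paper's proof.
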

\begin{algorithm}
\caption{Homomorphic hashing for Theorem \ref{thm:z2}.}
\label{alg:find1}
\begin{algorithmic}[1]
\REQUIRE $\mathtt{hashZ2}$
	\STATE Let $s=\lceil\log \suppbound \rceil$+1. 
	\STATE Choose a matrix $\mat{H} \in \bbz_2^{s \times n}$ uniformly at random from the set of all $s\times n$ matrices with binary entries.
	\STATE Let $h: \fieldF[\bbz_2^n] \rightarrow \fieldF[\bbz_2^s]$ be the homomorphism defined by $h(\vec{a})=\vec{b}$ where $b_{\vec{x}} = \sum_{\vec{y}\in\bbz_2^n:\vec{y}\mat{H}=\vec{x}} a_{\vec{y}}$ for all $\vec{x} \in \bbz_2^s$. Apply $h$ to $C$ to obtain the circuit $C_1$.
	\RETURN $\displaystyle \frac{1}{2^s}\sum_{\vec{x} \in \bbz_2^s}(-1)^{(\vec{t}\mat{H})\vec{x}^T} \ \mathtt{sub}(C_1,\vec{x})$.
\vspace{0.5em}		
\REQUIRE $\mathtt{sub}(C_1,\vec{x})$
		\STATE Let $\varphi:\fieldF[\bbz_2^s]\rightarrow\fieldF$ be the homomorphism defined by $\varphi(\vec{w})=\sum_{\vec{y} \in \mathbb{Z}_2^s}(-1)^{\vec{x} \vec{y}^T}\ w_{\vec{y}}$ for all $\vec{w}\in\fieldF[\bbz_2^s]$. Apply $\varphi$ to $C_1$ to obtain the circuit $C_2$.
		\STATE Evaluate $C_2$ and return the output.
\end{algorithmic}
\end{algorithm}
\begin{proof}
The algorithm is given in Algorithm~\ref{alg:find1}. Let us first analyse the complexity of this algorithm: Steps 1 and 2 can be performed in time polynomial in the input. Step 3 also be done in polynomial time since it amounts to relabeling all input gates with $h(\vec{e})$ where $\vec{e}$ was the old label. Indeed, we know that $\vec{e} \in \fieldF[\bbz_{2}^n]$ is a singleton $\st{v}{\vec{y}}$, so $h(\vec{e})$ is the singleton $\st{v}{\vec{y}\mat{H}}$ and this can be computed in polynomial time. Step 4 takes $\OHS(\suppbound)$ operations and calls to $\mathtt{sub}$, so for the complexity bound it remains to show that each call to $\mathtt{sub}$ runs in polynomial time. Step 5 can be implemented in polynomial time similar to Step 3 since the singleton $\vec{e}=\st{v}{\vec{y}}$ is mapped to $(-1)^{\vec{x}\vec{y}^T}v$. Finally, the direct evaluation of $C_2$ uses $|C_2|$ operations in $\fieldF$. Hence the algorithm meets the time bound, and also the space bound is immediate.

The fact that $\mathtt{hashZ2}$ returns $v_{\vec{t}}$ with probability at least $\frac{1}{2}$ is a direct consequence of the following two claims. Let $\vec{w}$ be the output of $C_1$.
  \begin{clm}[\omitted]\label{clm:one}
    $\prob_{\mat H}[v_{\vec t}=w_{\vec{t}\mat{H}}] \geq \frac{1}{2}$.
  \end{clm}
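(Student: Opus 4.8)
The plan is to prove Claim~\ref{clm:one} by analyzing how the homomorphic hash $h$ induced by the random matrix $\mat{H} \in \bbz_2^{s\times n}$ acts on the coefficients of $\vec{v}$. By Observation~\ref{obs:hom} and Observation~\ref{obs:homcir}, the circuit $C_1 = h(C)$ outputs $\vec{w} = h(\vec{v})$, so by the definition of $h$ we have $w_{\vec{t}\mat{H}} = \sum_{\vec{y}\in\bbz_2^n:\ \vec{y}\mat{H}=\vec{t}\mat{H}} v_{\vec{y}} = v_{\vec{t}} + \sum_{\vec{y}\in\supp(\vec{v})\setminus\{\vec{t}\}:\ \vec{y}\mat{H}=\vec{t}\mat{H}} v_{\vec{y}}$. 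Hence the event $v_{\vec{t}} = w_{\vec{t}\mat{H}}$ is implied by the event that no $\vec{y}\in\supp(\vec{v})$ with $\vec{y}\neq\vec{t}$ satisfies $\vec{y}\mat{H} = \vec{t}\mat{H}$, equivalently $(\vec{y}-\vec{t})\mat{H} = \vec{0}$.

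Next I would bound the bad probability. Fix $\vec{y}\neq\vec{t}$; then $\vec{z} := \vec{y}-\vec{t}$ is a fixed nonzero vector in $\bbz_2^n$, and $\vec{z}\mat{H}$ is a vector in $\bbz_2^s$ whose entries are the inner products of $\vec{z}$ with the $s$ columns of $\mat{H}$. Since the entries of $\mat{H}$ are independent uniform bits, the $s$ coordinates of $\vec{z}\mat{H}$ are independent, and each is a uniform bit (the inner product of a fixed nonzero vector with a uniform random vector over $\bbz_2$ is an unbiased coin flip). Therefore $\prob_{\mat H}[\vec{z}\mat{H}=\vec{0}] = 2^{-s}$. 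Taking a union bound over the at most $\suppbound$ vectors $\vec{y}\in\supp(\vec{v})$ (using $|\supp(\vec{v})|\leq\suppbound$), the probability that some collision occurs is at most $\suppbound\cdot 2^{-s}$. By the choice $s = \lceil\log\suppbound\rceil + 1$ we have $2^s \geq 2\suppbound$, so this is at most $\tfrac{1}{2}$, giving $\prob_{\mat H}[v_{\vec t}=w_{\vec{t}\mat{H}}] \geq \tfrac{1}{2}$ as claimed.

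I do not expect any serious obstacle here; the only points that need a little care are (i) confirming that the homomorphism $h$ of Algorithm~\ref{alg:find1} is indeed the one produced by Observation~\ref{obs:hom} applied to the linear map $\vec{y}\mapsto\vec{y}\mat{H}$ on $\bbz_2^n$ (it is, since $(\vec{y}+\vec{y}')\mat{H} = \vec{y}\mat{H}+\vec{y}'\mat{H}$), so that $C_1$ really does output $h(\vec{v})$, and (ii) being careful that the union bound is over $\supp(\vec{v})\setminus\{\vec{t}\}$ rather than all of $\bbz_2^n$, which is exactly where the sparsity bound $\suppbound$ enters. The characteristic of $\fieldF$ being non-even plays no role in this particular claim (it is needed only for the subsequent claim about $\mathtt{sub}$ and the Walsh--Hadamard inversion via Lemma~\ref{lem:wht}), so I would not invoke it here.
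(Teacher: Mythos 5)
Your proposal is correct and follows essentially the same route as the paper's own proof: establish $\vec{w}=h(\vec{v})$ via Observations~\ref{obs:hom} and~\ref{obs:homcir}, reduce failure to a collision $(\vec{y}-\vec{t})\mat{H}=\vec{0}$ for some $\vec{y}\in\supp(\vec{v})\setminus\{\vec{t}\}$, bound each collision probability by $2^{-s}$ using independence of the coordinates of $\vec{z}\mat{H}$, and finish with a union bound over the at most $\suppbound$ support elements. No gaps.
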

  
  \begin{clm}[\omitted]\label{clm:two}
    Algorithm $\mathtt{hashZ2}$ returns $w_{\vec{t}\mat{H}}$.
  \end{clm}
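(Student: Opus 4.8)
The plan is to prove the two claims, since the theorem follows from them by the argument already sketched: Claim~\ref{clm:two} says the returned value equals $w_{\vec t\mat H}$, and Claim~\ref{clm:one} says $w_{\vec t\mat H}=v_{\vec t}$ with probability at least $\tfrac12$, so the two together give the success probability. The complexity analysis is already done in the excerpt.

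First I would handle Claim~\ref{clm:two}, which is a purely deterministic ``the circuit computes what we want'' statement. The idea is to chain Observations~\ref{obs:hom} and~\ref{obs:homcir}. The map $\vec y\mapsto\vec y\mat H$ is a group homomorphism $\bbz_2^n\to\bbz_2^s$, so by Observation~\ref{obs:hom} the induced map $h:\fieldF[\bbz_2^n]\to\fieldF[\bbz_2^s]$ is a ring homomorphism; by Observation~\ref{obs:homcir}, $C_1=h(C)$ outputs $\vec w=h(\vec v)$, and in particular $w_{\vec x}=\sum_{\vec y:\vec y\mat H=\vec x}v_{\vec y}$. Next, $\varphi$ is exactly evaluation of $\fieldF[\bbz_2^s]$ at the character indexed by $\vec x$, i.e.\ $\varphi(\vec w')=(\vec w'\mat\Phi)_{\vec x}$; this is a ring homomorphism because $\mat\Phi$ diagonalizes the convolution (Lemma~\ref{lem:wht}: $(\vec f*\vec g)\mat\Phi=\vec f\mat\Phi\circ\vec g\mat\Phi$) and is clearly additive. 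So $\mathtt{sub}(C_1,\vec x)$, being the evaluation of $\varphi(C_1)$, returns $\varphi(\vec w)=(\vec w\mat\Phi)_{\vec x}=\sum_{\vec y\in\bbz_2^s}(-1)^{\vec x\vec y^T}w_{\vec y}$. Plugging this into the return statement of $\mathtt{hashZ2}$ and using $\mat\Phi\mat\Phi=2^s\mat I$ (the inversion formula), the sum $\tfrac1{2^s}\sum_{\vec x}(-1)^{(\vec t\mat H)\vec x^T}(\vec w\mat\Phi)_{\vec x}$ collapses to $w_{\vec t\mat H}$. The only care needed is that $2^s$ is invertible in $\fieldF$, which holds since $\fieldF$ has non-even characteristic.

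For Claim~\ref{clm:one}, the statement $w_{\vec t\mat H}=\sum_{\vec y:\vec y\mat H=\vec t\mat H}v_{\vec y}$ equals $v_{\vec t}$ exactly when no other $\vec y\in\supp(\vec v)$ is hashed to the same bucket as $\vec t$, i.e.\ when $\vec y\mat H=\vec t\mat H$ for no $\vec y\in\supp(\vec v)\setminus\{\vec t\}$. For a fixed $\vec y\neq\vec t$, the vector $(\vec y-\vec t)\mat H=(\vec y+\vec t)\mat H$ over $\bbz_2$ is a nonzero combination of the rows of $\mat H$; since $\mat H$ has uniformly random $\bbz_2$-entries, $(\vec y+\vec t)\mat H$ is uniform over $\bbz_2^s$, so the collision probability is $2^{-s}$. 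A union bound over the at most $\suppbound$ elements of $\supp(\vec v)$ gives collision probability at most $\suppbound\cdot 2^{-s}\le\suppbound\cdot 2^{-\lceil\log\suppbound\rceil-1}\le\tfrac12$, which is what we want. I would also note that $\supp(\vec v)\subseteq\{\vec y\mat H$-images$\}$ is irrelevant here — what matters is $|\supp(\vec v)|\le\suppbound$, and the homomorphic-hashing setup guarantees the entries we do not want are only summed in when a collision occurs.

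The main obstacle is the bookkeeping in Claim~\ref{clm:two}: one has to be scrupulous that $\varphi$ really is a ring homomorphism (this is where Lemma~\ref{lem:wht}'s convolution identity is essential — without it $\varphi$ would only be additive and evaluating $\varphi(C_1)$ would not compute $\varphi$ of the output), and that composing the two homomorphisms $h$ then $\varphi$ and then applying the Walsh--Hadamard inversion $\mat\Phi\mat\Phi=2^s\mat I$ yields precisely the single coefficient $w_{\vec t\mat H}$ rather than a scaled or shifted version. Claim~\ref{clm:one} is routine once one observes that a uniform random $\bbz_2$-matrix sends any fixed nonzero vector to a uniform image.
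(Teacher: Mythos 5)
Your argument for Claim~\ref{clm:two} is correct and essentially the paper's own proof: establish that $\varphi$ is a ring homomorphism so that by Observation~\ref{obs:homcir} the evaluation $\mathtt{sub}(C_1,\vec x)$ returns $\varphi(\vec w)=(\vec w\mat\Phi)_{\vec x}$, then apply $\mat\Phi\mat\Phi=2^s\mat I$ to recover $w_{\vec t\mat H}$ from the returned sum. The only cosmetic difference is that you derive multiplicativity of $\varphi$ by citing Lemma~\ref{lem:wht}, whereas the paper verifies it by the same direct sum manipulation; your remark that $2^s$ must be invertible (guaranteed by the non-even characteristic) is a point the paper leaves implicit.
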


\end{proof}

\begin{proof}[of Theorem~\ref{thm:mainls}(a)]
For $1 \leq i \leq n$ and $0\leq w \leq t$ denote by $\vec{A^{(i)}}$ the
$i$th row of $\vec{A}$ and define $\vec f[i,w] \in \mathbb{Q}[\bbz_2^m]$ by
\begin{equation}
\label{eq:setpart}
\vec{f[i,w]}=
\begin{cases}
 \st{1}{\vec{0}}& \text{if } i=w=0,\\
 0& \text{if } i=0 \text{ and } w\neq 0,\\
 \vec f[i-1,w] + \vec f[i-1,j-\omega_i]* \st{1}{\vec{A^{(i)}}} & \text{otherwise.}
\end{cases}
\end{equation}
It is easy to see that for every $1\leq i \leq n$, $0\leq w\leq t$, and $\vec{y} \in \bbz_2^m$, the value $f[i,w]_{\vec{y}}$ is the number of $\vec{x} \in \bbz_2^i$ such that $\tilde{\vec{\omega}}\vec{x}^T =t$ and $\vec{x}\tilde{\mat{A}}=\vec{y}$ where $\tilde{\vec{\omega}}$ and $\tilde{\mat{A}}$ are obtained by truncating $\vec{\omega}$ and $\mat{A}$ to the first $i$ rows. Hence, we let $C$ be the circuit implementing \ref{eq:setpart} and let its output be $\vec v=\sum_{w=0}^{t}\vec{f[n,w]}$. Thus, $v_{\vec{b}}$ is the number of $\vec{x} \in \bbz_2^n$ with $\vec{x}\mat{A}=\vec{b}$ and $\vec{x}\vec{\omega}^T\leq t$.

Also, $|\supp(\vec{v})| \leq 2^{\rank(\vec A)}$ since any element of the support of $\vec{v}$ is a sum of rows of $\vec A$ and hence in the row-space of $\vec A$, which has size at most $2^{\rank(\vec A)}$. To apply Theorem \ref{thm:z2}, let $\fieldF=\mathbb{Q}$ and observe that the computations are in fact carried out over integers bounded in absolute value poly-exponentially in $n$ and hence the operations in the base field can also be executed polynomial in $n$. The theorem follows from Theorem~\ref{thm:z2}.
\end{proof}

To establish Theorem~\ref{thm:mainls}(b), let us first see how to exploit a high linear rank of the matrix $\vec A$ in an instance of \textsc{Linear Sat}. By permuting the rows of $\vec A$ as necessary, we can assume that the first $\rank(\vec A)$ rows of $\vec A$ are linearly independent. We can now partition $\vec x$ into $\vec x=(\vec y,\vec z)$, where $\vec y$ has length $\rank(\vec A)$ and $\vec z$ has length $n-\rank(\vec A)$. There are $2^{n-\rank(\vec A)}$ choices for $\vec z$, each of which by linear independence has at most one corresponding $\vec y$ such that $\vec x\vec A=\vec b$. Given $\vec z$, we can determine the corresponding $\vec y$ (if any) in polynomial time by Gaussian elimination. Thus, we have:


\begin{obs}\label{obs:highrank}
\textsc{Linear Sat} can be solved in $\OHS(2^{n-\rank(\mat{A})})$ time and polynomial space.
\end{obs}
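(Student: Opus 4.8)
\textbf{Proof proposal for Observation~\ref{obs:highrank}.} The plan is to turn the discussion preceding the statement into an explicit enumeration-with-Gaussian-elimination algorithm. First I would preprocess: permute the rows of $\mat{A}$ so that the first $r := \rank(\mat{A})$ rows $\mat{A}^{(1)},\dots,\mat{A}^{(r)}$ are linearly independent over $\bbz_2$, applying the \emph{same} permutation to the coordinates of $\vec{x}$ and to the entries of $\vec{\omega}$, so that the instance is unchanged up to relabeling. This is a single Gaussian elimination, hence polynomial time. Write $\vec{x}=(\vec{y},\vec{z})$ with $\vec{y}\in\bbz_2^{\,r}$ and $\vec{z}\in\bbz_2^{\,n-r}$.

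Next I would iterate over all $2^{\,n-r}$ assignments $\vec{z}$, generated one at a time in lexicographic order using only a polynomial-bit counter, so that the enumeration itself uses polynomial space. For a fixed $\vec{z}$, the constraint $\vec{x}\mat{A}=\vec{b}$ is the linear system $\sum_{i=1}^{r} y_i\mat{A}^{(i)} \;=\; \vec{b}-\sum_{i=r+1}^{n} z_{i-r}\mat{A}^{(i)}$ in the unknown $\vec{y}$. Because $\mat{A}^{(1)},\dots,\mat{A}^{(r)}$ are linearly independent, the map $\vec{y}\mapsto\sum_{i\le r}y_i\mat{A}^{(i)}$ is injective, so this system has at most one solution; if it has one, Gaussian elimination over $\bbz_2$ recovers it in time polynomial in $n$ and $m$. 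For the resulting unique candidate $\vec{x}=(\vec{y},\vec{z})$ I would then compute the weight $\vec{\omega}\vec{x}^T$ and test whether it is at most $t$; if so, halt and output \yes. If the loop terminates with no success, output \no. Correctness is immediate: any $\vec{x}$ with $\vec{x}\mat{A}=\vec{b}$ has its last $n-r$ coordinates equal to exactly one enumerated $\vec{z}$, and for that $\vec{z}$ the algorithm reconstructs precisely that $\vec{x}$ and then accepts it iff $\vec{\omega}\vec{x}^T\le t$; conversely every accepted $\vec{x}$ satisfies both constraints by construction.

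For the resource bounds: the loop runs $2^{\,n-\rank(\mat{A})}$ times, each iteration costing $\mathrm{poly}(n,m)$ (one Gaussian elimination plus one comparison of integers of size $n^{\OHT(1)}$, using $t=n^{\OHT(1)}$), for a total of $\OHS(2^{\,n-\rank(\mat{A})})$ time; at any moment only the loop counter, a constant number of $\bbz_2$-vectors of length $m$ or $n$, and the working data of one Gaussian elimination are stored, all of polynomial size. I do not expect a genuine obstacle here: the entire content is the injectivity observation already noted in the text, and the only points requiring (routine) care are carrying the row-permutation through to $\vec{\omega}$ and enumerating $\vec{z}$ without materialising all assignments at once.
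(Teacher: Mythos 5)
Your proposal is correct and is essentially identical to the paper's argument: permute rows so the first $\rank(\mat{A})$ are independent, enumerate the $2^{n-\rank(\mat{A})}$ tails $\vec{z}$, recover the unique $\vec{y}$ (if any) by Gaussian elimination, and check the weight constraint. The extra care you take with permuting $\vec{\omega}$ and with streaming the enumeration in polynomial space is routine but welcome detail that the paper leaves implicit.
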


This enables a ``Win/Win approach'' where we distinguish between low and high ranks, and use an appropriate algorithm in each case.

\begin{proof}[of Theorem~\ref{thm:mainls}(b)]
Compute $\rank(\mat{A})$. If $\rank(\mat{A})\geq n/2$, run the algorithm of Observation~\ref{obs:highrank}. Otherwise, run the algorithm implied by Theorem~\ref{thm:mainls}(a).
\end{proof}

\vspace{-2em}
\subsubsection*{Set Partition}

We now give a very similar application to the {\sc Set Partition} problem: given an integer $t$ and a set family $\mathcal{F} \subseteq 2^{U}$ where $|\mathcal{F}|=n$, $|U|=m$, determine whether there is a subfamily $\mathcal{P} \subseteq \mathcal{F}$ with $|\mathcal{P}|\leq t$ such that $\bigcup_{S \in \mathcal{P}}S=U$ and $\sum_{S \in \mathcal{P}}|S|=|U|$. 


The \emph{incidence matrix} of a set system $(U,\mathcal{F})$ is
the $|U| \times |\mathcal{F}|$ matrix $\vec{A}$ whose 
entries $A_{e,S}=[e \in S]$ are indexed by $e\in U$ and $S\in \mathcal{F}$.

\begin{thm}[\omitted]\label{thm:setpartexpspacerank}\label{thm:setpart5}
There exist algorithms that given an instance $(U,\mathcal{F},t)$ of \textsc{Set Partition} output the number of set partitions of size at most $t$ with probability at least $\frac{1}{2}$, and use
	(a) $\OHS(2^{\rank(\vec A)})$ time and polynomial space, and
	(b) $(2^{\rank(\vec A)}+n)m^{\OH(1)}$ time and space,
	where $\mat{A}$ is the incidence matrix of $(U,\mathcal{F})$.
\end{thm}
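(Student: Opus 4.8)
The plan is to reduce \textsc{Set Partition} to a \linearsat-type counting problem over $\bbz_2^{|U|}$ and then proceed almost exactly as in the proof of Theorem~\ref{thm:mainls}(a). The key combinatorial step comes first: for a subfamily $\mathcal{P}\subseteq\mathcal{F}$ one always has $\bigl|\bigoplus_{S\in\mathcal{P}}\vec\chi(S)\bigr|\le\sum_{S\in\mathcal{P}}|S|$, with equality precisely when the members of $\mathcal{P}$ are pairwise disjoint. Hence the two conditions $\bigoplus_{S\in\mathcal{P}}\vec\chi(S)=\vec\chi(U)$ and $\sum_{S\in\mathcal{P}}|S|=|U|$ together force disjointness, and therefore $\mathcal{P}$ is a set partition of $U$; conversely every set partition satisfies both. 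So, after discarding the empty set from $\mathcal{F}$ (it never occurs in a partition), the number we must output is the number of $\mathcal{P}\subseteq\mathcal{F}$ with $|\mathcal{P}|\le\min(t,m)$, $\bigoplus_{S\in\mathcal{P}}\vec\chi(S)=\vec\chi(U)$ and $\sum_{S\in\mathcal{P}}|S|=m$. Writing $S_1,\dots,S_n$ for the members of $\mathcal{F}$, note that the vectors $\vec\chi(S_i)\in\bbz_2^m$ are exactly the columns of the incidence matrix $\mat A$, so they all lie in the column space of $\mat A$, which has at most $2^{\rank(\mat A)}$ elements.

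For part (a) I would build a circuit $C$ over $\mathbb{Q}[\bbz_2^m]$ with singleton inputs computing, by the recurrence~\eqref{eq:setpart} with weights $\omega_i:=|S_i|$ and one extra index $k\in\{0,\dots,\min(t,m)\}$ recording $|\mathcal{P}|$, the vectors $\vec f[i,k,w]\in\mathbb{Q}[\bbz_2^m]$ whose coordinate $f[i,k,w]_{\vec y}$ is the number of $\mathcal{P}\subseteq\{S_1,\dots,S_i\}$ with $|\mathcal{P}|=k$, $\sum_{S\in\mathcal{P}}|S|=w$ and $\bigoplus_{S\in\mathcal{P}}\vec\chi(S)=\vec y$; this correctness is a routine induction. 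Letting the output of $C$ be $\vec v:=\sum_{k=0}^{\min(t,m)}\vec f[n,k,m]$, the coordinate $v_{\vec\chi(U)}$ is exactly the desired count, $|C|$ is polynomial in $n$ and $m$, and (as in Theorem~\ref{thm:mainls}(a)) all intermediate integers are at most $2^n$, so operations in $\mathbb{Q}$ cost $n^{\OH(1)}$. Moreover every $\vec f[i,k,w]$, being a sum of convolutions of singletons $\st{1}{\vec\chi(S_j)}$, is supported inside the column space of $\mat A$, so $|\supp(\vec v)|\le 2^{\rank(\mat A)}$. Feeding $C$, the bound $\suppbound:=2^{\rank(\mat A)}$, and the target $\vec t:=\vec\chi(U)$ to Theorem~\ref{thm:z2} then computes $v_{\vec\chi(U)}$ with probability at least $\tfrac12$ in $\OHS(2^{\rank(\mat A)})$ time and polynomial space, which is exactly (a).

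For part (b) I would instead evaluate the same circuit directly by sparse dynamic programming with memorization: sweep $i$ from $0$ to $n$, store at each stage only the nonzero coordinates of the vectors $\vec f[i,k,w]$ in a dictionary keyed by elements of $\bbz_2^m$, and realise each convolution with the singleton $\st{1}{\vec\chi(S_i)}$ as a translation of the support by $\vec\chi(S_i)$. Since every stored vector has at most $2^{\rank(\mat A)}$ nonzero coordinates and only $m^{\OH(1)}$ of them are alive simultaneously, the space stays within $(2^{\rank(\mat A)}+n)m^{\OH(1)}$, and the running time should follow from an amortized count of how the supports grow as successive columns of $\mat A$ are incorporated.

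I expect everything in part (a) to be a routine instantiation of Theorem~\ref{thm:z2}, the only genuinely new ingredients being the disjointness argument of the first paragraph and the support bound $|\supp(\vec v)|\le 2^{\rank(\mat A)}$. The main obstacle is the fine running-time accounting for part (b): a crude analysis only gives $2^{\rank(\mat A)}\cdot n\cdot m^{\OH(1)}$, and obtaining the additive $(2^{\rank(\mat A)}+n)$ form requires carefully bounding $\sum_i|\supp(\vec f[i,\cdot,\cdot])|$ in terms of the ranks of the prefixes of the column sequence of $\mat A$.
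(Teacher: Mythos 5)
Part (a) of your proposal is essentially correct and is the same route as the paper's: the paper reduces to \textsc{Linear Sat} with the incidence matrix, target $\vec b=\vec 1$, weights $\omega_i=|S_i|n+1$ and budget $t'=nm+t$ (so that the single weight constraint simultaneously enforces $\sum_{S\in\mathcal P}|S|=|U|$ and $|\mathcal P|\le t$), and then invokes Theorem~\ref{thm:mainls}(a); you instead carry the cardinality $k$ as an explicit DP index and feed the circuit directly to Theorem~\ref{thm:z2}. These are interchangeable, and your disjointness argument ($\bigl|\bigoplus_{S}\vec\chi(S)\bigr|\le\sum_S|S|$ with equality iff pairwise disjoint) together with the column-space bound $|\supp(\vec v)|\le 2^{\rank(\mat A)}$ is exactly the justification the paper leaves implicit.

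Part (b) has a genuine gap, which you yourself flag: sparse DP with memorization cannot in general yield the additive bound $(2^{\rank(\mat A)}+n)m^{\OH(1)}$. Take $\mathcal F$ whose first $r=\rank(\mat A)$ sets have independent characteristic vectors and whose remaining $n-r$ sets lie in their span; then $|\supp(\vec f[i,\cdot,\cdot])|=2^{r}$ for every $i\ge r$, so $\sum_i|\supp(\vec f[i,\cdot,\cdot])|=\Omega(n\cdot 2^{r})$ and the hoped-for amortization fails. The paper's proof of (b) works differently in two respects. First, it restructures the recurrence so that the $n$ input sets enter only through the $m+1$ aggregated vectors $\vec g[j]=\sum_{i}[|S_i|=j]\st{1}{S_i}$ of \eqref{eq:expspace2}; the DP \eqref{eq:expspace1} then has only $m^{\OH(1)}$ gates, and the dependence on $n$ is confined to computing (and hashing) the $\vec g[j]$, costing $n\,m^{\OH(1)}$ additively. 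Second, it hashes down to $\bbz_2^{s}$ with $s=\rank(\mat A)$ and applies the fast Walsh--Hadamard transform to each $h(\vec g[j])$, so every convolution in the DP becomes a Hadamard (pointwise) product costing $\OH(2^{s}s)$; the whole DP then costs $2^{s}m^{\OH(1)}$, and one final inverse transform recovers the answer. It is this combination -- few gates after aggregation, plus convolution-to-pointwise via the transform -- and not a refined support-growth accounting, that produces the additive $(2^{\rank(\mat A)}+n)m^{\OH(1)}$ bound.
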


\section{Homomorphic Hashing for the Union Product}
\label{sec:unionhashing}
\newcommand{\zt}{\boldsymbol{\zeta}}
\newcommand{\mt}{\boldsymbol{\mu}}
\newcommand{\solplus}{\oplus}
\newcommand{\soltimes}{\otimes}
\newcommand{\bigsolplus}{\bigoplus}
\newcommand{\bigsoltimes}{\bigotimes}

\renewcommand{\fieldF}{\mathbb{N}}

In this section our objective is to mimic the approach of the previous section for $\fieldF[(2^U,\cup)]$, where $(2^U,\cup)$ is the semigroup defined by the set union $\cup$ operation on $2^U$, the power set of an $n$-element set $U$. The direct attempt to apply a homomorphic hashing function, unfortunately, fails. Indeed, let $h$ be an arbitrary homomorphism from $(2^U,\cup)$ to $(2^V,\cup)$ with $|V|<|U|$. Let $U=\{e_1,e_2,\ldots,e_n\}$ and consider the minimum value $1\leq j\leq n-1$ with $h(\{e_1,\ldots,e_j\})=\cup_{i=1}^j h(\{e_i\})=\cup_{i=1}^{j+1} h(\{e_i\})=h(\{e_1,\ldots,e_{j+1}\})$; in particular, for $X=\{e_1,\ldots,e_j,e_{j+2},\ldots,e_n\}\neq U$ we have $h(X)=h(U)$, which signals failure since we cannot isolate $X$ from $U$. 

Instead, we use hashing to an algebraic structure based on a poset (the ``Solomon algebra'' of a poset due to~\cite{Solomon1967603}) that is obtained by the technique ``Iterative Compression''. This gives the following main result. For reasons of space we relegate a detailed proof to the appendix; here we will give a simplified version of the proof in the special case of Theorem~\ref{thm:maincnf} in this section.


\begin{thm}[\omitted]
\label{thm:unionhashsmallsupp}
Let and $|U|=n$. There are algorithms that, given a circuit $C$ with singleton inputs in $\fieldF[(2^U,\cup)]$ outputting $\vec{v}$, compute 
\begin{enumerate}[label=(\alph*)]
	\item a list with $v_X$ for every $X \in \supp(\vec{v})$ in $\OHS(|\supp(\vec{v})|^2 \cdot n^{\OH(1)})$ time,
	\item $v_{U}$ in time $\OHS(2^{(1-\alpha/2)n}n^{\OH(1)})$ if $0 < \alpha \leq 1/2$ such that $|\supp(\vec{v})| \leq 2^{(1-\alpha)n}$.
\end{enumerate}
\end{thm}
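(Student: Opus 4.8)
The plan for part (a) is an \emph{iterative compression} over the ground set. Write $U=\{e_1,\dots,e_n\}$ and $U_i=\{e_1,\dots,e_i\}$, and let $\rho_i\colon \fieldF[(2^U,\cup)]\to\fieldF[(2^{U_i},\cup)]$ be the ring homomorphism induced (via Observation~\ref{obs:hom}) by the semigroup homomorphism $Y\mapsto Y\cap U_i$, which is a homomorphism because intersecting with a fixed set commutes with union. Applying $\rho_i$ to $C$ gives, by Observation~\ref{obs:homcir}, a circuit $C_i$ with singleton inputs over $\fieldF[(2^{U_i},\cup)]$ whose output is $\vec v_i:=\rho_i(\vec v)$; here $C_0$ is a scalar circuit outputting $\sum_X v_X$ and $C_n=C$. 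The key structural fact is that all coefficients are non-negative integers, so nothing cancels: for every $S\subseteq U$ one has $\supp(\rho_S(\vec v))=\{X\cap S:X\in\supp(\vec v)\}$. Hence $|\supp(\vec v_i)|$ is nondecreasing in $i$ and never exceeds $|\supp(\vec v)|$, and $\supp(\vec v_{i+1})\subseteq\mathcal C_{i+1}:=\{Z,\,Z\cup\{e_{i+1}\}:Z\in\supp(\vec v_i)\}$, a candidate family of size at most $2|\supp(\vec v)|$.

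I would then maintain, for $i=0,1,\dots,n$, the support $\supp(\vec v_i)$ together with the coefficients $(\vec v_i)_Z$ for $Z\in\supp(\vec v_i)$, advancing from $i$ to $i+1$ as follows. For $Y\subseteq U_{i+1}$, the map $\zeta_Y\colon\vec a\mapsto\sum_{W\subseteq Y}a_W$ is a ring homomorphism $\fieldF[(2^{U_{i+1}},\cup)]\to\fieldF$: it is clearly additive, and it turns union into multiplication since $\sum_{W\subseteq Y}\sum_{W_1\cup W_2=W}a_{W_1}b_{W_2}=\bigl(\sum_{W_1\subseteq Y}a_{W_1}\bigr)\bigl(\sum_{W_2\subseteq Y}b_{W_2}\bigr)$. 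Taken together over all $Y\in\mathcal C_{i+1}$ these maps form a single ring homomorphism into the product ring $\fieldF^{\mathcal C_{i+1}}$; this is precisely the homomorphic hash into the ``Solomon algebra'' of the poset $(\mathcal C_{i+1},\subseteq)$. Applying it to $C_{i+1}$ and evaluating the resulting circuit costs $\OHS(|\mathcal C_{i+1}|)=\OHS(|\supp(\vec v)|)$ and yields $\zeta_Y(\vec v_{i+1})$ for every $Y\in\mathcal C_{i+1}$. Because $\supp(\vec v_{i+1})\subseteq\mathcal C_{i+1}$, these numbers satisfy the triangular system $\zeta_Y(\vec v_{i+1})=\sum_{W\in\mathcal C_{i+1},\,W\subseteq Y}(\vec v_{i+1})_W$, which I solve by forward substitution over $\mathcal C_{i+1}$ sorted by cardinality, in $\OHS(|\supp(\vec v)|^2)$ time, thereby recovering $\supp(\vec v_{i+1})$ and its coefficients. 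After $n$ rounds this outputs $\supp(\vec v_n)=\supp(\vec v)$ with all coefficients, for a total of $\OHS(n\cdot|\supp(\vec v)|^2)=\OHS(|\supp(\vec v)|^2\cdot n^{\OH(1)})$ time (and storage proportional to $|\supp(\vec v)|$).

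For part (b) the per-round triangular solve makes the above $\Theta(|\supp(\vec v)|^2)$, which is $\ge 2^n$ exactly in the regime $\alpha\le 1/2$; a balanced split is needed instead. I would partition $U=A\sqcup B$ with $|A|=\lceil(1-\alpha/2)n\rceil$ and $|B|=\lfloor(\alpha/2)n\rfloor$, noting that in the extremal case $|\supp(\vec v)|=2^{(1-\alpha)n}$ one has $2^{|A|}=|\supp(\vec v)|\cdot 2^{|B|}=2^{(1-\alpha/2)n}$, so the budget amounts to ``$|\supp(\vec v)|$ work per subset of $B$''. One ingredient is cheap: the full restriction $\rho_A(\vec v)\in\fieldF[(2^A,\cup)]$ can be computed in $\OHS(2^{|A|})$ time, because its zeta-transform value at $Z\subseteq A$ equals $\zeta_{Z\cup B}(\vec v)$ --- a single scalar circuit evaluation --- after which a fast M\"obius transform over $A$ recovers $\rho_A(\vec v)$ and hence $\supp(\rho_A(\vec v))$ of size at most $|\supp(\vec v)|$. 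The remaining task is to ``de-restrict'' over $B$: one checks that if $\vec v_W:=h_W(\vec v)$, where $h_W$ is the homomorphism replacing each input $\st{w}{Y}$ by $\st{w[Y\cap B\subseteq W]}{Y}$, then $\rho_A(\vec v_W)(A)=\sum_{W'\subseteq W}v_{A\cup W'}$, so that M\"obius inversion over $B$ gives $v_U=\sum_{W\subseteq B}(-1)^{|B|-|W|}\rho_A(\vec v_W)(A)$. The plan is to obtain the $2^{|B|}$ numbers $\rho_A(\vec v_W)(A)$ by running the part-(a) machinery over $A$ on the circuits $h_W(C)$, using $|\supp(\rho_A(\vec v_W))|\le|\supp(\vec v)|$ and amortizing the work as $W$ ranges over all subsets of $B$.

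The main obstacle I anticipate is precisely this amortization. Done naively, recomputing $\rho_A(\vec v_W)(A)$ from scratch for each of the $2^{|B|}$ sets $W$ costs either $2^{|B|}\cdot 2^{|A|}=2^n$ (if one uses the brute-force zeta/M\"obius transform over $A$) or $2^{|B|}\cdot\OHS(|\supp(\vec v)|^2)$ (if one uses part (a) per $W$), and both are too slow; making the total $\OHS(|\supp(\vec v)|\cdot 2^{|B|})=\OHS(2^{(1-\alpha/2)n})$ requires genuinely reusing information across the subsets of $B$ --- processing them incrementally so that each step touches only the support, replacing the per-step quadratic triangular solve of part (a) by something near-linear in the relevant (restricted) support --- and then choosing the split point so that the $A$-side and the $B$-side balance. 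This interaction between the split and the sparsity bound $|\supp(\vec v)|\le 2^{(1-\alpha)n}$ is the crux of part (b).
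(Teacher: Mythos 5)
Your part~(a) is correct and is essentially the paper's own proof: the restriction homomorphisms $\rho_i$, the candidate family of size $2|\supp(\vec v_i)|$ from one round to the next, the monotonicity of the support under restriction (which, as you note, uses the nonnegativity of the coefficients --- the paper works over $\fieldF=\mathbb{N}$ in this section), and the per-round step ``evaluate the scalar homomorphism $\zeta_Y$ for each candidate $Y$, then solve the triangular system'' are exactly the paper's compression lemma (Lemma~\ref{thm:unionhashing}, the hash into the Solomon algebra of the poset $(\mathcal{F},\subseteq)$ followed by M\"obius inversion) combined with iterative compression.

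Part~(b) has a genuine gap, and you have located it yourself: the identity $v_U=\sum_{W\subseteq B}(-1)^{|B\setminus W|}\rho_A(\vec v_W)(A)$ is correct, but the proposal contains no method for producing the $2^{|B|}$ numbers $\rho_A(\vec v_W)(A)$ within the budget $\OHS(|\supp(\vec v)|\,2^{|B|})$; the two routes you consider cost $2^n$ and $2^{|B|}|\supp(\vec v)|^2$, and the ``genuine reuse across subsets of $B$'' you call for is precisely the missing content of part~(b). The paper sidesteps the amortization rather than solving it. With $s=|A|$ it first computes $\supp(\vec v^{\,s})$ in $\OHS(2^s)$ time (one scalar circuit evaluation per subset of $U_s$, then Yates' M\"obius transform), and then forms the \emph{single} product poset $P=(\supp(\vec v^{\,s}),\subseteq)\times(2^{U\setminus U_s},\subseteq)$, which has only $|\supp(\vec v^{\,s})|\cdot 2^{n-s}$ elements, and hashes $C$ into the Solomon algebra $\fieldF[P]$. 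Two observations then supply the speedup you were missing. First, each poset zeta-transform value $w_x$ of the hashed output is obtained by a single polynomial-time evaluation of a scalar circuit, so computing all of them costs $\OHS(|P|)$ with no quadratic triangular solve. Second, since only the top coefficient $v_U$ is wanted, one never inverts the whole system: $v_U=\sum_{x\in P}\mu_P(x,\hat 1)\,w_x$, where by the direct-product formula $\mu_P((X_1,X_2),\hat 1)=\mu_{P'}(X_1,U_s)\,(-1)^{|(U\setminus U_s)\setminus X_2|}$, and the values $\mu_{P'}(X_1,U_s)$ for all $X_1$ are computed simultaneously by a Yates-type dynamic program over $2^{U_s}$ in $\OHS(2^s)$ time (Lemma~\ref{lem:yatmob}) rather than by an $|P'|^2$ recursion. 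Balancing $2^s$ against $|\supp(\vec v)|2^{n-s}$ at $s=\lfloor(1-\alpha/2)n\rfloor$ gives the bound, and since $\alpha$ is not known in advance the paper runs all choices of $s$ in parallel and stops when the first one finishes.
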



The above result is stated for simplicity in the unit-cost model, that is, we assume that arithmetic operations on integers take constant time. For the more realistic log-cost model, where such operations are assumed to take time polynomial in the number of bits of the binary representation, we only mention here that our results also hold under some mild technical conditions. Let us first show that Theorem~\ref{thm:maincnf}(a) indeed is a special case of Theorem~\ref{thm:unionhashsmallsupp}:

\begin{proof}[of Theorem~\ref{thm:maincnf}]
Use a circuit over $\mathbb{N}[(2^{[m]},\cup)]$ that implements the expression
\[
 \vec{f} = ( \st{1}{V_1} + \st{1}{\bar V_1} ) * ( \st{1}{V_2} + \st{1}{\bar V_2} )* \ldots * ( \st{1}{V_m} + \st{1}{\bar V_m} ),
\]
where $V_i\subseteq [m]$ (respectively, $\bar V_i \subseteq [m]$) is the set of all indices of clauses that contain a positive (respectively, negative) literal of the variable $v_i$. Then use Theorem~\ref{thm:unionhashsmallsupp} to determine $f_{[m]}$, the number of satisfying assignments of $\phi$.
\end{proof}



Now we proceed with a self-contained proof Theorem~\ref{thm:maincnf}. Given poset $(P,\leq)$, the \emph{M\"obius function} $\mu: P \times P \rightarrow \fieldF$ of $P$ is defined for all $x,y\in P$ by 
\begin{equation}
\label{eq:mob}
	\mu(x,y)=	\begin{cases}
	1 & \text{if } x = y, \\
	-\sum_{x \leq y < z} \mu(y,z) & \text{if } x < z, \\
	0 & \text{otherwise}.
	\end{cases}
\end{equation}
The \emph{zeta transform} $\zt$ and \emph{M\"obius transform} $\mt$ are the $|P| \times |P|$ matrices defined by $\zeta_{x,y}=[x \leq y]$ and $\mu_{x,y}=\mu(x,y)$ for all $x,y\in P$. For a CNF-formula $\phi$ denote $\supp(\phi)$ for the set of all projections of $\phi$. Recall in Theorem~\ref{thm:maincnf} we are given a CNF-Formula $\phi=C^1 \wedge \ldots \wedge C^m$ over $n$ variables. For $i=1,\ldots,m$ define $\phi_i=C_1 \wedge \ldots \wedge C_i$. Then we have the following easy observations

\begin{enumerate}
 \item $\supp(\phi_0)=\{ \emptyset \}$,
 \item $\supp(\phi_i) \subseteq \supp(\phi_{i-1}) \cup \{ X \cup \{i\} : X \in \supp(\phi_{i-1})\}$ for every $i=1,\ldots,m$,
 \item $|\supp(\phi_{i-1})| \leq |\supp(\phi_{i})|$ for every $i=1,\ldots,m$.
\end{enumerate}

Given the above lemma and observations, we will give an algoritm using a technique called \emph{iterative compression} \cite{journals/orl/ReedSV04}. As we will see, by this technique it is sufficient to solve the following ``compression problem'':

\begin{lem}\label{lem:compresssimple}
Given a CNF-formula $\phi=C_1 \wedge \ldots \wedge C_m$ and a set family $\mathcal{F} \subseteq 2^{[m]}$ with $\supp(\phi) \subseteq \mathcal{F}$, the set $\supp(\phi)$ can be constructed in $\OHS(|\mathcal{F}|^2)$ time.  
\end{lem}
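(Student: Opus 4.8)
The plan is to use the M\"obius/zeta machinery on a well-chosen poset built from $\mathcal{F}$, so that computing $\supp(\phi)$ reduces to one zeta transform, one Hadamard-style product evaluation, and one M\"obius transform, each of which costs $\OHS(|\mathcal{F}|^2)$ because the poset has $|\mathcal{F}|$ elements. Concretely, let me recall that $\supp(\phi)$ is exactly the support of the vector $\vec f = (\st{1}{V_1}+\st{1}{\bar V_1})*\cdots*(\st{1}{V_m}+\st{1}{\bar V_m})$ in $\mathbb{N}[(2^{[m]},\cup)]$, where $V_i,\bar V_i$ are the index sets of clauses containing the positive/negative literal of $v_i$ (this is the circuit from the proof of Theorem~\ref{thm:maincnf}, and its support is $\supp(\phi)$ by observation (2) iterated). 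So it suffices to evaluate this union-product restricted to the coordinates indexed by $\mathcal{F}$.

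The key observation that makes this work is the classical fact that the zeta transform turns the union (covering) product into a pointwise product: for $\vec g,\vec h \in \fieldF[(2^{[m]},\cup)]$ one has $(\vec g * \vec h)\zt = (\vec g\zt)\circ(\vec h\zt)$ along the \emph{subset lattice} $(2^{[m]},\subseteq)$, and $\vec f$ itself is recovered from $\vec f\zt$ by applying the M\"obius transform $\mt = \zt^{-1}$. The issue is that the full lattice $2^{[m]}$ has $2^m$ elements, which we cannot afford. The fix: since $\supp(\phi)\subseteq\mathcal{F}$ and each $\vec f[i,\cdot]$ has support inside $\mathcal{F}$ as well (by observations (1)--(2), every partial support sits inside $\supp(\phi_i)\subseteq\supp(\phi)\subseteq\mathcal{F}$), all the relevant data lives on the subposet $(P,\subseteq)$ where $P = \{\,\bigcup_{X\in\mathcal{S}}X : \mathcal{S}\subseteq\mathcal{F}\,\}$ — or more simply we may close $\mathcal{F}$ under union, which at most squares its size and keeps it within $\OHS(|\mathcal{F}|)$ elements if we are a bit careful (alternatively one argues $P$ already has size $|\mathcal{F}|^{\OH(1)}$, or re-runs the argument only needing that $\supp(\vec f[i,\cdot])$ and the singletons $\{V_i\},\{\bar V_i\}$ stay in a fixed poset closed under the unions that actually occur). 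On such a union-closed poset $P$ the identity $(\vec g*\vec h)\zt_{P} = (\vec g\zt_P)\circ(\vec h\zt_P)$ still holds for vectors supported on $P$, because the union of two elements of $P$ is again in $P$.

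So the algorithm is: (1) compute $P$ (the union-closure of $\mathcal{F}$ together with the $2m$ singleton supports), giving a poset of size $N = |\mathcal{F}|^{\OH(1)}$ — I would want $N = \OH(|\mathcal{F}|)$ up to the union-closure step and will need a short argument bounding it, which is the one place to be careful; (2) for each factor $\st{1}{V_i}+\st{1}{\bar V_i}$ compute its zeta transform over $P$ in $\OH(N^2)$ time (indicator of the up-set of $V_i$ plus that of $\bar V_i$); (3) multiply these $m$ zeta-transformed vectors coordinatewise in $\OH(mN)$ time to get $\vec f\zt_P$; (4) apply the M\"obius transform $\mt_P$ over $P$ in $\OH(N^2)$ time to recover $\vec f$ restricted to $P$; (5) output $\supp(\vec f) = \{X\in P : f_X\neq 0\}$, which equals $\supp(\phi)$. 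Total: $\OHS(N^2)=\OHS(|\mathcal{F}|^2)$ time.

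The main obstacle I anticipate is the bookkeeping in step (1)/(2): verifying that the union-product-to-pointwise-product identity is valid when everything is restricted to the subposet $P$ (one must check no ``mass'' escapes $P$, i.e. that $P$ is closed under the unions that occur and that the zeta transform over $P$ — summing over up-sets within $P$ — is compatible with the global one for vectors supported in $P$), and pinning down that $|P|$ stays polynomial in $|\mathcal{F}|$. The cleanest route is probably to avoid closing under \emph{all} unions and instead observe inductively, via observation (2), that $\supp(\vec f[i,\cdot])\subseteq\supp(\phi)\subseteq\mathcal{F}$, so one can simply take $P=\mathcal{F}$ itself (adding the finitely many singleton-support sets and their necessary unions, all of which already lie in $\mathcal{F}$ or are among the $V_i,\bar V_i$), and the identity then only ever needs $X\cup Y\in\mathcal{F}$ whenever $X,Y$ are supports arising in the computation — which holds precisely because those unions are exactly the elements of $\supp(\phi_i)\subseteq\mathcal{F}$. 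Everything else (cost of a zeta/M\"obius transform on an $N$-element poset being $\OH(N^2)$, correctness of $\zt$ linearizing $*$) is standard.
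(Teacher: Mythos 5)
Your algorithm is, in its final form, the same as the paper's: evaluate $(\vec{f}\zt)_Y$ pointwise for each $Y\in\mathcal{F}$ and then M\"obius-invert over the poset $(\mathcal{F},\subseteq)$; your pointwise value $\prod_{i}\bigl([V_i\subseteq Y]+[\bar V_i\subseteq Y]\bigr)$ is exactly the paper's count of assignments satisfying no clause outside $Y$ (factor $0$ for a conflicted variable, $1$ for a forced one, $2$ for a free one). However, the correctness argument you give for restricting to a small poset has a genuine gap, and both of the fixes you sketch fail. First, the claim that every partial product of the circuit has support inside $\supp(\phi)\subseteq\mathcal{F}$ is false: those supports are the sets of \emph{prefix} projections, which are precisely what this lemma is designed to avoid, and observations (1)--(3) give containments in the opposite direction (the supports grow with $i$; nothing places $\supp(\phi_i)$ inside $\supp(\phi)$). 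Second, closing $\mathcal{F}$ under union does not ``at most square its size'': the union-closure of $k$ sets can have $2^k-1$ elements (take $k$ disjoint singletons), so that route destroys the time bound.

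The observation you are missing --- the one the paper's proof turns on --- is that no union-closure and no control of intermediate supports is needed at all. The identity $(\vec g*\vec h)\zt=(\vec g\zt)\circ(\vec g\zt)$ is invoked only over the full lattice $2^{[m]}$, where it holds unconditionally; it yields a closed-form expression for the single scalar $(\vec f\zt)_Y=\sum_{X\subseteq Y}f_X$ that can be evaluated at any one point $Y$ in polynomial time, regardless of where intermediate supports live. The hypothesis $\supp(\phi)\subseteq\mathcal{F}$ enters only at the inversion step: it guarantees $\sum_{X\subseteq Y}f_X=\sum_{X\in\mathcal{F},\,X\subseteq Y}f_X$ for $Y\in\mathcal{F}$, i.e.\ the restriction of the global zeta transform to $\mathcal{F}$ coincides with the zeta transform of the poset $(\mathcal{F},\subseteq)$ applied to $\vec f$ restricted to $\mathcal{F}$, so the $\OHS(|\mathcal{F}|^2)$-time M\"obius inversion over $(\mathcal{F},\subseteq)$ recovers $f_X$ for all $X\in\mathcal{F}\supseteq\supp(\phi)$. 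With that substitution your argument becomes exactly the paper's proof.
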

\begin{proof}
In what follows $\vec{a}\in\{0,1\}^n$ refers to an assignment of 
values to the $n$ variables in $\phi$.
Define $\vec{f} \in \mathbb{N}^{2^{[m]}}$ for all $X\subseteq [m]$ by
	\[
		f_X = |\{ \vec{a} \in \{0,1\}^n : \forall i \in [m] \text{ it holds that }  \vec{a} \text{ satisfies } C_i \text{ iff } i \in X \}|.
	\]
It is easy to see that $\supp(\vec{f})=\supp(\phi)$, so if we know $f_X$ for every $X \in \mathcal{F}$ we can construct $\supp(\phi)$ in $|\mathcal{F}|$ time. Towards this end, first note that for every $Y \subseteq [m]$ it holds that
\begin{align*}
(\vec{f} \zt)_Y &=\sum_{\substack{X \in \supp(\vec{f}) \\ X \subseteq Y}}f(X)= \sum_{X \subseteq Y} f(X)\\
&\ = |\{ \vec{a} \in \{0,1\}^n : \forall i \in [m] \text{ it holds that } \vec{a} \text{ satisfies } C_i \text{ only if } i \in Y \}|.
\end{align*}
Second, note that the last quantity can be computed in polynomial time: since every clause outside $Y$ must not be satisfied, each such clause forces the variables that occur in it to unique values; any other variables may be assigned to arbitrary values. That is, the count is 0 if the clauses outside $Y$ force at least one variable to conflicting values, otherwise the count is $2^a$ where $a$ is the number of variables that occur in none of the clauses outside $Y$. 

Now the algorithm is the following: for every $X \in \mathcal{F}$ compute $(\vec{f}\zt)_X$ in polynomial time as discussed above. Then we can use algorithm $\mathtt{mobius}$ as described below to obtain $f_X$ for every $X \in \mathcal{F}$ since it follows that $\vec{f}=\mathtt{mobius}((\mathcal{F},\subseteq),\vec{f}\zt)$ from the definition of $\mt$ and the fact that $\mt\zt=\mat{I}$. Algorithm $\mathtt{mobius}$ clearly runs in $\OHS(|P|^2)$ time, so this procedure meets the claimed time bound.
\vspace{-1em}
\begin{algorithm}[H]
\begin{algorithmic}[1]
\REQUIRE $\mathtt{mobius}((P,\leq),\vec{w})$
		\STATE Let $P=\{v_1,v_2\ldots,v_{|P|}\}$ such that $v_i \leq v_i$ implies $i \geq j$.
		\STATE $\vec{z} \leftarrow \vec{w}$.
		\FOR{$i= 1,2,\ldots,|P|$}
			\FOR{every $v_j \leq v_i$}
				\STATE $z_i = z_i - z_j$
			\ENDFOR
		\ENDFOR
		\RETURN $\vec{z}$.
\end{algorithmic}
\end{algorithm}
\vspace{-2.5em}
\end{proof}

\begin{proof}[of Theorem~\ref{thm:maincnf}, self-contained]
 Recall that we already know that $\supp(\phi_0)=\{\emptyset\}$. Now, for $i = 1,\ldots,m$ we set $\mathcal{F} = \supp(\phi_{i-1}) \cup \{ X \cup \{i\} : X \in \supp(\phi_{i-1})\}$ and use $\mathcal{F}$ to obtain $\supp(\phi_{i-1})$ using Lemma~\ref{lem:compresssimple}. In the end we are given $\supp(\phi_m)$ and since $\phi_m$ is exactly the original formula, the input is a yes-instance if and only if $[m] \in \supp(\phi_m)$. The claimed running time follows from Observations 1 and 3 above and the running time of algorithm $\mathtt{mobius}$.
\end{proof}

\subsubsection{Set Cover}

We will now give an application of Theorem~\ref{thm:unionhashsmallsupp}(b) to {\sc Set Cover}: Given a set family $\mathcal{F} \subseteq 2^U$ where $|U|=n$ and an integer $k$, find a subfamily $\mathcal{C} \subseteq \mathcal{F}$ such that $|\mathcal{C}|= k$ and $\bigcup_{S \in \mathcal{C}}S=U$.

\begin{thm} Given an instance of {\sc Set Cover}, let $0 < \alpha \leq 1/2$ be the largest real such that $|\{\bigcup_{S \in C}S: \mathcal{C} \subseteq \mathcal{F} \wedge |\mathcal{C}|=k \}| \leq 2^{(1-\alpha)n}$. Then the instance can be solved in $\OHS(2^{(1 - \alpha/2)n}n^{\OH(1)})$ time (and exponential space).
\end{thm}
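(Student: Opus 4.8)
The plan is to reduce the instance to a single application of Theorem~\ref{thm:unionhashsmallsupp}(b). I would build a circuit $C$ with singleton inputs over $\fieldF[(2^U,\cup)]$ (with $\fieldF=\bbn$, as in the rest of this section) whose output $\vec{v}$ satisfies, for every $X\subseteq U$,
\[
 v_X=\bigl|\{\,\mathcal{C}\subseteq\mathcal{F}\ :\ |\mathcal{C}|=k\ \text{ and }\ \bigcup_{S\in\mathcal{C}}S=X\,\}\bigr|.
\]
Then $\supp(\vec{v})=\{\bigcup_{S\in\mathcal{C}}S:\mathcal{C}\subseteq\mathcal{F},\ |\mathcal{C}|=k\}$, so by the choice of $\alpha$ we have $|\supp(\vec{v})|\le 2^{(1-\alpha)n}$, and Theorem~\ref{thm:unionhashsmallsupp}(b) computes $v_U$ in $\OHS(2^{(1-\alpha/2)n}n^{\OH(1)})$ time and exponential space. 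Since $v_U$ is exactly the number of set covers of size $k$, the instance is a yes-instance if and only if $v_U\ge 1$, which yields the theorem. (If $k>|\mathcal{F}|$ the answer is trivially no, so we may assume $k\le|\mathcal{F}|$.)

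It remains to describe the circuit. I would use a layered dynamic program that also keeps track of how many sets have been chosen. Fix an enumeration $\mathcal{F}=\{S_1,\dots,S_{|\mathcal{F}|}\}$ and, for $0\le i\le|\mathcal{F}|$ and $0\le j\le\min(i,k)$, let $\vec{g}^{(i)}_j\in\fieldF[(2^U,\cup)]$ be defined by $\vec{g}^{(0)}_0=\st{1}{\emptyset}$ and
\[
 \vec{g}^{(i)}_j=\vec{g}^{(i-1)}_j+\vec{g}^{(i-1)}_{j-1}*\st{1}{S_i},
\]
where a summand is dropped whenever its lower index leaves the admissible range (so that the zero vector, which is not a singleton, never appears as an input label). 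An easy induction on $i$ shows that $(g^{(i)}_j)_X$ is the number of subfamilies of $\{S_1,\dots,S_i\}$ of size exactly $j$ whose union is $X$; in particular $\vec{v}:=\vec{g}^{(|\mathcal{F}|)}_k$ has the property stated above. The circuit realizing this recurrence has $\OH(|\mathcal{F}|\,k)$ gates, uses only the singleton input labels $\st{1}{\emptyset}$ and $\st{1}{S_i}$, and is computable from the instance in polynomial time; hence it is a legal input for Theorem~\ref{thm:unionhashsmallsupp}(b).

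For a statement of this corollary-like flavour I do not anticipate a real obstacle; the one point requiring care is to make $\supp(\vec{v})$ coincide with the set of size-$k$ unions and nothing more. This is precisely why I track a separate vector $\vec{g}^{(i)}_j$ for each cardinality $j$ rather than simply taking the product $\prod_{S\in\mathcal{F}}(\st{1}{\emptyset}+\st{1}{S})$: the latter outputs a vector whose support is the collection $\{\bigcup_{S\in\mathcal{C}}S:\mathcal{C}\subseteq\mathcal{F}\}$ of \emph{all} achievable unions, which can be strictly larger than the set of size-$k$ unions and need not obey the hypothesis on $\alpha$. Everything else — the support bound, the legality of $C$, and the time and space bound — then follows immediately by plugging $C$ into Theorem~\ref{thm:unionhashsmallsupp}(b).
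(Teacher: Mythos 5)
Your proposal is correct and follows essentially the same route as the paper: the paper defines the identical recurrence $\vec{f[i,j]}=\vec f[i-1,j] + \vec f[i-1,j-1]* \st{1}{S_i}$ with $\vec{f[0,0]}=\st{1}{\emptyset}$, interprets it as a circuit with singleton inputs over $\fieldF[(2^U,\cup)]$, and applies Theorem~\ref{thm:unionhashsmallsupp}(b) to extract $f[m,k]_U$. Your extra remarks (dropping out-of-range summands so no zero labels appear, and why the cardinality index $j$ must be tracked to keep the support equal to the set of size-$k$ unions) are correct points that the paper leaves implicit.
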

\begin{proof}
Let $\mathcal{F}=\{S_1,\ldots,S_m\}$, and for every $1\leq i \leq m$ and $1 \leq j \leq n$ define $\vec{f[i,j]}$ as follows:
\[
\vec{f[i,j]}=
\begin{cases}
 \st{1}{\emptyset}& \text{if } i=j=0,\\
 \vec f[i-1,j] + \vec f[i-1,j-1]* \st{1}{S_i} & \text{otherwise.}
\end{cases}
\]
It is easy to see that for every $X \subseteq U$ we have that $f[i,j]_X$ is the number of $\mathcal{C} \subseteq \{S_1,\ldots,S_i\}$ such that $|\mathcal{C}|=j$ and $\bigcup_{S \in \mathcal{C}}S=X$. Hence the theorem follows directly by applying Theorem~\ref{thm:unionhashsmallsupp}(b) by interpreting the above recurrence as a circuit in order to determine the value $f[m,k]_U$.
\end{proof}




\bibliographystyle{abbrv}
{\bibliography{hashing}}

\begin{thebibliography}{10}

\bibitem{agrawal2004primes}
M.~Agrawal, N.~Kayal, and N.~Saxena.
\newblock {PRIMES} is in {P}.
\newblock {\em Ann. of Math. (2)}, 160(2):781--793, 2004.

\bibitem{Alon10solvingmax-r-sat}
N.~Alon, G.~Gutin, E.~J. Kim, S.~Szeider, and A.~Yeo.
\newblock Solving {MAX}-{$r$}-{SAT} above a tight lower bound.
\newblock {\em Algorithmica}, 61(3):638--655, 2011.

\bibitem{DBLP:conf/focs/Bjorklund10}
A.~Bj{\"o}rklund.
\newblock Determinant sums for undirected {H}amiltonicity.
\newblock In {\em FOCS}, pages 173--182. IEEE Computer Society, 2010.

\bibitem{DBLP:journals/siamcomp/BjorklundHK09}
A.~Bj{\"o}rklund, T.~Husfeldt, and M.~Koivisto.
\newblock Set partitioning via inclusion-exclusion.
\newblock {\em SIAM J. Comput.}, 39(2):546--563, 2009.

\bibitem{DBLP:conf/birthday/Moore91}
R.~S. Boyer and J.~S. Moore.
\newblock Mjrty: A fast majority vote algorithm.
\newblock In {\em Automated Reasoning: Essays in Honor of Woody Bledsoe}, pages
  105--118, 1991.

\bibitem{DBLP:journals/corr/abs-1104-1135}
R.~Crowston, G.~Gutin, M.~Jones, and A.~Yeo.
\newblock Lower bound for {Max-$r$-Lin2} and its applications in algorithmics
  and graph theory.
\newblock {\em CoRR}, abs/1104.1135, 2011.

\bibitem{Eppstein:1992:SDP:146637.146650}
D.~Eppstein, Z.~Galil, R.~Giancarlo, and G.~F. Italiano.
\newblock Sparse dynamic programming {I}: linear cost functions.
\newblock {\em J. ACM}, 39:519--545, 1992.

\bibitem{EppGalGia-JACM-92-II}
D.~Eppstein, Z.~Galil, R.~Giancarlo, and G.~F. Italiano.
\newblock {Sparse dynamic programming II: convex and concave cost functions}.
\newblock {\em J. ACM}, 39(3):546--567, 1992.

\bibitem{Fominbook}
F.~V. Fomin and D.~Kratsch.
\newblock {\em Exact Exponential Algorithms}.
\newblock Springer-Verlag New York, Inc., New York, NY, USA, 1st edition, 2010.

\bibitem{harnik:1667}
D.~Harnik and M.~Naor.
\newblock On the compressibility of {NP} instances and cryptographic
  applications.
\newblock {\em SIAM Journal on Computing}, 39(5):1667--1713, 2010.

\bibitem{Hastad01}
J.~H{\aa}stad.
\newblock Some optimal inapproximability results.
\newblock {\em J. ACM}, 48:798--859, 2001.

\bibitem{Horowitz74}
E.~Horowitz and S.~Sahni.
\newblock Computing partitions with applications to the knapsack problem.
\newblock {\em J. ACM}, 21:277--292, 1974.

\bibitem{DBLP:journals/jcss/ImpagliazzoP01}
R.~Impagliazzo and R.~Paturi.
\newblock On the complexity of {$k$-SAT}.
\newblock {\em J. Comput. Syst. Sci.}, 62(2):367--375, 2001.

\bibitem{DBLP:conf/icalp/KoutisW09}
I.~Koutis and R.~Williams.
\newblock Limits and applications of group algebras for parameterized problems.
\newblock In {\em ICALP 2009}.

\bibitem{lipton10}
R.~J. Lipton.
\newblock Beating {B}ellman for the knapsack problem.
\newblock 2010,
  \href{http://rjlipton.wordpress.com/2010/03/03/beating-bellman-for-the-knapsack-problem/}{http://rjlipton.wordpress.com}.

\bibitem{DBLP:conf/stoc/LokshtanovN10}
D.~Lokshtanov and J.~Nederlof.
\newblock Saving space by algebraization.
\newblock In {\em Proceedings of the 42nd ACM Symposium on Theory of
  Computing}, STOC'10, pages 321--330, New York, 2010. ACM.

\bibitem{DBLP:journals/siamcomp/Mansour95}
Y.~Mansour.
\newblock Randomized interpolation and approximation of sparse polynomials.
\newblock {\em SIAM J. Comput.}, 24(2):357--368, 1995.

\bibitem{DBLP:conf/icalp/Nederlof09}
J.~Nederlof.
\newblock Fast polynomial-space algorithms using {M}{\"o}bius inversion:
  Improving on {S}teiner tree and related problems.
\newblock In {\em ICALP 2009}.

\bibitem{journals/orl/ReedSV04}
B.~A. Reed, K.~Smith, and A.~Vetta.
\newblock Finding odd cycle transversals.
\newblock {\em Oper. Res. Lett.}, 32(4):299--301, 2004.

\bibitem{rosser1941explicit}
B.~Rosser.
\newblock Explicit bounds for some functions of prime numbers.
\newblock {\em American Journal of Mathematics}, 63(1):211--232, 1941.

\bibitem{DBLP:journals/siamcomp/SchroeppelS81}
R.~Schroeppel and A.~Shamir.
\newblock A {$T=O(2^{n/2})$}, {$S=O(2^{n/4})$} algorithm for certain
  {NP}-complete problems.
\newblock {\em SIAM J. Comput.}, 10(3):456--464, 1981.

\bibitem{Solomon1967603}
L.~Solomon.
\newblock The burnside algebra of a finite group.
\newblock {\em Journal of Combinatorial Theory}, 2(4):603 -- 615, 1967.

\bibitem{stanleyenumerative}
R.~Stanley.
\newblock {\em Enumerative Combinatorics}, volume~1.
\newblock Cambridge University Press, second edition, 2011.
\newblock
  \href{http://www-math.mit.edu/~rstan/ec/ec1/}{http://www-math.mit.edu/$~$rstan/ec/ec1/}.

\bibitem{Traxler10}
P.~Traxler.
\newblock {\em Exponential Time Complexity of SAT and Related Problems}.
\newblock PhD thesis, ETH Z{\"u}rich, 2010.

\bibitem{DBLP:journals/ipl/Williams09}
R.~Williams.
\newblock Finding paths of length $k$ in {$\mathcal{O}^\star(2^{k})$} time.
\newblock {\em Inf. Process. Lett.}, 109(6):315--318, 2009.

\bibitem{DBLP:journals/dam/Woeginger08}
G.~J. Woeginger.
\newblock Open problems around exact algorithms.
\newblock {\em Discrete Applied Mathematics}, 156(3):397--405, 2008.

\bibitem{yates1937design}
F.~Yates.
\newblock {\em The design and analysis of factorial experiments}.
\newblock Brasilia: EMBRAPA-DMQ, 1937.

\bibitem{zippel}
R.~Zippel.
\newblock Probabilistic algorithms for sparse polynomials.
\newblock In E.~Ng, editor, {\em Symbolic and Algebraic Computation}, volume~72
  of {\em Lecture Notes in Computer Science}, pages 216--226. Springer Berlin /
  Heidelberg, 1979.

\end{thebibliography}

\appendix

\clearpage

\begin{center}
{\LARGE\bf APPENDIX}
\end{center}

\section{Omitted proofs}

\subsection{Proofs omitted in Section~\ref{sec:sss}}

\subsubsection{Proof of Corollary~\ref{cor:ln}}\label{app:proofs}

For the proof we use the following result as a blackbox:
\begin{thm}[\cite{DBLP:conf/stoc/LokshtanovN10}]
\label{thm:ln}
Given an instance $(\vec{a},t)$ of \subsetsum,
$c(\vec{a})_t$ can be computed in $\OHS(t)$ time and $\OHS(1)$ space.
\end{thm}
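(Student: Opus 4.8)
The plan is to realize $c(\vec{a})_t$ as the extraction of a single coefficient of a product polynomial, and then to extract that coefficient by a discrete Fourier transform over a cyclic group whose order can be kept small — small enough that the Fourier sum has only $\OHS(t)$ terms — after which the transform is evaluated term by term using only $\OHS(1)$ working memory. Concretely, writing $f(x)=\prod_{i=1}^n(1+x^{a_i})$, we have $c(\vec{a})_t=[x^t]f(x)$; this is a polynomial of degree $\Sigma:=\sum_i a_i$ whose coefficients are nonnegative integers bounded by $2^n$.

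First, a preprocessing step: any $a_i>t$ lies in no subset summing to $t$ (all entries are nonnegative), so we discard every such index. After this each remaining $a_i$ satisfies $0\le a_i\le t$, so the new $\Sigma=\sum_{\text{remaining}} a_i$ satisfies $\Sigma\le nt$, and the new $f$ has the same coefficient at $x^t$ as the original. (If $t<0$ the answer is $0$, and the degenerate cases are trivial.) Set $m=\Sigma+1\le nt+1$ and let $\omega=e^{2\pi i/m}$ be a primitive $m$-th root of unity. Since $\deg f=\Sigma<m$, the standard DFT inversion formula of length $m$ gives, with no aliasing,
\[
c(\vec{a})_t \;=\; \frac1m\sum_{j=0}^{m-1}\omega^{-jt}\,f(\omega^j)\;=\;\frac1m\sum_{j=0}^{m-1}\omega^{-jt}\prod_{i=1}^n\bigl(1+\omega^{j a_i}\bigr),
\]
because $t$ is the unique exponent in $\{0,\dots,m-1\}$ that is $\equiv t \bmod m$ and that carries a nonzero coefficient of $f$.

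For the implementation we work over $\mathbb{C}$ with $B=O(n+\log t)$ bits of precision. For $j=0,1,\dots,m-1$ in turn we form $P_j=\prod_i(1+\omega^{j a_i})$ as a running product (each factor uses $\omega^{j a_i}=(\omega^j)^{a_i}$, computed by repeated squaring in $O(\log t)$ operations), and we add $\omega^{-jt}P_j$ into a single accumulator; at the end we divide by $m$ and round to the nearest integer. Since $|1+\omega^{j a_i}|\le 2$ we have $|P_j|\le 2^n$, so the accumulated sum has magnitude at most $m\cdot 2^n$ while its exact value $c(\vec{a})_t$ is an integer at most $2^n$; hence $B$ bits of precision suffice for the final rounding to be provably correct. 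The number of arithmetic operations is $O(mn\log t)=O(n^2t\log t)$, each on $B$-bit numbers, i.e. $\OHS(t)$ time; the working memory stores $\omega$, the loop counter, the running product, and the accumulator — $O(1)$ numbers of $B$ bits each, i.e. $\OHS(1)$ space. (Alternatively one works in a prime field $\mathbb{F}_q$ with $m\mid q-1$ and $q>2^n$, using an $m$-th root of unity of $\mathbb{F}_q^\ast$, which avoids floating point at the cost of locating $q$ and the root.)

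The step that needs care is keeping the Fourier length within budget: the naive choice $m=t+1$ would make the sum return $\sum_{k\ge 0}c(\vec{a})_{t+k(t+1)}$ rather than $c(\vec{a})_t$, so one genuinely needs $m>\Sigma$; it is precisely the preprocessing (deleting all $a_i>t$) that forces $\Sigma\le nt$ and hence $m=O(nt)=\OHS(t)$. The only remaining nuisance is the numerical bookkeeping — bounding $|P_j|$ and choosing $B$ so that the final rounding is certified — which is routine once the magnitude bounds above are in hand.
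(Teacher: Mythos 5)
The paper does not prove this statement at all---it imports it verbatim from \cite{DBLP:conf/stoc/LokshtanovN10} as a black box (see the appendix, where it is restated before the proof of Corollary~\ref{cor:ln})---and your argument is correct and is essentially the standard proof from that source: write $c(\vec{a})_t=[x^t]\prod_i(1+x^{a_i})$, discard all $a_i>t$ so the degree $\Sigma\leq nt$ is $\OHS(t)$, take transform length $m=\Sigma+1$ to rule out aliasing, and evaluate $\frac{1}{m}\sum_{j=0}^{m-1}\omega^{-jt}\prod_i(1+\omega^{ja_i})$ pointwise with $O(1)$ registers of $O(n+\log t)$ bits each. The only loose ends, which you correctly flag as routine, are the trivial case $t>\Sigma$ (output $0$ directly, since otherwise $t\bmod m$ could alias onto a live coefficient) and the fixed-point error propagation through the repeated squarings and $n$-fold products needed to certify the final rounding.
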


\begin{proof}[of Corollary~\ref{cor:ln}]
By reducing modulo $p$ we can assume that $a_1,\ldots,a_n,t < \modulus$. We have
\[
	c^p(\vec{a})_t = \sum_{i \equiv t (\mod p)}^{n} c(\vec{a})_{i} =  \sum_{j=0}^{n} c(\vec{a})_{t+jp},
\]
where the first equality holds by definition and the second equality follows from the fact that $c(\vec{a})_{i}=0$ when $i > np$ or $0<i$. The latter expression can be evaluated in the claimed resource bounds using Theorem \ref{thm:ln}.
\end{proof}

\subsubsection{Proof of Theorem~\ref{thm:sam}}
Take an integer $i \leq u$ uniformly at random, and check whether it is a prime using the polynomial time algorithm of \cite{agrawal2004primes}. If $i$ is prime, then output $i$ and halt; otherwise repeat. If no prime is found after $\ln(u)+2$ repetitions, output $\mathtt{notfound}$ and halt. The upper bound on the probability of failure follows from Theorem \ref{thm:pnt} since the probability that $\mathtt{notfound}$ is returned is at most $(1-(\ln(u)+2)^{-1})^{\ln(u)+2} \leq \frac{1}{e}$.

\subsubsection{Proof of Theorem~\ref{thm:mainsss}(b))}
Let $\beta = \lceil \log \max\{t,\max_{i}a_i\} \rceil$. Call a prime $p$ \emph{bad} if $c(\vec{a})_t \neq c^\modulus(\vec{a})_t$. Analoguously to the proof of Lemma \ref{lem:hashsss}, there are at most $(\beta + \lceil\log n\rceil)\suppbound$ bad primes. Let $p_1,\ldots,p_l$ be the first $l$ prime numbers in increasing order where $l= 2(\beta + \lceil\log n\rceil)\suppbound+1$. Since there are only $(\beta + \lceil\log n\rceil)S$ bad primes, the majority of the set of integers $\{c^{p_i}(\vec{a})_t\}_{1\leq i \leq l}$ will be equal to $c(\vec{a})_t$. Then use the folklore Majority voting algorithm~\cite{DBLP:conf/birthday/Moore91} as in Algorithm~\ref{alg:prfapp}:

\begin{algorithm}
\label{alg:prfapp}
\begin{algorithmic}[1] 
\STATE $M,C,i,j \leftarrow 0$.
\WHILE{$i \leq l$}
	\STATE $j \leftarrow j + 1$
	\IF{the deterministic primality testing from \cite{agrawal2004primes} returns that $j$ is prime}
	\STATE $i \leftarrow i + 1$; $p_i\leftarrow j$
		\STATE Compute $c^{p_i}(\vec{a})_t$ using Corollary \ref{cor:ln}.
		\IF{$M=0$}
			\STATE $C \leftarrow c^{p_i}(\vec{a})_t$
		\ELSIF{$c^{p_i}(\vec{a})_t=C$}
			\STATE  $M \leftarrow M+1$
		\ELSE 
	 	\STATE $M \leftarrow M-1$
		\ENDIF
	\ENDIF
\ENDWHILE
\RETURN $C$
\end{algorithmic}
\end{algorithm}

The correctness follows from the above discussion and the correctness of the majority voting algorithm~\cite{DBLP:conf/birthday/Moore91} that is folklore and implemented in the algorithm. For the running time, note that $p_l$ is $\OHS(\suppbound)$ by Theorem \ref{thm:pnt} and hence the running time is $\OHS(\suppbound^2)$ since Step 6 takes time $\OHS(\suppbound)$. It is clear that the algorithm can be implemented using polynomial space.

\subsection{Proofs omitted in Section~\ref{sec:setpart}}
\subsubsection{Proof of Claim~\ref{clm:one}}
  	For every $\vec{a},\vec{b}\in \fieldF[\bbz_2^n]$ we have $(\vec{a}+\vec{b})\mat{H}=\vec{a}\mat{H} + \vec{b}\mat{H}$ and hence $h$ is easily seen to be a homomorphism by Observation~\ref{obs:hom}. Thus, by Observation \ref{obs:homcir} we know that $\vec{w}=h(\vec{v})$, that is, for every $\vec{z} \in \bbz_2^s$
  	\[
  		\vec w_{\vec{z}} = \sum_{\vec{y}\in\bbz_2^n:\vec{y}\mat{H}=\vec{z}}\vec v_{\vec{y}}.
  	\]
  	Hence, if $\vec v_{\vec t}\neq\vec w_{\vec{t}\mat{H}}$, there must exist $\vec{y} \in \supp(\vec{v})$ such that $\vec{y} \neq \vec{t}$ and $\vec{y}\mat{H}=\vec{t}\mat{H}$. Equivalently, $(\vec{y}-\vec{t})\mat{H}=\vec{0}$. For any $\vec x\in\bbz_2^n$ with $\vec{x}\neq \vec{0}$ we have
  	\[
  		\prob_{\mat{H}}[\vec{x}\mat{H} = 0]=\prod_{i=1}^s\prob_{\mat{H}}[(\vec{x}\mat{H})_i = 0]=2^{-s},
  	\]
  	where the probability is taken uniform over all binary $s \times n$ matrices, and the two equalities follow from the fact that the random variables $(\vec{x}\mat{H})_i$ for $i=1,\ldots,s$ are independent and uniformly distributed. Now the claim follows by taking the union bound over all elements in the support:
  	\[
  		\prob[\vec v_{\vec t}\neq\vec w_{\vec{t}\mat{H}}] \leq \prob[\exists \vec{x} \in \supp(\vec{v}):\ \vec{x}\neq \vec{t} \wedge \vec{x}\mat{H}=\vec{t}\mat{H}] \leq |S|2^{-s} \leq \frac{1}{2}.
  	\]

\subsubsection{Proof of Claim~\ref{clm:two}}
   Let $\vec{b} \in \fieldF[\bbz_2^s]$ such that $\vec{b}_{\vec{x}}=\mathtt{sub}(C_1,\vec{x})$ for every $\vec{x} \in \bbz_2^s$. It suffices to show that $\vec{b}=\vec{w}\mat{\Phi}$ since $\mathtt{hashZ2}$ returns $\frac{1}{2^s}(\vec{b}\mat{\Phi})_{\vec{t}\mat{H}}$ as can bee seen from Line 4, and this is equal to $\vec w_{\mat H\vec t}$ by Lemma \ref{lem:wht}. For proving that $\vec{b}=\vec{w}\mat{\Phi}$, we first claim that $\varphi$ is a homomorphism from $\fieldF[\bbz_2^s]$ to $\fieldF$ since, for $\vec{a},\vec{b} \in \fieldF[\bbz_2^s]$, we have that $\varphi(\vec{a}+\vec{b})$ equals
\[
 \sum_{\vec{y} \in \bbz_{2}^s}(-1)^{\vec{x}\vec{y}^T}(a_{\vec{y}}+b_{\vec{y}})= \sum_{\vec{y} \in \bbz_{2}^s}(-1)^{\vec{x}\vec{y}^T}a_{\vec{y}} + \sum_{\vec{y} \in \bbz_{2}^s}(-1)^{\vec{x}\vec{y}^T}b_{\vec{y}}= \varphi(\vec{a})+\varphi(\vec{b}),
\]
and $\varphi(\vec{a} * \vec{b})$ equals
\[ 
\sum_{\vec{y} \in \bbz_2^s}(-1)^{{\vec{x} \vec{y}^T}} \sum_{\vec{y_1}+\vec{y_2}=\vec{y}}a_{\vec{y_1}}b_{\vec{y_2}} = \sum_{\vec{y_1} \in \bbz_2^s}(-1)^{\vec{x}\vec{y_1}^T}a_{\vec{y_1}} \sum_{\vec{y_2} \in \bbz_2^s}(-1)^{\vec{x}\vec{y_2}^T}b_{\vec{y_2}} = \varphi(\vec a) \varphi(\vec b).
\]
Then, by Observation \ref{obs:homcir}, $\mathtt{sub}(C_1,\vec{x})$ returns $\varphi(\vec w)$. For $\vec{a} \in \bbz[\bbz_2^s]$ we have $\varphi(\vec{a})=(\vec{a}\mat{\Phi})_{\vec{x}}$ so $\mathtt{sub}(C_1,\vec{x})=(\vec{w}\mat{\Phi})_{\vec{x}}$ and hence $\vec{b}=\vec{w}\mat{\Phi}$.

\subsubsection{Proof of Theorem~\ref{thm:setpart5}}

\begin{proof}[of Theorem~\ref{thm:setpart5}.a)]
Use Theorem~\ref{thm:mainls}. Assume $\mathcal{F}=\{S_1,\ldots,S_n\}$ and create the instance $(\mat{A},\vec{b},\vec{\omega},t')$ of \textsc{Linear Sat} where $\mat{A}$ is the incidence matrix of the set system $(\mathcal{F},U)$, $\vec{b}=1$, $\omega_i=|S_i|n+1$ and $t'=nm+t$. It is easy to see that the algorithm of Theorem~\ref{thm:mainls} returns exactly the number of set partitions of size at most $t$
\end{proof}
For the second part we will need the following folklore result.
\begin{thm}[Fast Walsh-Hadamard transform, Folklore]
\label{thm:fastwht}
Given a vector $\vec{a} \in \fieldF[\bbz_2^s]$, $\vec{a}\mat{\Phi}$ can be computed in time $\mathcal{O}(2^ss)$ and using $\mathcal{O}(2^ss)$ operations in $\fieldF$.
\end{thm}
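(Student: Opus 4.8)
\textbf{Proof plan (Theorem~\ref{thm:fastwht}).}
The plan is to exploit the Kronecker-product (tensor) structure of the Walsh--Hadamard matrix and run the standard FFT-style divide-and-conquer ``butterfly'' recursion, specialised to the group $\bbz_2^s$.

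First I would record the relevant factorisation. Split each index $\vec{y}\in\bbz_2^s$ as $\vec{y}=(y_1,\vec{y}')$ with $y_1\in\bbz_2$ and $\vec{y}'\in\bbz_2^{s-1}$, and likewise $\vec{x}=(x_1,\vec{x}')$. Since $(-1)^{\vec{x}\vec{y}^T}=(-1)^{x_1 y_1}(-1)^{\vec{x}'(\vec{y}')^{T}}$, the matrix $\mat{\Phi}$ on $\bbz_2^s$ is, up to the natural ordering of indices, the Kronecker product $\mat{\Phi}^{(1)}\otimes\mat{\Phi}^{(s-1)}$, where $\mat{\Phi}^{(1)}=\left(\begin{smallmatrix}1&1\\1&-1\end{smallmatrix}\right)$ and $\mat{\Phi}^{(s-1)}$ is the Walsh--Hadamard matrix on $\bbz_2^{s-1}$. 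Translating this to vectors: writing the argument $\vec{a}\in\fieldF[\bbz_2^s]$ as two ``halves'' $\vec{a}_0,\vec{a}_1\in\fieldF[\bbz_2^{s-1}]$ indexed by the value of the first coordinate, and setting $\vec{u}=\vec{a}_0\mat{\Phi}^{(s-1)}$ and $\vec{w}=\vec{a}_1\mat{\Phi}^{(s-1)}$, the two halves of $\vec{b}=\vec{a}\mat{\Phi}$ are $\vec{b}_0=\vec{u}+\vec{w}$ and $\vec{b}_1=\vec{u}-\vec{w}$. Hence the algorithm is: recursively compute the two transforms of size $2^{s-1}$, then combine them by $2^{s-1}$ additions and $2^{s-1}$ subtractions in $\fieldF$, i.e.\ $2^s$ field operations for the combine step.

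Then I would bound the cost. Let $T(s)$ denote both the running time and the number of field operations used on an input of length $2^s$. The recursion gives $T(s)\le 2\,T(s-1)+c\cdot 2^s$ for a constant $c$, with base case $T(0)=O(1)$ (the transform on $\bbz_2^0$ is the identity). Unrolling this recurrence yields $T(s)=O(2^s s)$, as claimed, and the scheme can be implemented in place over the array of $2^s$ coordinates so that no data movement beyond the $O(2^s s)$ budget is incurred. There is no real obstacle here: this is exactly the Cooley--Tukey FFT restricted to $\bbz_2^s$, and the only points a formal write-up must nail down are the index bookkeeping that correctly identifies the two halves at each level and the verification that the combine step uses precisely $2^s$ operations in $\fieldF$ — both entirely routine.
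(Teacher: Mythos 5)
Your argument is correct and is the canonical one: the Kronecker factorisation $\mat{\Phi}^{(s)}=\mat{\Phi}^{(1)}\otimes\mat{\Phi}^{(s-1)}$, the butterfly combine $\vec{b}_0=\vec{u}+\vec{w}$, $\vec{b}_1=\vec{u}-\vec{w}$, and the recurrence $T(s)\le 2T(s-1)+c\cdot 2^s$ giving $O(2^s s)$ are all right. The paper itself gives no proof of this statement (it is cited as folklore), so there is nothing to compare against; your write-up is exactly the standard derivation one would supply.
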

\begin{proof}[of Theorem~\ref{thm:setpart5}.b)]
Consider the following circuit $C$ over $\mathbb{Q}[\bbz_2^m]$:
\begin{align}
\label{eq:expspace1}
f[i,j]&=
\begin{cases}
 \displaystyle \st{1}{\vec{0}}& \text{if } i=j=0\\
 \displaystyle 0& \text{if } i=0\text{ and }j\neq0\\
 \displaystyle \sum_{h=0}^j f[i-1,h]g[j-h] & \text{otherwise, where} 
\end{cases}\\
\label{eq:expspace2}
	g[j] &= \sum_{i=1}^n \big[|S_i|=j\big] \st{1}{S_i}
\end{align}

For every $\vec{x}\in\bbz_2^m$ and non-negative integers $i$ and $j$, the coefficient $f[i,j]_{\vec{x}}$ counts the number of ways to choose an $i$-tuple of sets in $\mathcal{S}$ such that their sizes sum up to $j$ and their characteristic vectors sum to $\vec{x}$ in $\bbz_{2}^m$. Thus $\supp(f[t,m]) \leq 2^{\rank(\mat{A})}$. Furthermore, $f[t,m]_{\vec{1}}$ is the number of set partitions of size $t$ times $t!$. Indeed, if a $t$-tuple of sets from $\mathcal{S}$ contributes to $f[t,n]_{\vec{1}}$, each element of $U$ must occur in a unique set in the $t$-tuple. It remains to compute $(f[t,n])_{\vec{1}}$. For this we will use algorithm $\mathtt{hashZ2}$ with $s=\rank(A)$, except that we replace Line 4 with the following to compute $\vec w_{\vec t\mat H}$, where $\vec{w}$ is the output of $C_1$:

\begin{algorithm}[H]
\caption{Changes to Algorithm~\ref{alg:find1} to implement Theorem~\ref{thm:setpartexpspacerank}.}
\label{alg:findexpspace}
\begin{algorithmic}[1]
\setcounter{ALC@line}{3}
\FOR{every $0 \leq j \leq m$}
	\STATE Compute and store $h(\vec{g[j]})$ using \eqref{eq:expspace2}
	\STATE Compute and store $h(\vec{g[j]})\mat{\Phi}$ using Theorem \ref{thm:fastwht}
\ENDFOR
\RETURN $\displaystyle \frac{1}{2^s}\sum_{\vec{x} \in \bbz_2^s}(-1)^{(\vec{1}\mat{H})\vec{x}^T} (\vec{w}\mat{\Phi})_{\vec{x}}$, using \eqref{eq:expspace1} and the stored values to compute the vector $\vec{w}\mat{\Phi}$. 
\end{algorithmic}
\end{algorithm}
The correctness follows from Claim \ref{clm:one} and the observation that the inversion formula from Theorem \ref{lem:wht} is returned on Line 7. Indeed, $h$ is a homomorphism and $\Phi$ is a bijective homomorphism, so 
 Observation \ref{obs:homcir} enables us to compute 
$\vec{w}\mat{\Phi}$ using \eqref{eq:expspace2}.

To establish the time and space complexity, we observe that Steps 5 and 6 take $(2^s+n)m^{\mathcal{O}(1)}$ time by elementary analysis and Theorem~\ref{thm:fastwht}, and that Step 7 takes $2^sm^{\mathcal{O}(1)}$ time since we can compute $\vec{w}=h(\vec{f[t,m]})$ via~\eqref{eq:expspace1} in $\OH(m^3)$ operations in $\mathbb{Q}^{\bbz_2^s}$ by relying on the stored values $h(\vec g[j])\mat{\Phi}$, where each operation requires $\OH(2^s s)$ time by Theorem~\ref{thm:fastwht}.
\end{proof}

\subsection{Proofs of Section~\ref{sec:unionhashing}}

This section is dedicated to the proof of Theorem~\ref{thm:unionhashsmallsupp}. Instead of a combinatorial proof similar to the one of Section~\ref{sec:unionhashing}, we use the algebraic perspective in this proof since we feel it gives a more fundamental insight into the hashing function used. To obtain Theorem~\ref{thm:unionhashsmallsupp}, we prove the following generalization of Lemma~\ref{lem:compresssimple}.

\begin{lem}
\label{thm:unionhashing}
There is an algorithm that, given circuit $C$ over $\fieldF[(2^U,\cup)]$ with singleton inputs outputting $\vec{v}$, and a set family $\mathcal{F} \subseteq 2^U$ such that $\supp(\vec{v}) \subseteq \mathcal{F}$ computes a list with $v_X$ for every $X \in \supp(\vec{v})$ in $\mathcal{O}^*(|\mathcal{F}|^2)$ time.
\end{lem}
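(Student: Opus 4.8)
The plan is to take the algebraic route advertised above: over the subset lattice the zeta transform is \emph{also} the homomorphism that diagonalises the union product, so applying it to $C$ turns every union-convolution gate into a coordinatewise-product gate, and the wanted coefficients are then recovered by a single Möbius inversion; this generalises Lemma~\ref{lem:compresssimple}, where the closed formula for $(\vec f\zt)_X$ is just the CNF-specific instance of evaluating the transformed circuit. First I would check that the map $h\colon\fieldF[(2^U,\cup)]\to\fieldF^{2^U}$ given by $h(\vec a)=\vec a\zt$ is a ring homomorphism into the Hadamard-product ring $\fieldF^{2^U}$: linearity of $\zt$ gives $h(\vec a+\vec b)=h(\vec a)+h(\vec b)$, while the ``covering product'' identity
\[
\bigl((\vec a*\vec b)\zt\bigr)_Y=\sum_{X_1\cup X_2\subseteq Y}a_{X_1}b_{X_2}=\Bigl(\sum_{X_1\subseteq Y}a_{X_1}\Bigr)\Bigl(\sum_{X_2\subseteq Y}b_{X_2}\Bigr)=(\vec a\zt)_Y\,(\vec b\zt)_Y
\]
gives $h(\vec a*\vec b)=h(\vec a)\circ h(\vec b)$. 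Hence, by Observation~\ref{obs:homcir}, the circuit $h(C)$ over $\fieldF^{2^U}$ outputs $h(\vec v)=\vec v\zt$.

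Second, instead of evaluating $h(C)$ over the exponentially large domain $\fieldF^{2^U}$, I would evaluate it only on the $|\mathcal F|$ coordinates of interest. Since both addition and the Hadamard product in $\fieldF^{2^U}$ act coordinatewise, the value carried by any gate of $h(C)$ at a fixed coordinate $X\in\mathcal F$ depends only on the $X$-coordinates of its in-neighbours; moreover a singleton input label $\st{w}{Z}$ has $(\st{w}{Z}\zt)_X=w\,[Z\subseteq X]$, computable in polynomial time. Thus, for each $X\in\mathcal F$, one topological pass over $h(C)$ using only additions and multiplications in $\fieldF$ computes $(\vec v\zt)_X$, at total cost $\OHS(|\mathcal F|)$.

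Third, I would recover the coefficients by Möbius inversion on the poset $(\mathcal F,\subseteq)$. Because $\supp(\vec v)\subseteq\mathcal F$, every $Y\in\mathcal F$ satisfies $(\vec v\zt)_Y=\sum_{Z\subseteq Y}v_Z=\sum_{Z\in\mathcal F,\,Z\subseteq Y}v_Z$, i.e.\ the restriction of $\vec v\zt$ to $\mathcal F$ equals $\vec v|_{\mathcal F}$ multiplied by the zeta matrix of the poset $(\mathcal F,\subseteq)$; since $\mt$ and $\zt$ are inverse on any poset, $\mathtt{mobius}\bigl((\mathcal F,\subseteq),(\vec v\zt)|_{\mathcal F}\bigr)$ returns $v_X$ for all $X\in\mathcal F$. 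Constructing the comparability relation of $(\mathcal F,\subseteq)$ together with a compatible linear order, and then running $\mathtt{mobius}$, costs $\OHS(|\mathcal F|^2)$ in total, so outputting the pairs $(X,v_X)$ with $v_X\neq0$ completes the algorithm within $\OHS(|\mathcal F|^2)$ time.

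The only genuinely new point --- and the step I expect to be the main obstacle --- is the first one: identifying $\zt$ as precisely the homomorphism that trivialises $*$, so that ``applying $h$ to $C$'' yields a pointwise circuit that can be evaluated coordinate by coordinate without ever materialising vectors over $2^U$. The rest is bookkeeping: the coordinatewise evaluation, the localisation of the zeta/Möbius relations to $(\mathcal F,\subseteq)$ via $\supp(\vec v)\subseteq\mathcal F$ (mirroring Lemma~\ref{lem:compresssimple}), and the observation that although the inputs and the answers lie in $\fieldF=\bbn$, the intermediate values produced by $\mathtt{mobius}$ may be negative and should be computed over $\bbz$.
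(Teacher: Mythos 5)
Your proof is correct, and it arrives at the same algorithm as the paper via a genuinely different correctness argument. The paper's proof factors the evaluation through the Solomon algebra of the poset $(\mathcal{F},\subseteq)$: it defines $h(\vec a)=\bigsolplus_{X}\st{a_X}{\hat 0}\soltimes\bigsoltimes_{e\in X}\st{1}{e}$, proves this is a ring homomorphism (Lemma~\ref{lem:solhom}, which in turn needs Lemma~\ref{lem:assoc} and the idempotence Lemma~\ref{lem:idempot}), shows via the join condition~\eqref{eqn:reqprop} that $h$ does not merge the coefficients indexed by $\supp(\vec v)$ (Lemma~\ref{lem:homhashunion}), and only then composes with the poset zeta transform (Theorem~\ref{thm:zethom}) and Möbius-inverts. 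You instead factor through the Hadamard-product ring $\fieldF^{2^U}$ using the covering-product identity $((\vec a*\vec b)\zt)_Y=(\vec a\zt)_Y(\vec b\zt)_Y$, evaluate coordinatewise only at the $|\mathcal{F}|$ points of interest, and use $\supp(\vec v)\subseteq\mathcal{F}$ to localise the inversion to the poset $(\mathcal{F},\subseteq)$. Operationally the two algorithms coincide: in both, each singleton input $\st{w}{Z}$ is replaced by $w\,[Z\subseteq X]$, the scalar circuit is evaluated once per $X\in\mathcal{F}$, and $\mathtt{mobius}((\mathcal{F},\subseteq),\cdot)$ is run at cost $\OHS(|\mathcal{F}|^2)$. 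What your route buys is economy: no Möbius functions of general posets appear before the final inversion, the join condition is never needed, and the argument is the natural generalisation of the paper's own self-contained Lemma~\ref{lem:compresssimple} (whose closed-form computation of $(\vec f\zt)_Y$ is, as you say, just one way of evaluating the transformed circuit at $Y$). What the Solomon-algebra route buys is the structural view the paper advertises and reuses for Theorem~\ref{thm:unionhashsmallsupp}(b), where the poset is a direct product and the Möbius function is computed via Lemma~\ref{lem:dirprodmob}. Your closing remark that the inversion must be carried out over $\bbz$ even when $\fieldF=\bbn$ is a point the paper glosses over and is worth keeping.
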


\subsubsection{From Lemma~\ref{thm:unionhashing} to Theorem~\ref{thm:unionhashsmallsupp}.}\label{sec:fromlemtothmunion}

We now combine Lemma~\ref{thm:unionhashing} with the iterative compression technique to obtain Theorem~\ref{thm:unionhashsmallsupp} in a way very similar to Section~\ref{sec:unionhashing}. Instead of the subformula $\phi_i$ from Section~\ref{sec:unionhashing}, we need the following notion.

\begin{defi}\label{defi:restri}
Given a circuit $C$ over $\fieldF[(2^U,\cup)]$ and $X \subseteq U$, the \emph{restriction of $C$ to $X$} is the circuit obtained by applying the function $h$ to it, where, for $Y \subseteq X$
\[
	h(\vec{v})_Y = \sum_{W \subseteq U: W \cap X = Y}v_W.
\]
\end{defi}
\begin{proof}[of Theorem~\ref{thm:unionhashsmallsupp}]
Assume $U=\{e_1,e_2,\ldots,e_n\}$, and for every $i=0,\ldots,n-1$, let $C^{i}$ be the restriction of $C$ to $\{e_1,\ldots,e_i\}$ and $\vec{v^i}$ be the output of $C^i$. Then, since taking restrictions is homomorphic, we have by Observation~\ref{obs:homcir} that
\[
v^i_Y = \sum_{W \subseteq U: W \cap \{e_1,\ldots,e_i\}=Y}v_W \leq \sum_{W \subseteq U: W \cap \{e_1,\ldots,e_{i-1}\}=Y \setminus \{e_i\}}v_W=v^{i-1}_{Y \setminus \{e_i\}}.
\]
This implies that if $X \in \supp(\vec{v^i})$, then $X \setminus \{e_i\} \in \supp(\vec{v^{i-1}})$ and hence
\begin{equation}\label{eq:suppgrow}
 \supp(\vec{v^i}) \subseteq \supp(\vec{v^{i-1}}) \cup \{ X \cup \{e_i\} : X \in \supp(\vec{v^{i-1}})\}.
\end{equation}
We now describe how to implement the theorem. First note that $\supp(\vec{v^0})={\emptyset}$. For $i=1,\ldots,n$ do the following: set $\mathcal{F}^i=\supp(\vec{v^{i-1}}) \cup \{ X \cup \{e_i\} : X \in \supp(\vec{v^{i-1}}) \}$. Then by~\eqref{eq:suppgrow}, $\supp(\vec{v^i}) \subseteq \mathcal{F}^i$. Hence, by invoking the algorithm of Lemma~\ref{thm:unionhashing} using $C^i$ as the circuit and $\mathcal{F}^i$ as the set family, we can obtain $v^i_X$ for every $X \in \supp(\vec{v^i})$ in $\mathcal{O}^*(|\mathcal{F}^i|^2)$ time which is $\mathcal{O}^{*}((2 |\supp(v^{i-1})|)^2)$ time. From these values we can easily obtain the support of $\vec{v^i}$. After $n$ steps, we have computed $v_X=v^n_X$ for every $X \in \supp(\vec{v})$. The claimed running time follows from the fact that $|\supp(\vec{v^{i-1}})| \leq |\supp(\vec{v^{i}})|$ for every $i=1,\ldots,n$.
\end{proof}

The remainder of this section is devoted to the proof of Lemma~\ref{thm:unionhashing}. The idea is to use the set family $\mathcal{F}$ to create a poset and homomorphically hash $\fieldF[(2^U,\cup)]$ to the so-called Solomon algebra of the poset. We will first recall all necessary notions and properties, second introduce the hash function, and third give the algorithm implementing Lemma~\ref{thm:unionhashing}.

\subsubsection{Preliminaries on M\"obius inversion and the Solomon algebra}
\label{sec:mobsol}
Let $(P,\leq)$ be a poset. An element $c\in P$ is said to be an {\em upper bound} of a set $X \subseteq P$ if $x\leq c$ holds for all $x \in X$. The set $X$ is said to have a \emph{join} in $P$ if there exists an upper bound $c \in P$ (called \emph{the join}) of $X$ such that, for all upper bounds $d$ of $X$, it holds that $c \leq d$. For $X \subseteq P$, let us write $\bigvee X$ for the join of $X$; for the join of $X=\{a,b\}$ we write simply $a \vee b$. The {\em open interval} $(x,y)$ is the poset induced by the set $\{e \in P : x < e < y \} \subseteq P$. We write $[x,y)$, $(x,y]$ and $[x,y]$ for the analogous (half-)closed interval. 


Let $P$ be a poset. The \emph{M\"obius function} $\mu: P \times P \rightarrow \fieldF$ of $P$ is defined for all $x,y\in P$ by 
\begin{equation}
	\mu(x,y)=	\begin{cases}
	1 & \text{if } x = y, \\
	-\sum_{x \leq z < y} \mu(x,z) & \text{if } x < y, \\
	0 & \text{otherwise}.
	\end{cases}
\end{equation}
The \emph{zeta transform} $\zt$ and \emph{M\"obius transform} $\mt$ are the $|P| \times |P|$ matrices defined by $\zeta_{x,y}=[x \leq y]$ and $\mu_{x,y}=\mu(x,y)$ for all $x,y\in P$.


The following combinatorial interpretation is particularly useful:
\begin{thm}[Hall (see \cite{stanleyenumerative},Proposition 3.8.5)]\label{thm:hall}
For all $x,y\in P$,  $\mu(x,y)$ is the number of even chains in $(x,y)$ minus the number of odd chains in $(x,y)$.
\end{thm}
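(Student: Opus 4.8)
The plan is to prove Philip Hall's identity through the matrix equation $\mt=\zt^{-1}$ together with a finite Neumann-series expansion of the inverse whose $k$-th term counts chains of length $k$; this is the standard route and it fits the algebraic language already used in this section.

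First I would record that $\mt\zt=\mat{I}$. This follows directly from the recursive definition of $\mu$ (namely $\mu(x,x)=1$ and $\mu(x,y)=-\sum_{x\le z<y}\mu(x,z)$ for $x<y$): the $(x,y)$ entry of $\mt\zt$ is $\sum_{z}\mu(x,z)[z\le y]=\sum_{x\le z\le y}\mu(x,z)$, which is $1$ when $x=y$, equals $\mu(x,y)+\sum_{x\le z<y}\mu(x,z)=0$ when $x<y$, and is $0$ when $x\not\le y$ (every term vanishes). Hence $\zt$ is invertible over $\bbz$ with inverse $\mt$; this is precisely the identity $\mt\zt=\mat{I}$ already invoked in Section~\ref{sec:unionhashing}.

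Next I would write $\zt=\mat{I}+\mat{N}$ with $N_{x,y}=[x<y]$, and observe that ordering $P$ by a linear extension makes $\mat{N}$ strictly upper triangular, hence nilpotent: $\mat{N}^{r}=\mat{0}$ for some $r\le|P|$. A telescoping product then yields $\mt=(\mat{I}+\mat{N})^{-1}=\sum_{k=0}^{r-1}(-1)^{k}\mat{N}^{k}$. Finally I would unwind the matrix powers: by the definition of matrix multiplication,
\[
	(\mat{N}^{k})_{x,y}=\sum_{z_1,\dots,z_{k-1}\in P}[x<z_1]\,[z_1<z_2]\cdots[z_{k-1}<y]
\]
is exactly the number $c_k(x,y)$ of chains $x=z_0<z_1<\cdots<z_k=y$ of length $k$ (for $k=0$ this reads $(\mat{N}^{0})_{x,y}=I_{x,y}=[x=y]=c_0(x,y)$). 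Combining with the expansion, $\mu(x,y)=\sum_{k\ge0}(-1)^{k}c_k(x,y)$, i.e.\ the number of even-length chains from $x$ to $y$ minus the number of odd-length ones; recording only the interior elements $z_1<\cdots<z_{k-1}$ of such a chain, which form a chain in the open interval $(x,y)$, and declaring it even or odd according to the parity of $k$, gives the claimed statement.

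I do not expect any genuine obstacle; the only points that require a little care are (i) getting the direction of the identity $\mt\zt=\mat{I}$ right from the downward recursion, and (ii) the bookkeeping that aligns the algebraic powers $\mat{N}^{k}$ with the combinatorial chain counts, in particular the degenerate case $x=y$ (where $k=0$) versus $x<y$. If one prefers to bypass matrices, the same identity $\mu(x,y)=\sum_{k}(-1)^{k}c_k(x,y)$ follows by induction on $|[x,y]|$: it is immediate for $x=y$, and for $x<y$ one peels the top element off each chain to get $c_k(x,y)=\sum_{x\le z<y}c_{k-1}(x,z)$, so that $\sum_{k\ge1}(-1)^{k}c_k(x,y)=-\sum_{x\le z<y}\sum_{j\ge0}(-1)^{j}c_j(x,z)=-\sum_{x\le z<y}\mu(x,z)=\mu(x,y)$ by the induction hypothesis and the recursion.
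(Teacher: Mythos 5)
Your proposal is correct, and its main route is genuinely different from the paper's. The paper proves the identity by a three-line induction on the size of the open interval $(x,y)$: it groups the chains of $(x,y)$ by an extremal element $z$, observes that adjoining that element flips parity, and invokes the recursive definition of $\mu$. Your primary argument is instead algebraic: you establish $\mt\zt=\mat{I}$, write $\zt=\mat{I}+\mat{N}$ with $\mat{N}$ strictly upper triangular in a linear extension, expand $\mt=\sum_{k}(-1)^{k}\mat{N}^{k}$ by nilpotence, and identify $(\mat{N}^{k})_{x,y}$ with the number of length-$k$ chains from $x$ to $y$. This buys a cleaner global statement ($\mu(x,y)=\sum_k(-1)^k c_k(x,y)$) with no case analysis, at the cost of introducing the matrix machinery; your closing induction (peeling the top element off each chain) is essentially the paper's argument, so you have in effect given both proofs. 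One point where you are actually more careful than the paper: the translation from closed chains $x=z_0<\cdots<z_k=y$ to chains of interior elements in $(x,y)$ shifts the parity by one, and you explicitly fix the convention that a chain in $(x,y)$ is ``even'' according to the parity of $k$ (not of its number of elements). Under the more naive convention (parity of the number of elements, which is what the paper's own base case ``the empty chain is even'' suggests), the statement for $x<y$ comes out with the opposite sign --- e.g.\ for a two-element interval $\mu(x,y)=-1$ while the open interval contains only the empty chain. So your bookkeeping remark is not pedantry; it is exactly the point the paper's terse proof glosses over.
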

We include a proof for completeness
\begin{proof}
Use induction on the number of elements in the interval $(x,y)$. If $x=y$, $(x,y)$ contains only the empty chain, which is even. If $x \leq y$, group all chains on their smallest element $y$. The contribution of all these chains is exactly $-\mu(y,z)$ since odd chains are extended to even chains and vice-versa by adding $y$.
\end{proof}

It is known that $\mt$ and $\zt$ are mutual inverses. To see this, note that $(\mt \zeta)_{xy}$ can be interpreted as the number of even chains minus the number of odd chains in the interval $(x,y]$ which can be seen to be $0$ if $x \neq y$ by a pairing argument, and $1$ otherwise. This principle is called \emph{M{\"o}bius inversion}.


\begin{defi}[\cite{Solomon1967603}]
Let $(P,\leq)$ be a poset. The \emph{Solomon algebra} $\fieldF[P]$ is the set $\fieldF^P$ equipped with coordinate-wise addition $\solplus$ and the \emph{Solomon product} $\soltimes$ defined for all $\vec f,\vec g\in\fieldF[P]$ and $z\in P$ by
\[
	(\vec{f} \solplus \vec{g})_{z} = f_z+g_z \qquad (\vec{f} \soltimes \vec{g})_z = \sum_{x,y \in P} \left(\sum_{x,y \leq q \leq z} \mu(q,z)\right) f_x g_y
\]
\end{defi}

The following properties of the Solomon algebra will be useful:

\begin{lem}
\label{lem:assoc}
For every $\vec{v^{1}},\ldots\vec{v^{k}} \in \fieldF[P]$ and $s \in P$ we have
\[ \left( \bigsoltimes_{i=1}^n \vec{v^i} \right)_s = \sum_{a_1,\ldots,a_k \in P} \left( \sum_{a_1,\ldots,a_k \leq r \leq s} \mu(r,s) \right) \prod_{i=1}^k v^i_{a_i} \]
\end{lem}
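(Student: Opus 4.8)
The statement is a generalized associativity formula for the Solomon product: it says that iterating the binary product $\soltimes$ $k$ times collapses into a single sum where all $k$ arguments $a_1,\dots,a_k$ must lie below a common element $r$, and $r$ then contributes the M\"obius weight $\mu(r,s)$. The natural approach is induction on $k$. The base case $k=1$ is the trivial identity $(\vec{v^1})_s = \sum_{a_1 \leq r \leq s}\mu(r,s)v^1_{a_1}$, which follows from M\"obius inversion: $\sum_{a_1 \leq r \leq s}\mu(r,s) = \sum_{r}[\![a_1 \leq r]\!]\mu(r,s) = (\zt\mt)_{a_1 s} = [\![a_1 = s]\!]$ since $\zt$ and $\mt$ are mutual inverses (as recalled in the excerpt). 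The case $k=2$ is exactly the definition of $\soltimes$.

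For the inductive step, I would write $\bigsoltimes_{i=1}^{k+1}\vec{v^i} = \bigl(\bigsoltimes_{i=1}^{k}\vec{v^i}\bigr) \soltimes \vec{v^{k+1}}$ (using associativity of $\soltimes$, which I would either cite or note follows from the identification of $\fieldF[P]$ with a known algebra structure), apply the binary product definition with first argument $\vec{g} := \bigsoltimes_{i=1}^k \vec{v^i}$ and second argument $\vec{v^{k+1}}$, and then substitute the induction hypothesis for $g_x$. This gives a triple sum: over $x,y \in P$, over $x,y \leq q \leq s$ with weight $\mu(q,s)$, and inside $g_x$ a sum over $a_1,\dots,a_k \leq r \leq x$ with weight $\mu(r,x)$ and $\prod_{i=1}^k v^i_{a_i}$, together with the factor $v^{k+1}_y$. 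Renaming $y = a_{k+1}$, the goal is to show that after summing out the intermediate variables $x$ and $r$, the combined weight attached to a fixed tuple $(a_1,\dots,a_{k+1})$ and fixed top element $s$ equals $\sum_{a_1,\dots,a_{k+1}\leq r' \leq s}\mu(r',s)$.

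\textbf{The main obstacle} is precisely this weight-collapse computation: one must show
\[
\sum_{\substack{x \in P}}\ \sum_{\substack{a_1,\dots,a_k \leq r \leq x \\ x,a_{k+1} \leq q \leq s}} \mu(r,x)\,\mu(q,s)
\;=\; \sum_{a_1,\dots,a_{k+1} \leq r' \leq s}\mu(r',s).
\]
The clean way to see this is to push the inner sum over $x$: fixing $r$ (an upper bound of $a_1,\dots,a_k$) and $q$, the constraint on $x$ is $r \leq x \leq q$ (this already forces $r \leq q$), and $\sum_{r \leq x \leq q}\mu(r,x) = [\![r = q]\!]$ again by M\"obius inversion. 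Hence the whole left side collapses to $\sum_{q}\mu(q,s)$ ranging over $q$ with $q \geq r'$ for every $a_i$ — but after the collapse $r$ has been identified with $q$, so the surviving variable $q$ must be an upper bound of $a_1,\dots,a_k$ (inherited from being $\geq r$) and of $a_{k+1}$, i.e. $q$ is exactly a common upper bound of all $a_1,\dots,a_{k+1}$ with $q \leq s$. Renaming $q$ to $r'$ yields the claimed formula. The only subtlety worth double-checking is the handling of the upper-bound constraints and the range interchange (ensuring every tuple $(x, r, q)$ appearing corresponds to a valid term and vice versa), but this is routine once the $\sum_{r \leq x \leq q}\mu(r,x)=[\![r=q]\!]$ identity is invoked. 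I would also remark that, alternatively, one can give a direct (non-inductive) proof using Hall's theorem (Theorem~\ref{thm:hall}): interpret each $\mu(r,s)$ via signed chain counts and observe that the iterated product corresponds to a telescoping chain-counting argument — but the inductive route above is shorter and self-contained given what the excerpt has already established.
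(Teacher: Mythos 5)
Your proposal is correct and follows essentially the same route as the paper's proof: induction on $k$, peeling off the last factor via the binary definition of $\soltimes$, substituting the induction hypothesis, reordering the sums, and collapsing the intermediate variable with the M\"obius-inversion identity $\sum_{r \leq x \leq q}\mu(r,x)=[r=q]$ (the paper's $w$, $q$, $r$ are your $x$, $r$, $q$). The one point you flag as needing citation — associativity of $\soltimes$ to justify the bracketing — is not actually needed if the iterated product is read as the left-nested iteration, which is what the paper implicitly does; otherwise the argument is the same.
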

\begin{proof}
Use induction of $n$. For $n=1$, the statement clearly holds. Otherwise, we have by definition of $\soltimes$ that
\begin{align*}
\left( \bigsoltimes_{i=1}^k \vec{v^i} \right)_s&= \sum_{w,a_k \in P} \left( \sum_{w,a_k \leq r \leq s} \mu(r,s) \right) \left( \bigsoltimes_{i=1}^{k-1} \vec{v^i} \right)_w v^k_{a_k}\\
																							 &\{ \text{Induction Hypothesis} \}\\
																							 &= \sum_{w,a_k \in P} \left( \sum_{w,a_k \leq r \leq s} \mu(r,s) \right) \sum_{a_1,\ldots,a_{k-1} \in P} \left( \sum_{a_1,\ldots,a_{k-1} \leq q \leq w} \mu(q,w) \right) \prod_{i=1}^{k} v^i_{a_i}\\
																							 &\{ \text{Reordering summations} \}\\
																							 &= \sum_{a_1,\ldots,a_k,w \in P} \left( \sum_{\substack{w,a_k \leq r \leq s \\ a_1,\ldots,a_{k-1} \leq q \leq w}} \mu(r,s)\mu(q,w) \right) \prod_{i=1}^{k} v^i_{a_i}\\
																							 &\{ \text{Reordering summations} \}\\
																							 &= \sum_{a_1,\ldots,a_k \in P} \sum_{\substack{ a_1,\ldots,a_{k-1} \leq q \\ a_k \leq r }} \left( \sum_{q\leq w \leq r} \mu(q,w) \right) \mu(r,s) \prod_{i=1}^{k} v^i_{a_i}\\
																							 &\{ \text{M{\"o}bius inversion} \}\\
																							 &= \sum_{a_1,\ldots,a_k \in P} \sum_{\substack{a_1,\ldots,a_{k-1} \leq q \\ a_k \leq r }}[q=r] \mu(r,s) \prod_{i=1}^{k} v^i_{a_i}\\
																							 &= \sum_{a_1,\ldots,a_k \in P} \left( \sum_{a_1,\ldots,a_{k} \leq q} \mu(q,s) \right) \prod_{i=1}^{k} v^i_{a_i}																							 
\end{align*}
\end{proof}

\begin{thm}[\cite{Solomon1967603}]
\label{thm:zethom}
The zeta transform is a homomorphism from $\fieldF[P]$ to $\fieldF^P$; that is, for every $\vec{f},\vec{g}\in\fieldF[P]$ it holds that $(\vec{f} \solplus \vec{g})\zt=\vec{f}\zt+\vec{g}\zt$ and $(\vec{f} \soltimes \vec{g})\zt=\vec{f}\zt\circ\vec{g}\zt$.
\end{thm}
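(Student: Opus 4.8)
The plan is to check the two defining identities of a ring homomorphism separately, the additive one being essentially free and the multiplicative one carrying all the content. For the additive identity $(\vec{f}\solplus\vec{g})\zt=\vec{f}\zt+\vec{g}\zt$, I would simply note that $\vec{f}\mapsto\vec{f}\zt$ is the linear map given by right multiplication with the matrix $\zt$, and $\solplus$ is the coordinate-wise sum, so the claim is just distributivity of matrix multiplication over addition.

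For the multiplicative identity $(\vec{f}\soltimes\vec{g})\zt=\vec{f}\zt\circ\vec{g}\zt$, I would fix $z\in P$, expand the left-hand side using the definitions of the zeta transform and of the Solomon product, and pull the $f_x g_y$ factors outside:
\[
 \bigl((\vec{f}\soltimes\vec{g})\zt\bigr)_z=\sum_{w\leq z}(\vec{f}\soltimes\vec{g})_w=\sum_{x,y\in P}f_x g_y\sum_{w\leq z}\;\sum_{x,y\leq q\leq w}\mu(q,w).
\]
The next step is to reorganize the innermost double sum by summing over $q$ first, which turns it into $\sum_{q\,:\,x,y\leq q}\;\sum_{q\leq w\leq z}\mu(q,w)$; the inner sum over $w$ is exactly the situation of M\"obius inversion recalled just before the definition of the Solomon algebra, and by the defining recurrence for $\mu$ it equals $[q=z]$. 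Substituting this back collapses the constraint $x,y\leq q$ to the pair of constraints $x\leq z$ and $y\leq z$, so the whole expression becomes $\bigl(\sum_{x\leq z}f_x\bigr)\bigl(\sum_{y\leq z}g_y\bigr)=(\vec{f}\zt)_z\,(\vec{g}\zt)_z=(\vec{f}\zt\circ\vec{g}\zt)_z$, as desired.

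There is essentially no obstacle here; the single non-bookkeeping ingredient is the elementary identity $\sum_{q\leq w\leq z}\mu(q,w)=[q=z]$, already isolated in the discussion preceding the definition of the Solomon algebra. The only thing one must be careful about is tracking the nested order conditions $x,y\leq q\leq w\leq z$ correctly when interchanging the order of summation. As an alternative route, one can also obtain $\bigl((\vec{f}\soltimes\vec{g})\zt\bigr)_z=(\vec{f}\zt)_z(\vec{g}\zt)_z$ directly from the $k=2$ case of Lemma~\ref{lem:assoc} combined with the same M\"obius inversion step, since that lemma already performs the reordering-of-summations bookkeeping.
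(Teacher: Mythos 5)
Your proof is correct and follows essentially the same route as the paper's: both verify additivity directly from linearity of $\zt$, and both establish the multiplicative identity by expanding $((\vec{f}\soltimes\vec{g})\zt)$ at a point, interchanging the order of summation so that the M\"obius sum $\sum_{q\leq w\leq z}\mu(q,w)=[q=z]$ can be applied, and then collapsing the remaining constraints to $(\vec{f}\zt)_z(\vec{g}\zt)_z$. The only differences are notational (your $z$/$w$ are the paper's $w$/$z$) and your optional remark about deriving the $k=2$ case from Lemma~\ref{lem:assoc}, which is a valid but unnecessary alternative.
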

\begin{proof}
For every $w \in P$ we have
\begin{align*}
	((\vec{f}\solplus\vec{g})\zt)_{w} &= \sum_{z \leq w}(f_z + g_z)= \sum_{z \leq w}f_z + \sum_{z \leq w}g_z = (\vec{f}\zt)_w + (\vec{g}\zt)_w,\quad\text{and}\\
	((\vec{f}\soltimes\vec{g})\zt)_{w} &= \sum_{z \leq w} \sum_{x,y \in P} \left( \sum_{x,y \leq q \leq z} \mu(q,z) \right)f_x g_y \\
														 &= \sum_{x,y,q,z \in P} [x,y \leq q \leq z \leq w] \mu(q,z) f_x g_y \\
														 &= \sum_{x,y,q \in P} [x,y \leq q \leq w] \left(\sum_{q \leq z \leq w} \mu(q,z)\right) f_x g_y \\
														 &= \sum_{x,y,q \in P} [x,y \leq q \leq w] [q=w] f_x g_y \\
														 &= \sum_{x,y \in P} [x,y \leq w] f_x g_y = \left(\sum_{x \leq w}f_x\right) \left( \sum_{y \leq w}  g_y \right) = (\vec{f}\zt)_{w} (\vec{g}\zt)_{w}.
\end{align*}
\end{proof}

\begin{lem}[\cite{Solomon1967603}]
\label{lem:solring}
Let $P$ be a poset with a minimum element $\hat{0}$. Then, $\fieldF[P]$ is a commutative ring with the multiplicative identity $\st{1}{\hat{0}}$.
\end{lem}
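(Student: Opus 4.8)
The plan is to exhibit the zeta transform $\zt$ as an \emph{injective ring homomorphism} from $(\fieldF[P],\solplus,\soltimes)$ into the Hadamard ring $(\fieldF^P,+,\circ)$ of Section~\ref{sec:not}, and then pull the commutative-ring axioms back along it. The point is that $\fieldF^P$ is already known to be a commutative ring, and that $\zt$ intertwines $(\solplus,\soltimes)$ with $(+,\circ)$, so no property of $\soltimes$ has to be verified by hand.

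First I would record that $\zt$ is injective: by M\"obius inversion $\mt\zt=\mat{I}$, so $\mt$ is a left inverse of $\zt$. Next, Theorem~\ref{thm:zethom} gives $(\vec f\solplus\vec g)\zt=\vec f\zt+\vec g\zt$ and $(\vec f\soltimes\vec g)\zt=\vec f\zt\circ\vec g\zt$; since $\solplus$ is by definition coordinate-wise addition in $\fieldF$, it is literally the addition $+$ of $\fieldF^P$, so $\zt$ respects both operations, i.e.\ it is a ring homomorphism into $\fieldF^P$.

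Now every ring axiom for $\fieldF[P]$ follows by transport. For associativity, applying $\zt$ to $(\vec f\soltimes\vec g)\soltimes\vec h$ and to $\vec f\soltimes(\vec g\soltimes\vec h)$ yields $(\vec f\zt\circ\vec g\zt)\circ\vec h\zt$ and $\vec f\zt\circ(\vec g\zt\circ\vec h\zt)$, which agree because $\circ$ is associative in $\fieldF^P$; injectivity of $\zt$ then gives associativity of $\soltimes$. Commutativity of $\soltimes$ and distributivity of $\soltimes$ over $\solplus$ follow in exactly the same way from commutativity of $\circ$ and distributivity of $\circ$ over $+$; the additive structure of $\fieldF[P]$ needs no separate argument since $\solplus$ is just coordinate-wise addition. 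For the multiplicative identity, I would compute, for every $w\in P$,
\[
 \bigl(\st{1}{\hat 0}\zt\bigr)_w=\sum_{z\leq w}\st{1}{\hat 0}_z=\sum_{z\leq w}[z=\hat 0]=[\hat 0\leq w]=1,
\]
using that $\hat 0$ is the minimum of $P$; hence $\st{1}{\hat 0}\zt=\vec 1$, the all-ones vector, which is the multiplicative identity of $(\fieldF^P,\circ)$. Therefore $(\st{1}{\hat 0}\soltimes\vec f)\zt=\vec 1\circ\vec f\zt=\vec f\zt$ for every $\vec f\in\fieldF[P]$, and injectivity of $\zt$ gives $\st{1}{\hat 0}\soltimes\vec f=\vec f$.

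I do not expect any real obstacle: all the content sits in the two facts already available, namely Theorem~\ref{thm:zethom} and the M\"obius-inversion identity $\mt\zt=\mat{I}$, and the rest is the standard observation that ring axioms pull back along an injective ring homomorphism, plus the one-line evaluation of $\st{1}{\hat 0}\zt$. The only point worth stating explicitly is that $\solplus$ and the addition of $\fieldF^P$ are the same operation, so that nothing has to be checked on the additive side.
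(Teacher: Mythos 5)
Your proposal is correct and is essentially the paper's own argument: the paper likewise transports the commutative-ring axioms along $\zt$ (noting it is an isomorphism onto $\fieldF^P$ via Theorem~\ref{thm:zethom} and $\mt\zt=\zt\mt=\mat I$). The only cosmetic difference is that the paper verifies the identity $\st{1}{\hat 0}$ by directly expanding $(\st{1}{\hat 0}\soltimes\vec g)_z$ and applying M\"obius inversion, whereas you compute $\st{1}{\hat 0}\zt=\vec 1$ and use injectivity; the two computations are equivalent.
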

\begin{proof}
The singleton $\st{1}{\hat{0}}$ is the multiplicative identity because
\[
	(\st{1}{\hat{0}} \soltimes \vec{g})_{z}=\sum_{ y \leq z } \left( \sum_{y \leq q \leq z}\mu(q,z) \right)g_y=g_z,
\]
where the last equality follows from M\"obius inversion. To see that $\fieldF[P]$ is a commutative ring, note that $\zt$ is an isomorphism from $\fieldF[P]$ to the ring $\fieldF^P$. Indeed, Theorem \ref{thm:zethom} shows that $\zt$ is a homomorphism, and $\zt\mt=\mt\zt=\mat I$ shows that $\zt$ is bijective.
\end{proof}

\begin{lem}
\label{lem:idempot}
For singletons $\st{c}{x} , \st{d}{x} \in \fieldF[P]$ it holds that $\st{c}{x}\soltimes \st{d}{x}=\st{cd}{x}$.
\end{lem}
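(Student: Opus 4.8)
The plan is to evaluate the Solomon product $\st{c}{x} \soltimes \st{c'}{x}$ directly from the definition, coordinate by coordinate, and then recognise the resulting inner sum as an instance of M\"obius inversion.

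First I would fix an arbitrary index $z \in P$ and expand, using the definition of $\soltimes$, the coordinate $(\st{c}{x} \soltimes \st{c'}{x})_z = \sum_{u,v \in P}\bigl(\sum_{u,v \le q \le z}\mu(q,z)\bigr)\,(\st{c}{x})_u\,(\st{c'}{x})_v$. Since $\st{c}{x}$ and $\st{c'}{x}$ are singletons supported on the single index $x$, the product $(\st{c}{x})_u(\st{c'}{x})_v$ vanishes unless $u = v = x$, in which case it equals $cc'$. Hence the double sum collapses to the single term $cc' \cdot \sum_{x \le q \le z}\mu(q,z)$.

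Next I would invoke M\"obius inversion. Since $\mt$ and $\zt$ are mutual inverses (in particular $\zt\mt = \mat I$, as recalled in the proof of Lemma~\ref{lem:solring}), and since $\mu(q,z) = 0$ whenever $q \not\le z$, we get $\sum_{x \le q \le z}\mu(q,z) = \sum_{q \in P}[x \le q]\,\mu(q,z) = (\zt\mt)_{x,z} = [x = z]$. Therefore $(\st{c}{x}\soltimes\st{c'}{x})_z = cc'\,[x = z]$ for every $z \in P$, which is precisely the definition of the singleton $\st{cc'}{x}$, proving the lemma.

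There is no genuine obstacle here; the only point requiring a little care is the bookkeeping of the nested summation and using the correct direction of M\"obius inversion, namely the identity $\sum_{q}[x \le q]\mu(q,z) = [x = z]$ coming from $\zt\mt = \mat I$, rather than the companion identity obtained from $\mt\zt = \mat I$.
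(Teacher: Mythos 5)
Your proof is correct and matches the paper's own argument essentially verbatim: both expand the Solomon product from its definition, use the singleton supports to collapse the double sum to the term $cd\sum_{x\le q\le z}\mu(q,z)$, and then apply M\"obius inversion (the identity $\zt\mt=\mat I$) to conclude that this sum equals $[x=z]$. Your explicit remark about which of the two inversion identities is being used is a minor clarification the paper leaves implicit, but the route is the same.
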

\begin{proof}
For all $z\in P$, we have that $(\st{c}{x} \soltimes \st{d}{x})_{z}$ equals
\[
  \sum_{w,y \in P} \left( \sum_{w,y \leq q \leq z} \mu(q,z) \right) [w=x]c[y=x]d = \sum_{x \leq q \leq z} \mu(q,z)cd=[x=z]cd.
\]
\end{proof}

\subsubsection{The hash function for Lemma~\ref{thm:unionhashing}} Let $U$ be an $n$-element set and let $(P,\leq)$ be a poset that satisfies $U \subseteq P$ and has a minimum element $\hat{0}$. Define the function $h: \fieldF[(2^U,\cup)]\rightarrow \fieldF[P]$ by setting, for all $\vec{a}\in\fieldF[(2^U,\cup)]$, 
\begin{equation}
\label{eq:hdef}
 	h(\vec{a}) = \bigsolplus_{X \subseteq U} \st{a_X}{\hat{0}} \soltimes \bigsoltimes_{e \in X} \st{1}{e}.
\end{equation}

The following two lemmas combined show that the function $h$ is in fact an homomorphic hash function:

\begin{lem}\label{lem:homhashunion}
If $(P,\leq)$ is a poset with minimum element, $U \subseteq P$ and $\vec{v} \in \fieldF[(2^U,\cup)]$ such that every $X \in \supp(\vec{v})$ has a join in $P$ and for every $X,Y \in \supp(\vec{v})$,
\begin{equation}\label{eqn:reqprop}
	\vee \{\{e\} : e \in X\} \neq \vee \{\{e\} : e \in Y\},
\end{equation}
then for every $X \in \supp(v)$, $h(\vec{v})_{\vee \{\{e\} : e \in X\}}=v_X$.
\end{lem}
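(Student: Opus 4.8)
The plan is a direct coordinatewise computation that unfolds the definition~\eqref{eq:hdef} of $h$. Throughout, for $X\subseteq U$ write $\hat X := \bigvee\{\{e\}:e\in X\}$ for the join in $P$ of the elements indexing the singletons $\st{1}{e}$, $e\in X$; this join exists for $X\in\supp(\vec v)$ by assumption, and $\hat\emptyset=\hat 0$ since every element of $P$ is an upper bound of the empty set and $\hat 0$ is the least such. Since $P$ has a minimum element, $\fieldF[P]$ is a commutative ring with multiplicative identity $\st{1}{\hat 0}$ by Lemma~\ref{lem:solring}; in particular $\st{0}{\hat 0}$ is its zero element and annihilates every Solomon product, so in~\eqref{eq:hdef} only the summands with $X\in\supp(\vec v)$ contribute. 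Moreover $\st{v_X}{\hat 0}\soltimes\vec g=v_X\vec g$ for every $\vec g\in\fieldF[P]$ — either by a one‑line unfolding of the definition of $\soltimes$ (using $\hat 0\le q$ for all $q$ and M\"obius inversion), or because $\st{v_X}{\hat 0}=v_X\st{1}{\hat 0}$ and $\st{1}{\hat 0}$ is the identity — so it remains to evaluate the coordinates of $\bigsoltimes_{e\in X}\st{1}{e}$ for each $X\in\supp(\vec v)$.

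Here I would invoke Lemma~\ref{lem:assoc} with $\vec{v^i}=\st{1}{e_i}$ where $X=\{e_1,\dots,e_k\}$ (the case $X=\emptyset$ being simply the identity $\st{1}{\hat 0}$ by the empty‑product convention). Since each $\st{1}{e_i}$ has a single nonzero coordinate, every term in the resulting sum over $(a_1,\dots,a_k)$ vanishes except the one in which each $a_i$ equals the index of $\st{1}{e_i}$, leaving $\bigl(\bigsoltimes_{e\in X}\st{1}{e}\bigr)_s=\sum_r\mu(r,s)$, the sum ranging over all $r\in P$ with $r\le s$ that are upper bounds of $\{\{e\}:e\in X\}$. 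This is the only point where the hypotheses enter: since that set has the join $\hat X$ in $P$, being an upper bound of all of it is equivalent to $\hat X\le r$, so the sum is $\sum_{\hat X\le r\le s}\mu(r,s)$, which equals $[\hat X=s]$ by M\"obius inversion (the identity $\zt\mt=\mat I$, equivalently $\sum_{q\le r\le s}\mu(r,s)=[q=s]$). Assembling the pieces gives $h(\vec v)_s=\sum_{X\in\supp(\vec v)}v_X\,[\hat X=s]$ for every $s\in P$.

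Finally, fixing $X_0\in\supp(\vec v)$ and putting $s=\hat X_0$, hypothesis~\eqref{eqn:reqprop} guarantees that $X_0$ is the unique $X\in\supp(\vec v)$ with $\hat X=\hat X_0$, so exactly one indicator in the sum survives and $h(\vec v)_{\hat X_0}=v_{X_0}$, which is the claim. The whole argument is essentially bookkeeping with the ring structure of $\fieldF[P]$ and Lemma~\ref{lem:assoc}; the only step that genuinely needs care — and hence the main (if mild) obstacle — is applying M\"obius inversion in the correct orientation, namely summing $\mu(r,s)$ over its \emph{first} argument $r$ ranging over the interval $[\hat X,s]$, together with correctly using the join hypothesis to compress ``$r$ is an upper bound of $\{\{e\}:e\in X\}$'' into the single inequality $\hat X\le r$.
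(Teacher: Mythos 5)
Your proposal is correct and follows essentially the same route as the paper's proof: expand $h(\vec v)$ coordinatewise, restrict the sum to the support, apply Lemma~\ref{lem:assoc} to the Solomon product of singletons, use the join hypothesis to compress the upper-bound condition into $\hat X\le r$, and finish with M\"obius inversion and hypothesis~\eqref{eqn:reqprop}. The only cosmetic difference is that you factor out $\st{v_X}{\hat 0}$ as a scalar multiple of the identity and compute $h(\vec v)_s$ for general $s$ before specializing, whereas the paper fixes $y=\bigvee\{\{e\}:e\in X\}$ at the outset and absorbs that factor into the Lemma~\ref{lem:assoc} expansion via the coordinate $a_0=\hat 0$.
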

\begin{proof}
Denote $y = \vee \{\{e\} : e \in X\}$. Then,
\begin{align*}
	(h(\vec{v}))_{y} &= \biggl(\bigsolplus_{X \subseteq U} \st{v_X}{\hat{0}} \soltimes \bigsoltimes_{e \in X} \st{1}{e}\biggr)_{y} \\
												 &\{ \text{for }X \notin \supp(\vec{v}) \text{ we have } v_X=0 \}\\
												 &= \biggl(\bigsolplus_{X \in \supp(\vec{v})} \st{v_X}{\hat{0}} \soltimes \bigsoltimes_{e \in X} \st{1}{e}\biggr)_{y}\\
												 &\{ \text{Definition of $\solplus$} \}\\
												 &= \sum_{X \in \supp(\vec{v})} \bigl(\st{v_X}{\hat{0}} \soltimes \bigsoltimes_{e \in X} \st{1}{e}\bigr)_{y}\\
												 &\{ \text{Applying Lemma \ref{lem:assoc}, denoting } X=\{e_1,\ldots,e_{|X|}\} \}\\
												 &= \sum_{X \in \supp(\vec{v})} \sum_{a_0,\ldots,a_{|X|} \in P} \left( \sum_{a_0,\ldots,a_{|X|} \leq q \leq y} \mu(q,y) \right)  v_X[a_0=\hat{0}] \prod_{i=1}^{|X|} [a_i=e_i]\\
												 &\{ X \text{ has a join by assumption since it is in the support} \} \\
												 &= \sum_{X \in \supp(\vec{v})} \left( \sum_{e_1 \vee \ldots \vee e_{|X|} \leq q \leq y} \mu(q,y) \right) v_X\\
												 &\{ \text{M\"obius inversion} \} \\
												 &= \sum_{X \in \supp(\vec{v})} [e_1\vee\ldots\vee e_{|X|} = y]  v_X\\
												 &\{ \text{By the assumption stated in~\eqref{eqn:reqprop}} \} \\
												 &= v_X
\end{align*}
\end{proof}

\begin{lem}
\label{lem:solhom}
The mapping $h$ is a homomorphism from $\fieldF[(2^U,\cup)]$ to $\fieldF[P]$.
\end{lem}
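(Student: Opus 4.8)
I want to verify that $h$ preserves both ring operations, i.e.\ that $h(\vec{a}+\vec{b})=h(\vec{a})\solplus h(\vec{b})$ and $h(\vec{a}*\vec{b})=h(\vec{a})\soltimes h(\vec{b})$ for all $\vec{a},\vec{b}\in\fieldF[(2^U,\cup)]$, where $*$ denotes the union (convolution) product. Additivity will be essentially free: in the defining formula~\eqref{eq:hdef} the coefficient $a_X$ enters only through the singleton $\st{a_X}{\hat 0}$, which is linear in $a_X$, and $\soltimes$ distributes over $\solplus$; here I use Lemma~\ref{lem:solring} (and hence the standing hypothesis that $P$ has a minimum $\hat{0}$), which tells us $\fieldF[P]$ is a commutative ring, so that moreover the product $\bigsoltimes_{e\in X}\st{1}{e}$ is well defined and order-independent. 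Thus $\vec{a}\mapsto h(\vec{a})$ is linear, and the only real content is multiplicativity. (If a unital homomorphism is wanted, the $X=\emptyset$ term also gives $h(\st{1}{\emptyset})=\st{1}{\hat 0}$, the identity of $\fieldF[P]$.)

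The plan is to route the verification through the zeta transform. By Theorem~\ref{thm:zethom} together with $\zt\mt=\mt\zt=\mat{I}$, the map $\zt$ is a ring isomorphism from the Solomon algebra $\fieldF[P]$ onto the Hadamard-product ring $\fieldF^P$; hence it suffices to show that $g:=\zt\circ h$ is a homomorphism $\fieldF[(2^U,\cup)]\to\fieldF^P$, since then $h=\mt\circ g$ is a composition of homomorphisms. To get a workable description of $g$, I would apply linearity of $\zt$ and the homomorphism property of Theorem~\ref{thm:zethom} termwise in~\eqref{eq:hdef}; using $(\st{c}{x}\zt)_w=c\,[x\leq w]$ and the fact that $\hat{0}\leq w$ for every $w$, this gives
\[
 g(\vec{a})_w=(h(\vec{a})\zt)_w=\sum_{X\subseteq U}a_X\prod_{e\in X}[e\leq w]=\sum_{X\subseteq U_w}a_X,\qquad U_w:=\{u\in U:u\leq w\}.
\]
With this closed form the homomorphism property of $g$ is a one-line check: additivity is immediate, and for the product one expands the convolution and notes that the pairs $(X,Y)$ with $X\cup Y\subseteq U_w$ are exactly those with $X\subseteq U_w$ and $Y\subseteq U_w$, so
\[
 g(\vec{a}*\vec{b})_w=\sum_{Z\subseteq U_w}\ \sum_{X\cup Y=Z}a_X b_Y=\sum_{X\subseteq U_w}\sum_{Y\subseteq U_w}a_X b_Y=\Bigl(\sum_{X\subseteq U_w}a_X\Bigr)\Bigl(\sum_{Y\subseteq U_w}b_Y\Bigr)=\bigl(g(\vec{a})\circ g(\vec{b})\bigr)_w.
\]
Hence $g$ is a homomorphism, and therefore so is $h=\mt\circ g$.

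There is no genuinely hard step here; the points that need care are invoking Lemma~\ref{lem:solring} to know that $\fieldF[P]$ is a ring --- which is exactly where the hypothesis that $P$ has a minimum element is used --- and keeping track of the empty-product convention $\bigsoltimes_{e\in\emptyset}\st{1}{e}=\st{1}{\hat 0}$ for the $X=\emptyset$ summand. An alternative that avoids the zeta transform is to compute directly in $\fieldF[P]$: distribute $h(\vec{a})\soltimes h(\vec{b})$ over the two outer sums, use commutativity and associativity of $\soltimes$ to collect the scalar singletons via $\st{a_X}{\hat 0}\soltimes\st{b_Y}{\hat 0}=\st{a_X b_Y}{\hat 0}$ and the generators via $\st{1}{e}\soltimes\st{1}{e}=\st{1}{e}$ (both instances of Lemma~\ref{lem:idempot}), so that the two inner products over $X$ and $Y$ collapse to $\bigsoltimes_{e\in X\cup Y}\st{1}{e}$, and finally regroup the pairs $(X,Y)$ by $Z=X\cup Y$ and match against $(\vec{a}*\vec{b})_Z=\sum_{X\cup Y=Z}a_X b_Y$. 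This direct route is correct but heavier on bookkeeping, so I expect the zeta-transform argument to give the cleaner write-up.
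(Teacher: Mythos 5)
Your proof is correct, but it takes a genuinely different route from the paper's. The paper verifies multiplicativity by a direct computation inside $\fieldF[P]$: it expands $(\vec{a}*\vec{b})_X=\sum_{V\cup W=X}a_Vb_W$ under the outer $\bigsolplus$, uses Lemma~\ref{lem:idempot} (both for $\st{a_V}{\hat 0}\soltimes\st{b_W}{\hat 0}=\st{a_Vb_W}{\hat 0}$ and for the idempotence $\st{1}{e}\soltimes\st{1}{e}=\st{1}{e}$, which lets the product over $V\cup W$ be split into products over $V$ and over $W$), and then regroups by distributivity and commutativity from Lemma~\ref{lem:solring} --- i.e.\ exactly the ``alternative'' you sketch in your last paragraph. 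Your primary argument instead conjugates by $\zt$: since $\zt$ is a ring isomorphism onto the Hadamard ring $\fieldF^P$ (Theorem~\ref{thm:zethom} plus $\zt\mt=\mt\zt=\mat{I}$, which the paper itself uses in Lemma~\ref{lem:solring}), it suffices to check that $g=\zt\circ h$ is a homomorphism, and the closed form $g(\vec{a})_w=\sum_{X\subseteq U_w}a_X$ makes that a one-line computation because $X\cup Y\subseteq U_w$ iff $X\subseteq U_w$ and $Y\subseteq U_w$. Your route is shorter and isolates where the algebra is actually easy (the pointwise ring), at the cost of leaning on the invertibility of $\zt$; the paper's direct computation is heavier on bookkeeping but exhibits concretely how the hash interacts with the Solomon product without passing through the transform. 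Your handling of the side conditions (the minimum element $\hat 0$, the empty-product convention for $X=\emptyset$, and $(\st{c}{x}\zt)_w=c\,[x\leq w]$) is accurate, so there is no gap.
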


\begin{proof}
For every $\vec{v},\vec{w} \in \bbz[(2^U,\cup)]$ we have
\begin{align*}
	h(\vec{v} + \vec{w}) &= \bigsolplus_{X \subseteq U} \st{v_X+w_X}{\hat 0} \soltimes \bigsoltimes_{e \in X} \st{1}{e}\\
											 &= \bigsolplus_{X \subseteq U} (\st{v_X}{\hat 0} \solplus \st{w_X}{\hat 0}) \soltimes \bigsoltimes_{e \in X} \st{1}{e})\\
											 &= \bigsolplus_{X \subseteq U} \biggl(\st{v_X}{\hat 0} \soltimes \bigsoltimes_{e \in X} \st{1}{e} \solplus \st{w_X}{\hat 0} \soltimes \bigsoltimes_{e \in X} \st{1}{e}\biggr)\\
											 &= \left(\bigsolplus_{X \subseteq U} \st{v_X}{\hat 0} \soltimes \bigsoltimes_{e \in X} \st{1}{e}\right) \solplus \left(  \bigsolplus_{X \subseteq U} \st{w_X}{\hat 0} \soltimes \bigsoltimes_{e \in X}\st{1}{e}\right)\\
											 &= h(\vec{v})\solplus h(\vec{w}).\\
\end{align*}	
\begin{align*}
	h(\vec{v} * \vec{w}) &= \bigsolplus_{X \subseteq U} \st{(\vec{v}*\vec{w})_X}{\hat 0} \soltimes \bigsoltimes_{e \in X} \st{1}{e}\\
 											 &\{\text{Definition of multiplication in } \fieldF[(2^U,\cup)] \}\\
											 &= \bigsolplus_{X \subseteq U} \biggl\langle\sum_{V \cup W = X}v_Vw_W,\hat 0\biggr\rangle\soltimes \bigsoltimes_{e \in X} \st{1}{e}\\
											 &\{ \text{By definition of $\solplus$}  \}\\
											 &= \bigsolplus_{X \subseteq U} \bigsolplus_{V \cup W = X}\st{v_Vw_W}{\hat 0}\soltimes \bigsoltimes_{e \in X} \st{1}{e}\\
											 &\{ \text{By Lemma \ref{lem:idempot} and commutativity of $\soltimes$ from Lemma \ref{lem:solring}}  \} \\
											 &= \bigsolplus_{X \subseteq U} \bigsolplus_{V \cup W = X}\st{v_Vw_W}{\hat 0}\soltimes \bigsoltimes_{e \in V} \st{1}{e} \soltimes \bigsoltimes_{e \in W} \st{1}{e}\\
											 &\{\text{Using that }V \text{ and } W \text{ determine }X \}\\
											 &= \bigsolplus_{V,W \subseteq U} \st{v_Vw_W}{\hat 0}\soltimes \bigsoltimes_{e \in V} \st{1}{e} \soltimes \bigsoltimes_{e \in W} \st{1}{e}\\
											 &\{ \text{ By Lemma \ref{lem:idempot} }  \}\\
											 &= \bigsolplus_{V,W \subseteq U} (\st{v_V}{\hat 0}\soltimes\st{w_W}{\hat 0})\soltimes \bigsoltimes_{e \in V} \st{1}{e} \soltimes \bigsoltimes_{e \in W} \st{1}{e}\\
											 &\{ \text{ By distributivity from Lemma \ref{lem:solring}}  \}\\
											 &= \left(\bigsolplus_{X \subseteq U} \st{v_X}{\hat 0} \soltimes \bigsoltimes_{e \in X} \st{1}{e}\right) \soltimes \left(\bigsolplus_{X \subseteq U} \st{v_X}{\hat 0}\soltimes \bigsoltimes_{e \in X} \st{1}{e}\right)\\
											 &= h(\vec{v})\soltimes h(\vec{w}).
\end{align*}
\end{proof}

\subsubsection{The algorithm for Lemma~\ref{thm:unionhashing}}
\label{sec:towunionhash}

We are now ready to give the proof of Lemma~\ref{thm:unionhashing}. As mentioned before, the proof idea is to construct a poset $P$ from the given set family and hash the given circuit to a circuit $C'$ over the Solomon algebra $\fieldF[P]$ with the homomorphic hash function $h$. Then by Theorem~\ref{thm:zethom} and Lemma~\ref{lem:solhom}, the zeta-transform of the output of $C'$ can be computed fast using point-wise multiplication, and then using M\"obius inversion the original output can be computed. Because of the hash property from Lemma~\ref{lem:homhashunion} of $h$, the required output of $C$ can then be read from the output of $C'$.

\begin{proof}[of Lemma~\ref{thm:unionhashing}]
We start by giving the algorithm. We assume without loss of generality that $\emptyset \in \mathcal{F}$.

\begin{algorithm}[H]
\begin{algorithmic}[1]
\REQUIRE $\mathtt{find}(C,\mathcal{F})$
	\STATE Construct the poset $P=(\mathcal{F}, \subseteq)$.
	\STATE Construct the circuit $C_1$ over $\fieldF[P]$ obtained from $C$ by applying $h$.
	\FOR{every $x \in P$}
		\STATE $w_x \leftarrow \mathtt{sub}(C_1,x)$.
	\ENDFOR
	\RETURN $\mathtt{mobius}(P,\vec{w})$
\vspace{0.5em}		
\REQUIRE $\mathtt{sub}(C_1,x)$
		\STATE Construct the circuit $C_2$ over $\fieldF$ obtained from $C_1$ by applying $\zeta_x: \vec{e} \mapsto \sum_{y \leq x} e_y$.
		\STATE Evaluate $C_2$ and return the output.
\vspace{0.5em}		
\REQUIRE $\mathtt{mobius}((P,\leq),\vec{w})$
		\STATE Let $P=\{v_1,v_2\ldots,v_{|P|}\}$ such that $v_i \leq v_i$ implies $i \geq j$.
		\STATE $\vec{z} \leftarrow \vec{w}$.
		\FOR{$i= 1,2,\ldots,|P|$}
			\FOR{every $v_j \leq v_i$}
				\STATE $z_i = z_i - z_j$
			\ENDFOR
		\ENDFOR
		\RETURN $\vec{z}$.
\end{algorithmic}
\end{algorithm}

The algorithm is Algorithm $\mathtt{find}$. Let us first analyse the complexity. Recall that to establish the claimed bounds we assume that each operation in $\fieldF$ takes one time unit. Step 1 can easily be implemented in $\OHT(|\mathcal{F}|)$ time. Step 2 expands every singleton (of $\fieldF[(2^U,\cup)]$) in $C$ according to \eqref{eq:hdef} into a product of at most $n+1$ singletons of $\fieldF[P]$, and thus can be implemented in time and storage $\OHT(n|C|)$. Step 6 can be implemented in time and space $\OHT(|C_1|)$ because the singleton $\vec{e}=\st{v}{y}$ is mapped to $[y \leq x]v$. In Step 7, the evaluation of $C_2$ over $\fieldF$ uses $|C_2|$ operations in $\fieldF$. The complexity of algorithm $\mathtt{mobius}$ is easily seen to be $\mathcal{O}^*(|\mathcal{F}|^2)$, and since the loop at Step 3 is the only other time-consuming part the resource bounds are clearly met.

We proceed with the proof of correctness of Algorithm $\mathtt{find}$. First note that since $\supp(\vec{v}) \subseteq \mathcal{F}$, we have by construction that for every $X=\{e_1,\ldots,e_{|X|}\} \in \supp(\vec{v})$, $\vee \{\{e_1\},\ldots,\{e_{|X|}\}\} = X$, since every other upper bound in $P$ on $\{\{e_1\},\ldots,\{e_{|X|}\}\}$ must be above the element $X$ of $P$. This directly implies that the requirement imposed by~\eqref{eqn:reqprop} is met by $P$. Hence by Lemma~\ref{lem:homhashunion} we have that $h(\vec{v})_{\vee \{\{e\} : e \in X\}}=v_X$ for every $X \in \supp(\vec{v})$. Thus to prove correctness it suffices to show that the vector $\vec{w} \zt$ is equal to $h(\vec{v})$.

Since $h$ is a homomorphism by Lemma~\ref{lem:solhom}, Observation~\ref{obs:homcir} implies that $C^1$ outputs $h(\vec{v})$. Then, we claim that for every $x \in P$ it holds that $w_x=\mathtt{sub}(C_1,x)=(h(\vec{v}) \zt)_x$. To see this note that on Line~7 first the zeta transform is applied to $C_1$ which is a homomorphism to $\fieldF^{P}$ by Theorem~\ref{thm:zethom}, and then we restrict to the coordinate $x$. The claim than follows by Observation~\ref{obs:homcir} since both transformations are homomorphisms.

Finally, since $w_x = (h(\vec{v}) \zt)_x$, we know that $(\vec{w} \mt)=h(\vec{v})$. The lemma then follows since it is easy to see that $\mathtt{mobius}(P,\vec{w})=\vec{w} \mt$.
\end{proof}


\subsubsection{Proof of Theorem~\ref{thm:unionhashsmallsupp}(b)}
The proof of Item (b) is fairly similar to the proof of Item (a). The main difference is the use of the poset $P$. In this section, we will use a poset larger than the support of $\vec{v}$, but the advantages are that 
\begin{enumerate}
	\item we not need to construct the poset using Iterative compression, and hence we do not need to determine all components of $\vec{v}$,
	\item the poset will be decomposable, implying that the bottleneck in (a), the M{\"o}bius inversion step, can be improved.
\end{enumerate}

Let us now start with the more formal treatment. Given two posets $(P,\leq_P)$ and $(Q,\leq_Q)$, the \emph{direct product} $P \times Q$ is the poset on the set $\{(p,q): p \in P \wedge q \in Q\}$ such that $(p_1,q_1) \leq (p_2,q_2)$ if $p_1 \leq_P p_2$ and $q_1 \leq_Q q_2$.

\begin{lem}[see \cite{stanleyenumerative}, Proposition 3.8.2]\label{lem:dirprodmob}
Let $P$ and $Q$ be posets and let $\mu_P$ and $\mu_Q$ denote their respective M{\"o}bius functions. Denote $\mu_{P \times Q}$ and $\leq$ for the M{\"o}bius function and order of $P \times Q$, then for $(p_1,q_1) \leq (p_2,q_2)$ it holds that
\[
	\mu_{P \times Q}((p_1,q_1),(p_2,q_2)) = \mu_P(p_1,p_2) \mu_Q(q_1,q_2).
\]
\end{lem}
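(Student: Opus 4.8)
The plan is to prove Lemma~\ref{lem:dirprodmob} by verifying that the proposed formula satisfies the defining recurrence of the M\"obius function on $P \times Q$. Since the M\"obius function is uniquely determined by the recurrence~\eqref{eq:mob} (for each fixed first argument, it is defined by downward induction on the interval), it suffices to check two things: first, that $\mu_P(p_1,p_1)\mu_Q(q_1,q_1) = 1 = \mu_{P\times Q}((p_1,q_1),(p_1,q_1))$, which is immediate; and second, that for $(p_1,q_1) < (p_2,q_2)$ the candidate function $\nu((p_1,q_1),(p_2,q_2)) := \mu_P(p_1,p_2)\mu_Q(q_1,q_2)$ satisfies
\[
	\sum_{(p_1,q_1) \leq (p,q) \leq (p_2,q_2)} \nu((p_1,q_1),(p,q)) = 0.
\]
By uniqueness, establishing these identities forces $\nu = \mu_{P\times Q}$ on the whole interval.

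First I would rewrite the sum over the interval $[(p_1,q_1),(p_2,q_2)]$ in $P\times Q$ as a double sum over $p \in [p_1,p_2]$ in $P$ and $q \in [q_1,q_2]$ in $Q$, which is valid precisely because the order on the product is componentwise. Substituting the candidate formula and factoring, the left-hand side becomes
\[
	\Bigl(\sum_{p_1 \leq p \leq p_2} \mu_P(p_1,p)\Bigr)\Bigl(\sum_{q_1 \leq q \leq q_2} \mu_Q(q_1,q)\Bigr).
\]
Now I invoke M\"obius inversion (equivalently, the identity $\mt\zt = \mat{I}$ established just before the statement, or directly the defining recurrence of $\mu_P$): each of the two factors equals $[p_1 = p_2]$ and $[q_1 = q_2]$ respectively. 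Since we are in the case $(p_1,q_1) < (p_2,q_2)$, at least one of $p_1 \neq p_2$ or $q_1 \neq q_2$ holds, so at least one factor vanishes and the product is $0$, as required. For the diagonal case $(p_1,q_1) = (p_2,q_2)$ both factors equal $1$, recovering $\nu = 1$ consistently.

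The argument has essentially no obstacle: the only point requiring a moment's care is the interchange of the single sum over the product interval with the double sum over the factor intervals, but this is exactly the definition of the order relation on $P\times Q$ and involves no subtlety. One should also note that all intervals here are finite (the posets in our application are finite), so the sums are well-defined and M\"obius inversion applies verbatim; I would state this finiteness assumption explicitly if it is not already in force. Thus the proof reduces to the two-line factorization above followed by a citation of M\"obius inversion.
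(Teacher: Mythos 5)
Your proof is correct. The paper itself gives no proof of this lemma---it is quoted from Stanley (Proposition 3.8.2)---and your argument is the standard one found there: verify that the candidate $\nu((p_1,q_1),(p_2,q_2))=\mu_P(p_1,p_2)\mu_Q(q_1,q_2)$ satisfies the defining recurrence, factor the sum over the product interval as $\bigl(\sum_{p_1\leq p\leq p_2}\mu_P(p_1,p)\bigr)\bigl(\sum_{q_1\leq q\leq q_2}\mu_Q(q_1,q)\bigr)=[p_1=p_2][q_1=q_2]$, and conclude by uniqueness of the M\"obius function; all steps, including the identification of the product interval with the product of intervals and the finiteness caveat, are sound.
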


\begin{lem}[\cite{yates1937design},``Yates' algorithm'']\label{lem:yatalg}
Given a vector $\vec{v} \in \fieldF^{2^U}$, there is an algorithm that computes the vector $\vec{v}\mt$ in $\OH(2^{|U|}|U|)$ time.
\end{lem}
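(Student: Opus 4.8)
The plan is to compute the M\"obius transform $\vec{v}\mt$ over the subset lattice $2^U$ coordinate-by-coordinate in $|U|$ rounds, eliminating one element of $U$ at a time, exactly mirroring the structure of the fast Walsh--Hadamard transform of Theorem~\ref{thm:fastwht} but for the zeta/M\"obius pair rather than $\mat{\Phi}$. Write $U=\{e_1,\dots,e_n\}$ with $n=|U|$. By Lemma~\ref{lem:dirprodmob}, the subset lattice factors as a direct product of $n$ two-element chains $\{0<1\}$, and the M\"obius function of $2^U$ is the product of the M\"obius functions of these chains; on a two-element chain $\mu(0,0)=\mu(1,1)=1$ and $\mu(0,1)=-1$. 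Hence the M\"obius transform matrix $\mt$ over $2^U$ is the $n$-fold tensor power of the $2\times 2$ matrix $\left(\begin{smallmatrix}1 & -1\\ 0 & 1\end{smallmatrix}\right)$, and applying this tensor power can be done by applying the $2\times2$ operation along each of the $n$ ``coordinates'' in turn.

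Concretely, I would maintain a working array $\vec{z}\in\fieldF^{2^U}$ initialized to $\vec{v}$, and for $i=1,\dots,n$ update, for every $X\subseteq U$ with $e_i\in X$, $z_X \leftarrow z_X - z_{X\setminus\{e_i\}}$ (all other entries unchanged). The key claim is that after round $i$ the array satisfies $z_X=\sum_{Y\subseteq X,\ Y\supseteq X\setminus\{e_1,\dots,e_i\}}\mu(Y,X)\,v_Y$, i.e.\ the M\"obius inversion has been performed ``in the first $i$ coordinates''; this follows by a straightforward induction on $i$, using that the updates for coordinate $i$ commute with the partial transform already applied in coordinates $<i$ (because tensor factors act on disjoint coordinates). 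After round $n$ we obtain $z_X=\sum_{Y\subseteq X}\mu(Y,X)v_Y=(\vec{v}\mt)_X$ for all $X$, as required. For the running time: each of the $n$ rounds touches each of the $2^n$ entries a constant number of times (one subtraction and a couple of index computations), so the total is $\OH(2^n n)=\OH(2^{|U|}|U|)$ time, and the algorithm works in place using $\OH(2^{|U|})$ storage.

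I do not expect a genuine obstacle here; the statement is folklore and the only things to get right are (i) the bookkeeping that the per-coordinate updates indeed compose to the full tensor-power transform — handled cleanly by the direct-product formula of Lemma~\ref{lem:dirprodmob} and an induction on the number of processed coordinates — and (ii) making sure the update order within a round (processing $X$ before or after $X\setminus\{e_i\}$) does not corrupt values; this is resolved by noting that $z_X$ for $e_i\in X$ is modified using only $z_{X\setminus\{e_i\}}$ with $e_i\notin(X\setminus\{e_i\})$, so if one simply iterates over all $X$ and applies the update only when $e_i\in X$, each needed operand is still its pre-round value. Thus the mild care needed is purely in the in-place indexing, not in any real mathematical difficulty.
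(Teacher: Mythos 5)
Your proposal is correct, but note that the paper does not actually prove this lemma: it is invoked as a black box, cited to Yates (1937), and only the neighbouring Lemma~\ref{lem:yatmob} (a variant restricted to a subposet $P\subseteq 2^U$) is proved. The argument you give is the standard one and is sound: the verification that $\mt$ over $2^U$ factors as an $n$-fold tensor power of the $2\times 2$ chain transform (via Lemma~\ref{lem:dirprodmob}), the inductive invariant $z_X=\sum_{X\setminus\{e_1,\dots,e_i\}\subseteq Y\subseteq X}\mu(Y,X)v_Y$ after round $i$, and the observation that in-place updates are safe because $z_{X\setminus\{e_i\}}$ is never modified during round $i$ all check out, as does the $\OH(2^{|U|}|U|)$ bound. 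So you have supplied a complete proof where the paper offers only a citation; nothing further is needed.
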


We will also need the following small modification of Yates' algorithm:

\begin{lem}\label{lem:yatmob}
There is an algorithm that, given a poset $P \subseteq 2^{U}$ ordered by set inclusion and containing $\emptyset$ and $U$, computes a table with $\mu(0,S)$ and $\mu(S,U)$ for every  $S \in P$ in $\OH(|P|\cdot |U|)$ time and space
\end{lem}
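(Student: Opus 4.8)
The plan is to run two ``truncated'' versions of Yates' algorithm (Lemma~\ref{lem:yatmob}) restricted to the poset $P$, one for the values $\mu(\emptyset,S)$ and one for the values $\mu(S,U)$. The key observation is that the M\"obius function of the full Boolean lattice $2^U$ is $\mu(A,B)=(-1)^{|B\setminus A|}$ for $A\subseteq B$ and $0$ otherwise, so if we only ever evaluate $\mu$ with one argument pinned to $\emptyset$ or to $U$, the recursion~\eqref{eq:mob} stays inside a thin slice of the lattice and we need not touch all $2^{|U|}$ subsets. Concretely, for the quantities $\mu(\emptyset,S)$ we use the defining recurrence: $\mu(\emptyset,\emptyset)=1$ and for $S\neq\emptyset$, $\mu(\emptyset,S)=-\sum_{T\subsetneq S}\mu(\emptyset,T)$ where the sum ranges over $T\in 2^U$, but since $\mu(\emptyset,T)=(-1)^{|T|}$ this is just a value depending on $|S|$; the point, however, is that we want to compute it \emph{only for} $S\in P$, and we can do so by processing the elements of $P$ in order of increasing cardinality (or any linear extension of $\subseteq$) and at each $S$ summing $\mu(\emptyset,T)$ over those $T\in P$ with $T\subsetneq S$ — which is \emph{not} correct as stated because $P$ need not be an interval-closed sublattice. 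So the honest route is:

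First I would fix a linear extension $v_1,\dots,v_{|P|}$ of $(P,\subseteq)$ with $v_1=\emptyset$. To compute $\mu(\emptyset,v_i)$ I want $\sum_{T\subseteq v_i}\mu(\emptyset,T)=[\,v_i=\emptyset\,]$ (M\"obius inversion over $2^U$), i.e.\ $\mu(\emptyset,v_i)=[\,v_i=\emptyset\,]-\sum_{T\subsetneq v_i}\mu(\emptyset,T)$, where the sum is over \emph{all} $T\in 2^U$ with $T\subsetneq v_i$. Naively this sum has $2^{|v_i|}$ terms. The fix is a Yates-style sweep \emph{over the coordinates of $v_i$ only}: identify the ground set of the subproblem below $v_i$ with $v_i$ itself, and note $\sum_{T\subseteq v_i}\mu(\emptyset,T)$ can be computed by the standard $|v_i|$-stage Yates pass restricted to $2^{v_i}$, in $\OH(2^{|v_i|}|v_i|)$ time. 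That is still exponential. The resolution — and what makes the claimed $\OH(|P|\cdot|U|)$ bound true — is that we do not recompute from scratch for each $v_i$: we maintain, incrementally as $i$ grows, a single array indexed by $P$ that holds the partial zeta-transform, and use that $\mu(\emptyset,v_i)$ is determined by $\mu(\emptyset,v_j)$ for the \emph{immediate predecessors} $v_j\lessdot v_i$ in $P$ via the chain-counting interpretation of Theorem~\ref{thm:hall} combined with Lemma~\ref{lem:dirprodmob}: because $v_i$ and $\emptyset$ sit inside $2^U$, the interval $(\emptyset,v_i)$ in $P$ need not equal $(\emptyset,v_i)$ in $2^U$, but the \emph{value} $\mu_{2^U}(\emptyset,v_i)=(-1)^{|v_i|}$ is all we need, and it is computed in $\OH(|U|)$ per element directly from $|v_i|$. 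Symmetrically $\mu_{2^U}(S,U)=(-1)^{|U\setminus S|}=(-1)^{|U|-|S|}$, again $\OH(|U|)$ per element.

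Thus the actual algorithm is almost trivial: for each $S\in P$, read off $|S|$ in $\OH(|U|)$ time, set $\mu(\emptyset,S)=(-1)^{|S|}$ and $\mu(S,U)=(-1)^{|U|-|S|}$, and store these in the table; total time and space $\OH(|P|\cdot|U|)$. I would present the correctness as: the M\"obius function of the Boolean lattice is multiplicative over the coordinate decomposition $2^U\cong\prod_{e\in U}2^{\{e\}}$ by Lemma~\ref{lem:dirprodmob}, and $\mu_{2^{\{e\}}}(0,1)=-1$, $\mu_{2^{\{e\}}}(0,0)=\mu_{2^{\{e\}}}(1,1)=1$, whence $\mu(A,B)=\prod_{e\in B\setminus A}(-1)=(-1)^{|B\setminus A|}$; specializing $A=\emptyset$ or $B=U$ gives the two formulas, and the values $\mu(\emptyset,S),\mu(S,U)$ in $P$ agree with those in $2^U$ because the M\"obius function of an interval depends only on the ambient order restricted to that interval, and $[\emptyset,S]$ and $[S,U]$ are the same subposets whether computed in $P$ or in $2^U$ (here one does need $P$ to be $\subseteq$-closed enough, which is why the hypothesis is phrased as ``$P\subseteq 2^U$ ordered by set inclusion and containing $\emptyset$ and $U$'' and the relevant intervals $[\emptyset,S]_P$, $[S,U]_P$ — wait, these are \emph{not} generally full Boolean intervals).

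The main obstacle, then, is precisely this last point: $\mu_P(\emptyset,S)$ need \emph{not} equal $(-1)^{|S|}$ when $P$ is a proper sub-poset, because the interval $[\emptyset,S]$ in $P$ can be missing many subsets of $S$. So the clean $(-1)^{|\cdot|}$ shortcut is wrong in general, and the real work — which I expect to be the crux — is to compute $\mu_P(\emptyset,S)$ and $\mu_P(S,U)$ honestly by the recurrence~\eqref{eq:mob} \emph{within $P$}, but organized so the total cost is $\OH(|P|\cdot|U|)$ rather than $\OH(|P|^2)$. The way to achieve this is a restricted Yates sweep: process a linear extension of $P$; to get $\mu_P(\emptyset,S)$ we need $-\sum_{\emptyset\le T<S,\,T\in P}\mu_P(\emptyset,T)$, and the sum over $\{T\in P: T\subsetneq S\}$ is carried out not by enumerating that set (size up to $|P|$) but by a coordinate-by-coordinate zeta/M\"obius pass over the sub-hypercube $2^S$ intersected with $P$, exploiting that for each of the $|S|\le |U|$ coordinates $e\in S$ we split $P\cap 2^S$ into the part containing $e$ and the part not containing $e$; maintaining one working array of length $|P|$ and updating it in $|U|$ stages gives the bound. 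I would therefore structure the proof as: (1) recall $\mu\zeta=\mathbf I$ restricted to any $P$; (2) describe the $|U|$-stage coordinate sweep and argue each stage is $\OH(|P|)$ time by a pointer-linked representation of $P$ under single-element extensions; (3) conclude $\OH(|P|\cdot|U|)$ overall; (4) run it once ``from $\emptyset$ upward'' for the $\mu(\emptyset,S)$ column and once ``from $U$ downward'' (dually, via $\mu(S,U)$) for the other. The hardest step is step (2) — making the coordinate sweep work when $P$ is an arbitrary $\subseteq$-closed family rather than a full hypercube, and verifying that the per-stage update touches each element of $P$ only a constant number of times.
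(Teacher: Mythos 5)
You eventually arrive at the correct reading of the statement---the $\mu$ in question is the M\"obius function of the subposet $P$ itself, so the Boolean-lattice formula $(-1)^{|S|}$ is wrong---and your final plan coincides in outline with the paper's: compute $\mu_P(\emptyset,S)$ by the defining recurrence $\mu_P(\emptyset,S)=-\sum_{T\in P,\,T\subsetneq S}\mu_P(\emptyset,T)$, accelerate the inner sums by a Yates-style coordinate-by-coordinate sweep, and obtain $\mu_P(S,U)$ by running the same procedure on the order-reversed poset. The paper's proof does exactly this: it introduces $g(S)=-\sum_{Y\in P,\,Y\subset S}g(Y)$ (so $g=\mu_P(\emptyset,\cdot)$) together with a coordinate recurrence $f_i(S)=f_{i-1}(S)+[i\in S]\,f_{i-1}(S\setminus\{i\})$ whose base case is built from $g$ on $P$, and disposes of $\mu(S,U)$ by symmetry.

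The genuine gap is the one you flag yourself: step (2), the restricted coordinate sweep, is never constructed, and it is not a routine detail---it is the entire content of the lemma. The obstacle you correctly sense but do not resolve is that the update at coordinate $i$ reads $f_{i-1}(S\setminus\{i\})$, and $S\setminus\{i\}$ need not lie in $P$ (note also that $P$ is an arbitrary subfamily containing $\emptyset$ and $U$, not a downward-closed one); so an array indexed by $P$ alone does not obviously support the sweep, while an array indexed by $2^U$ costs $\OH(2^{|U|}|U|)$ and destroys the claimed bound. Without specifying which entries outside $P$ are materialized and why only $\OH(|P|)$ of them are touched per coordinate---or some alternative bookkeeping---the only bound your recurrence actually delivers is the trivial $\OH(|P|^2)$ obtained by evaluating each inner sum directly, which is exactly what the lemma is meant to beat (and what would make it redundant next to the $\mathtt{mobius}$ routine already used elsewhere in the paper). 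To be fair, the paper's own write-up of this step is also extremely terse and asserts rather than verifies the $\OH(|P|\cdot|U|)$ accounting; but it at least commits to concrete recurrences, whereas your proposal explicitly defers the step that carries the complexity claim, together with a long detour through the retracted $(-1)^{|S|}$ shortcut. What remains to be written is precisely the hard part.
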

\begin{proof}
In this proof $[i]$ denotes the set $\{1,\ldots,i\}$ for an integer $i$. Let $U=[n]$. We first define $g: [n] \rightarrow  \mathbb{N}$ by letting
\[
	g(S) =\begin{cases}
	1	&\text{if } S = \emptyset,\\
	\displaystyle \sum_{\substack{ Y \in P \\ Y \subset S}}- g(Y) & \text{ otherwise}.
	\end{cases}
\]
It is easy to see that $g(S)=\mu(\emptyset,S)$.  Also, define for every $i=1,\ldots,n$:
\[
f_i(S) = \begin{cases}
1 & \text{if } i=0 \text{ and } S=\emptyset\\
-[S \in P]g(S) &\text{if } i=0 \text{ and } S\neq\emptyset\\
f_{i-1}(S) + [i \in S]f_{i-1}(S \setminus i) &\text{otherwise}.
\end{cases}
\]
Then it holds that $g(S)=f_n(S)$ for every $S \subseteq U$. Note that by symmetry (that is, reversing the order of the poset), this procedure can be used as well to compute $\mu(S,U)$ for every  $S \in P$ in the same time bounds.
\end{proof}

\begin{lem}\label{lem:trd}
There is an algorithm that, given an integer $s$ and a circuit $C$ with singleton inputs in $\fieldF[(2^U,\cup)]$ outputting $\vec{v}$, computes $v_{U}$ in time 
\[
	\OHS(\max\{2^s,|\supp(\phi_s)|2^{n-s}n^{\OH(1)}\}).
\]
\end{lem}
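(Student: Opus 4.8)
The plan is to split $U=\{e_1,\dots,e_n\}$ into $A=\{e_1,\dots,e_s\}$ and $B=\{e_{s+1},\dots,e_n\}$, hash $C$ down to the Solomon algebra of a \emph{direct product} poset $P=P_A\times(2^B,\subseteq)$, where $P_A$ is a small poset on subsets of $A$ and $(2^B,\subseteq)$ is the full Boolean lattice, and then read $v_U$ off the top coordinate of the hashed output. First I would form the restriction $C^s$ of $C$ to $A$ (Definition~\ref{defi:restri}); it is again a circuit with singleton inputs, now over $\fieldF[(2^A,\cup)]$, and since restriction is a homomorphism (Observation~\ref{obs:homcir}) and $\fieldF=\mathbb N$, its output $\vec{v^s}=:\phi_s$ satisfies $\supp(\vec{v^s})=\{X\cap A:X\in\supp(\vec v)\}$. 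Over $2^A$ the union product becomes the coordinatewise product under the zeta transform (Yates' algorithm, Lemma~\ref{lem:yatalg}), so $C^s$ can be evaluated directly in $\OHS(2^s)$ time, which yields $\supp(\vec{v^s})$; I then set $P_A=(\supp(\vec{v^s})\cup\{\emptyset,A\}\cup\{\{e\}:e\in A\},\subseteq)$ and $P=P_A\times(2^B,\subseteq)$, identifying $(Y,Z)\in P$ with $Y\cup Z\subseteq U$. Thus $P$ has minimum $(\emptyset,\emptyset)$, maximum $\hat 1=(A,B)$, contains all singletons of $U$, and $|P|\le(|\supp(\vec{v^s})|+s+2)\,2^{n-s}\le|\supp(\phi_s)|\,2^{n-s}n^{\OH(1)}$.

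Next I would verify that $P$ meets the hypotheses of Lemma~\ref{lem:homhashunion}: for $X\in\supp(\vec v)$ we have $X\cap A\in\supp(\vec{v^s})\subseteq P_A$, so the join in $P$ of $\{\{e\}:e\in X\}$ exists and equals $(X\cap A,\,X\cap B)\leftrightarrow X$; these joins are pairwise distinct and the join of $U$ is $\hat 1$. Hence with the homomorphic hash $h:\fieldF[(2^U,\cup)]\to\fieldF[P]$ of \eqref{eq:hdef} — a homomorphism by Lemma~\ref{lem:solhom} — Lemma~\ref{lem:homhashunion} gives $v_U=h(\vec v)_{\hat 1}$ (when $U\notin\supp(\vec v)$ both sides are $0$). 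Applying $h$ to $C$ produces a circuit $C_1$ over $\fieldF[P]$ outputting $h(\vec v)$ (Observation~\ref{obs:homcir}); expanding each singleton of $C$ into a product of at most $n+1$ singletons of $\fieldF[P]$ keeps $|C_1|=\OH(n|C|)$.

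Then I would apply the zeta transform $\zt$ of $P$ to $C_1$; as $\zt$ is a homomorphism $\fieldF[P]\to\fieldF^P$ (Theorem~\ref{thm:zethom}), the resulting circuit $C_2$ over $\fieldF^P$ outputs $\vec w:=h(\vec v)\zt$, and a gate-by-gate evaluation of $C_2$ costs $\OHS(|C_1|\cdot|P|)=\OHS(|P|)$ since each gate operates on vectors of length $|P|$ and each input singleton maps under $\zt$ to a $0/1$-scaled vector. Finally, since $\zt\mt=\mat I$ we have $h(\vec v)=\vec w\mt$, so
\[
 v_U=h(\vec v)_{\hat 1}=\sum_{(y_A,y_B)\in P} w_{(y_A,y_B)}\,\mu_{P_A}(y_A,A)\,(-1)^{|B\setminus y_B|},
\]
using Lemma~\ref{lem:dirprodmob} together with the M\"obius function of the Boolean lattice. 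Note that I never perform a full M\"obius inversion on the general poset $P_A$; I only need its single column $\mu_{P_A}(\cdot,A)$, which Lemma~\ref{lem:yatmob} supplies in $\OH(|P_A|\cdot s)$ time, after which the displayed sum costs $\OH(|P|)$. Summing the three stages gives the claimed running time $\OHS(\max\{2^s,\ |\supp(\phi_s)|\,2^{n-s}n^{\OH(1)}\})$.

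I expect the crux to be twofold. First, one must notice that computing the single coordinate $h(\vec v)_{\hat 1}$ does not require the expensive M\"obius-inversion step that dominates Lemma~\ref{thm:unionhashing}: because $\hat 1$ is the maximum of the product poset, its M\"obius column factors (Lemma~\ref{lem:dirprodmob}) into the cheaply computable $\mu_{P_A}(\cdot,A)$ (Lemma~\ref{lem:yatmob}) and trivial Boolean-lattice values, so $v_U$ reduces to one weighted sum over $P$. Second, one must obtain $\supp(\phi_s)$ within budget $\OHS(2^s)$ by plain evaluation of the restricted circuit $C^s$ using the zeta/Yates speed-up for union products over $2^A$, rather than via iterative compression (which would cost $|\supp(\phi_s)|^2$ and is too slow when $s$ is near $n$).
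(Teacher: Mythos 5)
Your proposal matches the paper's proof of Lemma~\ref{lem:trd} essentially step for step: restrict $C$ to the first $s$ elements, recover $\supp(\vec{v^s})$ in $\OHS(2^s)$ time via the zeta transform and Yates' algorithm, hash into the Solomon algebra of the direct product poset $P'\times(2^{U\setminus U_s},\subseteq)$, evaluate the pointwise zeta transforms, and invert only at the top element using the product structure of the M\"obius function together with Lemma~\ref{lem:yatmob}. If anything you are slightly more careful than the paper in explicitly augmenting $P_A$ with $\emptyset$, $A$, and the singletons so that the hash map of \eqref{eq:hdef} and the join condition of Lemma~\ref{lem:homhashunion} are literally satisfied, but this does not change the argument or the bounds.
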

\begin{proof}
Assume $U=\{e_1,e_2,\ldots,e_n\}$, and let $C^{s}$ be the restriction of $C$ to $U_s=\{e_1,\ldots,e_s\}$ and $\vec{v^s}$ be the output of $C^s$. 

\begin{clm}\label{clm:incexc}
There is an algorithm computing $\supp(\vec{v^s})$ in $\OHS(2^sn^{\OH(1)})$ time.
\end{clm}
\begin{proof}
By Theorem~\ref{thm:zethom}, $\zt$ is a homomorphism from $\fieldF[P]$ to $\fieldF^P$. Hence, similarly as in the proof of Theorem~\ref{thm:unionhashsmallsupp}(a), we can compute $\vec{v^s}\zt$ in time polynomial in the input size. Then the claim follows directly by applying Lemma~\ref{lem:yatalg}.
\end{proof}

Now the algorithm is as follows

\begin{algorithm}[H]
\begin{algorithmic}[1]
\REQUIRE $\mathtt{find2}(C)$
	\STATE Obtain $\supp(\vec{v^s})$ using Claim~\ref{clm:incexc}
	\STATE Let $P' = (\supp(\vec{v^s}),\subseteq)$ and $Q=(2^{U \setminus U_s},\subseteq)$ and construct the poset $P = P'\times Q$.
	\STATE Construct the circuit $C_1$ over $\fieldF[P]$ obtained from $C$ by applying the homomorphism $h$ as defined in~\eqref{eq:hdef}.
	\FOR{every $x \in P$}
		\STATE $w_x \leftarrow \mathtt{sub}(C_1,x)$.
	\ENDFOR
	\STATE Compute $\mu(0,x)$ for every $x \in P'$ using Lemma~\ref{lem:yatmob}.
	\RETURN $\sum_{x=(X_1,X_2) \in P}w_x\mu(X_1,U_s)(-1)^{|(U \setminus U_2) \setminus X_2|}$
\end{algorithmic}
\end{algorithm}

The arguments for the correctness of this algorithm are all similar to the arguments of for the correctness of Algorithm $\mathtt{find2}$: first note that $\supp(v) \subseteq \{X_1 \cup X_2: X_1 \in P' \wedge X_2 \in Q\}$. This in turn implies the condition of Lemma~\ref{lem:homhashunion}, and hence if $\vec{v'}$ is the output of $C^1$ we indeed have that $v'_U=v_U$. As shown before in the proof of Lemma~\ref{thm:unionhashing}, $\mathtt{sub}(C_1,x)=(\vec{v'}\zt)_x$. Hence it remains the show that on Line~7 the expression is indeed the M{\"o}bius inversion formula. This follows from the definition of $\mt$, the direct product property of the M{\"o}bius function from Lemma~\ref{lem:dirprodmob} and the fact that in the subset lattice $\mu(X,Y)=(-1)^{|Y \setminus X}|$ for $X \subseteq Y$.

For the running time of the above algorithm, note that Line 1 takes $\OHS(2^sn^{\OH(1)})$ time. For Line 2, we have $|P'| \leq |\supp(\vec{v})|$ and $|Q| = 2^{n-s}$, and hence $|P| \leq |\supp(\vec{v})|2^{n-s}$. Thus, Lines 2-5 take $\OHS(|\supp(\vec{v})|2^{n-s}n^{\OH(1)})$ time. Line 6 takes $\OHS(2^sn^{\OH(1)})$ time, and using these values, Line 7 can be performed in time at most $\OHS(|\supp(\vec{v})|2^{n-s}n^{\OH(1)})$ .
\end{proof}
Given Lemma~\ref{lem:yatmob}, the only thing left to prove Theorem~\ref{thm:unionhashsmallsupp}(b) is to solve the technical issue that $\alpha$ is not given:

\begin{proof}[of Theorem~\ref{thm:unionhashsmallsupp}(b)] Simultaneously try all integers $0 < s \leq n/2$, and for every $s$ run the algorithm of Lemma~\ref{lem:trd}. Terminate whenever any of these algorithms terminates. This procedure runs in time 
\[
	\OHS\big(\min_{0 < s \leq n/2} \max\{2^s,|\supp(\phi_s)|2^{n-s}n^{\OH(1)}\}\big) \leq \OHS\big(2^{(1-\alpha/2)n}\big),
\]
where the inequality is achieved by taking $s=\left\lfloor (1-\alpha/2)n \right\rfloor$.
\end{proof}

\end{document}